\documentclass[sigconf,balance=false]{acmart}

\usepackage{popets}
\setcopyright{popets}
\copyrightyear{YYYY}

\acmYear{YYYY}
\acmVolume{YYYY}
\acmNumber{X}
\acmDOI{XXXXXXX.XXXXXXX}
\acmISBN{}
\acmConference{Proceedings on Privacy Enhancing Technologies}
\usepackage{amsmath,amsfonts,amsthm}

\usepackage{graphicx}

\usepackage{array}
\usepackage{makecell}

\usepackage{changepage} % This will allow us to customize the margins for subcases 

\theoremstyle{plain}
\newtheorem{lem}{Lemma}
\newtheorem{thm}{Theorem}
\newtheorem{cor}{Corollary}
\newtheorem{prop}{Proposition}
\newtheorem{obs}{Observation}

\usepackage[linesnumbered,ruled]{algorithm2e}

\theoremstyle{definition} %% this is to make the examples in non-italic text
\newtheorem{defn}{Definition}
\newtheorem{examplex}{Example}
\newenvironment{ex}
  {\pushQED{\qed}\examplex}
  {\popQED\endexamplex}

\newcommand{\R}{\mathbb{R}}

\renewcommand{\P}{\mathbb{P}}
\newcommand{\E}{\mathbb{E}}

\newcommand{\argmax}{\text{argmax}}

\renewcommand{\O}{\mathbb{O}}

\usepackage{dsfont}

\usepackage{enumitem}

\usepackage{mathtools}

\usepackage{tikz-cd}
\usepackage{adjustbox}

\begin{document}

%%
%% The "title" command has an optional parameter,
%% allowing the author to define a "short title" to be used in page headers.
\title[Preserving Target Distributions With Differentially Private Count Mechanisms]{Preserving Target Distributions With Differentially Private \\Count Mechanisms}

%%%%%%%%%%%% Authors' Info %%%%%%%%%%%%%%%%%
%%
%% The "author" command and its associated commands are used to define
%% the authors and their affiliations.

\author{Nitin Kohli}
\affiliation{%
  \institution{UC Berkeley Center for Effective Global Action}
  \city{}
  \state{}
  \country{}}
\email{nitin.kohli@berkeley.edu}

\author{Paul Laskowski}
\affiliation{%
  \institution{UC Berkeley School of Information}
  \city{}
  \state{}
  \country{}}
\email{paul@ischool.berkeley.edu}

%%
%% By default, the full list of authors will be used in the page
%% headers. Often, this list is too long, and will overlap
%% other information printed in the page headers. This command allows
%% the author to define a more concise list
%% of authors' names for this purpose.

\renewcommand{\shortauthors}{Kohli and Laskowski}

%%%===================%%%
%%% =*= Abstract  =*= %%%
%%%===================%%%

%%
%% The abstract is a short summary of the work to be presented in the
%% article.
\begin{abstract}
    Differentially private mechanisms are increasingly used to publish tables of counts, where each entry represents the number of individuals belonging to a particular category. A \textit{distribution of counts} summarizes the information in the count column, unlinking counts from categories. This object is useful for answering a class of research questions, but it is subject to statistical biases when counts are privatized with standard mechanisms. This motivates a novel design criterion we term \textit{accuracy of distribution}.

    This study formalizes a two-stage framework for privatizing tables of counts that balances  accuracy of distribution with two standard criteria of accuracy of counts and runtime. In the first stage, a \textit{distribution privatizer} generates an estimate for the true distribution of counts. We introduce a new mechanism, called the cyclic Laplace, specifically tailored to distributions of counts, that outperforms existing general-purpose differentially private histogram mechanisms. In the second stage, a \textit{constructor algorithm} generates a count mechanism, represented as a transition matrix, whose fixed-point is the privatized distribution of counts. We develop a mathematical theory that describes such transition matrices in terms of simple building blocks we call $\epsilon$-scales. This theory informs the design of a new constructor algorithm that generates transition matrices with favorable properties more efficiently than standard optimization algorithms. We explore the practicality of our framework with a set of experiments, highlighting situations in which a fixed-point method provides a favorable tradeoff among performance criteria.
\end{abstract}

%%
%% Keywords. The author(s) should pick words that accurately describe
%% the work being presented. Separate the keywords with commas.
\keywords{Differential Privacy, Distribution Preservation, Optimal Count Mechanisms, Convex Polytopes, Fixed-Point Analysis}

\maketitle

%%%==================%%%
%%% =*= Sections =*= %%%
%%%==================%%%

\section{Introduction}
\label{intro}
When studying populations, data is often organized as a table of counts, where each row records the number of people belonging to a particular category. Common categories include geographic regions, schools, and companies, among others. We assume that there are $N$ categories, numbered $0$ through $N-1$. We assume that each category $i$ has a corresponding count $d_i$.  We also require that every individual belongs to no more than one category. An example of such a dataset is provided in Figure \ref{fig:diagram}.A, with categories corresponding to U.S. states and counts recording the number of individuals who contracted bird flu.

For organizations that hold such data -- including census bureaus \cite{abowd2018us}, humanitarian groups \cite{kohli2024privacy}, and public health departments \cite{savi2023standardised} -- there is a tension between releasing the data to facilitate wider research and analysis, and protecting the privacy of the individuals contained inside. In recent years, some organizations have started to publish \textit{tables of privatized counts} in place of raw data. Here, we are referring specifically to tables that preserve the row structure of the original data; these are tables that contain an approximate count for every category in the original dataset. In a typical scenario, independent noise is added to each count, with the noise magnitude tailored to meet the common standard of differential privacy \cite{dwork2006calibrating}. 

%% Retrieved from https://www.cdc.gov/bird-flu/situation-summary/index.html
\begin{figure}[!t]
\centering
\includegraphics[width=\linewidth]{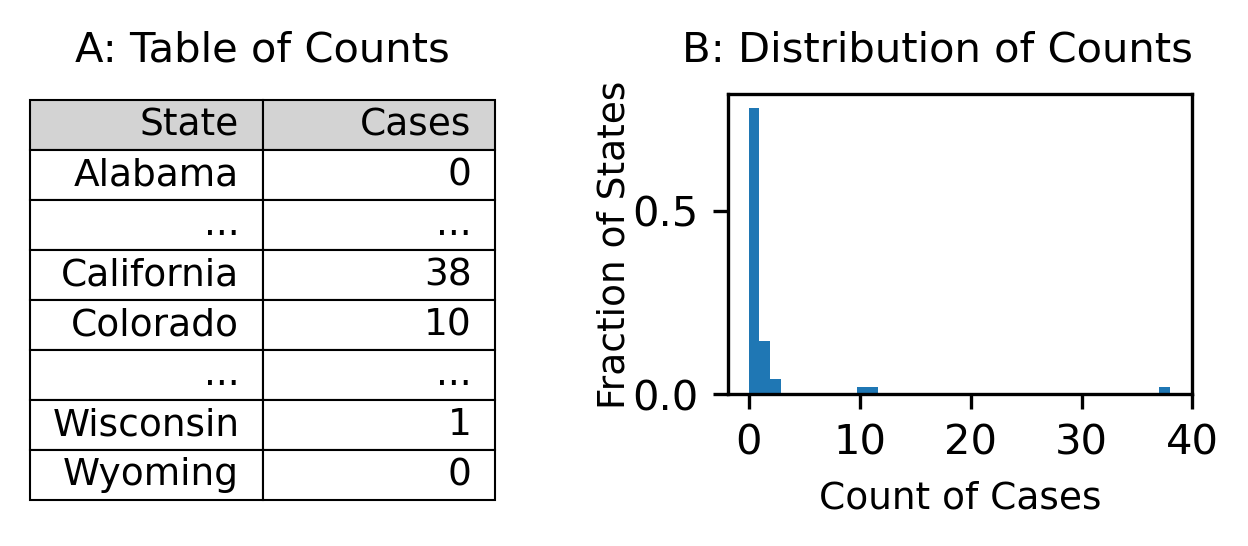}
    \vspace*{-0.15cm}
\caption{A. Table of counts of bird flu cases by U.S. state as of February 27, 2025. B. The corresponding distribution of counts. In this figure, $N = 50$ and $n = 41$.} 
\label{fig:diagram} 
\end{figure}

We can distinguish two broad classes of questions that can be answered using a table of counts. The first are questions that must be formulated using the name of at least one category. For example, a researcher might want to know how many bird flu cases have occurred in Colorado. We refer to these as \textit{category-based questions}. In contrast, some questions can be written without referencing any category names. Examples include the following policy-relevant questions.  

\begin{itemize}
    \item What fraction of US flights have zero air marshals onboard? This number raises security concerns, and in particular was cited \cite{Hinson2024AirMarshals} in motivating a 2024 congressional bill \cite{NoFAMSAtBorderAct2024}.
    \item What fraction of company boards have more than 2 women? This threshold has been argued as important for changing company culture \cite{kramer2006critical} and tracked in a recent report \cite{pwc2017acds}.
    \item What is the average number of students overseen by each guidance counselor in the U.S.? A 2025 study estimated the number to be 405, far exceeding the 250 recommended by the American School Counselor Association \cite{NACAC2025SchoolCounseling}.
\end{itemize} 

We refer to questions like these as \textit{count distribution questions}, as it is possible to compute them after first preprocessing the count column. We assume each count $d_i$ is in the set $\{0,1,....n-1\}$, with higher counts truncated if necessary. We assume that the upper bound $n-1$ is given publicly; if not, some privacy budget may be reserved to set $n$ (for example, $n$ can be set to a privatized $99^{th}$ percentile).

We first define the \textit{histogram of counts} as a vector $\eta = (\eta_0,..., \eta_{n-1})$, where $\eta_i = |\{j \in \{0,...,N-1\} : d_j = i\}|$ is the number of categories with count $i$. For example, in Figure \ref{fig:diagram} $\eta_0$ equals the number of states with no bird flu cases. Note that $N = \sum_{i=0}^{n-1} \eta_i$. We then define the \textit{distribution of counts} as a vector $\zeta = (\zeta_0,...,\zeta_{n-1}) $ where $\zeta_i = \eta_i / N$ denotes the proportion of categories with count $i$. Figure \ref{fig:diagram}.B depicts the distribution of counts for the count table in Figure \ref{fig:diagram}.A.

A problem arises when existing differential privacy techniques are used to answer count distribution questions. Within the literature, utility is most often captured via the typical error for a single count. Thus, canonical mechanisms --- including the Laplace \cite{dwork2006calibrating}, geometric \cite{ghosh2009universally}, Gaussian \cite{dwork2006our}, and discrete Gaussian \cite{canonne2020discrete} --- are not designed to preserve the distribution of counts. In general, each bin of the distribution of counts will have non-zero bias. For example, the geometric mechanism with outputs bounded to be non-negative tends to inflate the number of zeros in the count table. 

Given the existence of such biases, one possible solution would be to independently create two separate data products: a table of privatized counts and a privatized distribution of counts (with the privacy budget split between the two). However, without additional constraints, there would be no guarantee of consistency between the two data products.  As Long et al. explain, ``Stakeholders [...] have expressed a strong desire for consistent data [...]. Such consistency has historically formed the bedrock of statistical use of the data, making it seamless to combine data across tables and easy to integrate it into traditional models and analyses'' \cite{jason2020}. 

For the reasons above, we believe there is interest in a unified data product: a table of privatized counts that provides accurate answers to both category-based and count-distribution questions.

%%%%%%%%%%%%%%%%%%%%%%%%%%%
\subsection{Distribution-Preserving Count Mechanisms}
\label{subsec:dist_preserving_count}

Conceptually, a \textit{count mechanism} takes a ``true'' input count and outputs a (randomized) output count. Mathematically, it is useful to represent such a mechanism as an $n \times n$ transition matrix; this is a matrix $T \in \R^{n \times n}_{\ge 0}$ with $T\mathds{1}_n=\mathds{1}_n$, where $\mathds{1}_n$ denotes the $n$-dimensional column vector of all ones. In alignment with the task of counting, all vectors and matrices will be indexed starting from 0. We will use capital letters for matrices and lowercase letters for their elements. For count mechanism $T$, $t_{i,j}$ represents the conditional probability that the output count is $j$ given that the input count is $i$. 

When the input count to a count mechanism $T$ is drawn from a distribution $z$, written as an $n$-dimensional row vector, the distribution of the output count may be expressed as $zT$. We say that a count mechanism $T$ has \textit{fixed point $z$} if $zT= z$.

For the moment, assume that the true distribution of counts $\zeta$ is publicly known (we will relax this assumption in Section \ref{subsec:proposed_workflow}). We envision the following: The data holder selects count mechanism $T$ with fixed point $\zeta$. They then take each row $i$ of the dataset, passes the count $d_i$ as an input to $T$, and records the output in row $i$ of the privatized table. The resulting output may be summarized with its distribution of privatized counts, which we label $\hat z$. 

Because of the fixed-point constraint, for every output count $j \in \{0,...,n-1\}$, $\E[\hat z_j] = \zeta_j$, so the output distribution of counts is correct in expectation.\footnote{To see why, let $\mathcal{I}_{i,j}$ be the indicator random variable corresponding to the event that row $i$ of the count table (with count $d_i$) is mapped to $j$ under $T$. Then  $\mathcal{I}_{i,j} \sim \text{Bernoulli}(t_{d_i,j})$ and 
$
    \hat z_j = \frac{1}{N} \sum_{i=0}^{N-1} \mathcal{I}_{i,j} =  \frac{1}{N} \sum_{l=0}^{n-1}\sum_{i \text{ : } d_i = l} \mathcal{I}_{i,j}
$
The inner sum is the sum of $\eta_l$ Bernoulli random variables, each with expectation $t_{l,j}$, so 
$
    \E[\hat z_j] = \frac{1}{N} \sum_{l=0}^{n-1}\eta_l t_{l,j}  = \sum_{l=0}^{n-1} \zeta_l t_{l,j}  = (\zeta T)_j = \zeta_j
$
where the last equality follows by the fixed-point constraint. Also 
$
\text{Var}(\hat z_j) = \frac{1}{N} \left(\zeta_j - \sum_{l=0}^{n-1} \zeta_l t_{l,j}^2 \right)
$
so we expect the sampling error of the output distribution to be small for larger datasets.} While this is a promising observation, one obstacle remains to using a fixed-point constraint in this way: distribution $\zeta$ is often not publicly known. We next develop a workflow for situations in which $\zeta$ cannot be directly used.

%%%%%%%%%%%%%%%%%%%%%%%%%%%

\subsection{A Two-Stage Counting Framework}
\label{subsec:proposed_workflow}
\begin{figure}[!ht]
\centering
\includegraphics[width=\linewidth]{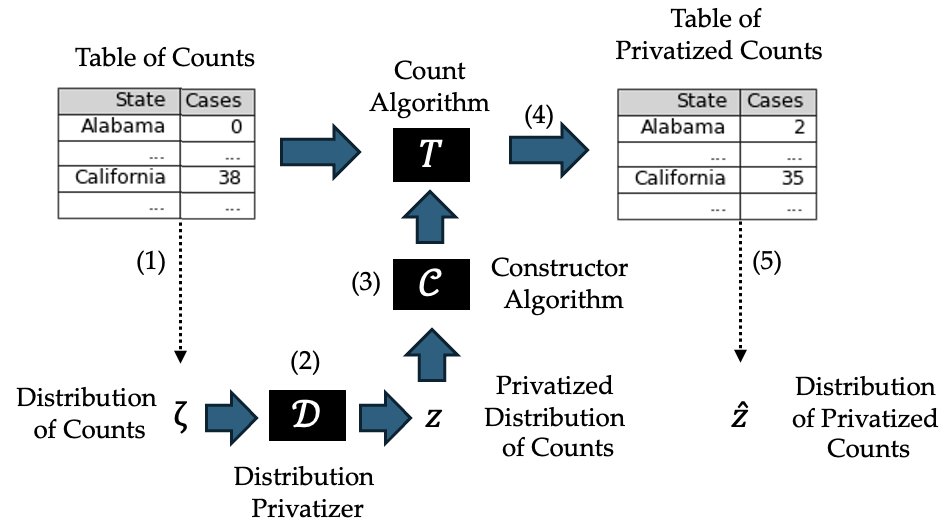}
    \vspace*{0.75mm}
\caption{A framework for constructing and utilizing count mechanisms when the true distribution must be protected.}
\label{fig:workflow} 
\end{figure}

When choosing a count mechanism, a common obstacle is that the true distribution $\zeta$ is not public knowledge and must be kept secret by the data holder. To overcome this obstacle, we follow the strategy of adding a preprocessing step to generate a privatized approximation to $\zeta$. The total privacy budget, which we label $\epsilon_t$, must therefore be divided as $\epsilon_t = \epsilon_1 + \epsilon_2$, with $\epsilon_1$ used to privatize $\zeta$, and $\epsilon_2$ reserved for the count mechanism.

When the distribution is privatized before a count mechanism is chosen, we will refer to the overall system as a \textit{two-stage counting framework}. Figure \ref{fig:workflow} lays out the operation of such a framework, which involves five steps.

\begin{enumerate}
    \item The table of counts is summarized by its distribution of counts $\zeta$.
    \item A distribution privatizer  $\mathcal{D}$ is used to privatize $\zeta$, using privacy budget $\epsilon_1$, forming privatized distribution $z$ (the \textit{target distribution}).
    \item Constructor algorithm $\mathcal{C}$ is used to create an $\epsilon_2$-differentially private count mechanism $T$. In normal operation, we require that $T$ has fixed point $z$.
    \item Each row of the table of counts is passed into count mechanism $T$, one at a time, with the output recorded in a table of privatized counts using privacy budget $\epsilon_2$.
    \item Users with count distribution questions may summarize the table of privatized counts with its distribution of privatized counts $\hat z$.
\end{enumerate}

To fully implement this framework, one must specify a distribution privatizer $\mathcal{D}$, a constructor algorithm $\mathcal{C}$, and several parameter values. To guide the choices that go into a deployment, we first identify a set of three performance criteria. These are intended to represent the top concerns of data users and data holders.

\begin{enumerate}
    \item \textbf{Accuracy of Distribution.} We conceptualize \textit{distribution error} as a difference between $\zeta$ and $\hat z$. Although our framework involves a fixed-point constraint, there are two reasons that these distributions will not be identical. First, the fixed-point of $T$ is $z$, which is an approximation to $\zeta$. Second, there will be sampling error as each row of the table is selected randomly using probabilities in $T$.  Distribution error can be operationalized using Wasserstein, total variation, or Kolmogorov-Smirnov distance. A researcher with a count distribution question may be primarily concerned with accuracy of distribution. 
    \item \textbf{Accuracy of Counts.}  We conceptualize \textit{count error} as a typical distance between a single input count and the corresponding output count. This is commonly operationalized as mean squared error or expected absolute deviation. A researcher with a category-based question may primarily be concerned with accuracy of counts.
    \item \textbf{Runtime.} In many scenarios, the space of count mechanisms has high dimensionality and care must be taken that the constructor algorithm $\mathcal{C}$ can run efficiently. As we will see in Section \ref{experiments}, standard algorithms from linear programming are often impractical in our setting.
\end{enumerate}

There are obstacles to designing a system that meets all three performance criteria to a satisfactory level. First, one might hope to adapt existing algorithms to play the role of the distribution privatizer $\mathcal{D}$ and the constructor algorithm $\mathcal{C}$. While this works for some datasets, as we explore in Section \ref{experiments}, existing algorithms often lead to reduced accuracy or impractically long runtimes. Moreover, the development of new constructor algorithms is complicated by a lack of mathematical theory describing private count mechanisms with fixed-point constraints.

%%%%%%%%%%%%%%%%%%%%%%%%%%%

\subsection{Summary of Contributions}

Our study pioneers the development of count mechanisms that incorporate a fixed-point constraint to provide high accuracy of distribution. We make four main contributions: the first three are mathematical, corresponding to different portions of the two-stage counting framework; the last contribution is experimental, and explores the practicality of our approach. Table \ref{tab:notation} provides a table of notation used throughout the paper.

\begin{table}[ht]
%\vspace{.3cm}
\begin{tabular}{|>{\raggedright\arraybackslash}w{c}{1.5cm}|p{5.75cm}|}
\hline
\textbf{Notation} & \textbf{Description} \\
\hline
$n$ & Maximum allowed count is $n-1$ \\
$N$ & Number of categories \\
$\eta$ & Histogram of counts \\
$\zeta$ & Distribution of counts \\
$\mathcal{D}$ & Distribution privatizer with budget $\epsilon_1$\\
$\mathcal{C}$ & Constructor algorithm with budget $\epsilon_2$\\
$z$ & Fixed point, output of $\mathcal{D}$ \\
$T$ & Count algorithm, output of $\mathcal{C}$ \\
$\mathds{1}_n$ & $n$-dimensional vector of 1's\\
$F$ & Fixed-point polytope for $z$\\
$U$ & Unfixed point polytope\\
$R_F$ & Representation Fixed-point polytope of $z$\\
$R_U$ & Representation Unfixed point polytope\\
$M_F$ & Movement matrices for $R_F$ \\
$M_U$ & Movement matrices for $R_U$ \\
$s$ & $\epsilon$-scale \\
$p$ & pattern of an $\epsilon$-scale  \\
$k$ & Number of $\epsilon$-scales ($2^{n-1})$ \\
$\Psi$ & Scale matrix \\
$ex(\cdot)$ & Extreme points of a polytope \\
$\langle \cdot, \cdot \rangle$ & Frobenius inner product \\
$\kappa$ & Column selector function \\
\hline
\end{tabular}
\vspace{1mm}
\caption{Frequently used notation.}
\label{tab:notation}
\end{table}

\textbf{Distribution Privatizer.} We provide a novel mechanism to privatize a distribution of counts with a differential privacy guarantee (Theorem \ref{thm:cyclic_laplace}). Our mechanism, called the \textit{cyclic Laplace mechanism for distributions of counts} (Definition \ref{defn:cyclic_laplace}), employs Laplace noise, but transfers probability between adjacent positions in the distribution of counts, which is specifically tailored to mask the presence or absence of an individual in this context. We provide evidence that this technique yields more accurate distributions than existing general-purpose differentially private histogram mechanisms.

\textbf{Theory of Count Mechanisms with Fixed Points.} We provide a compact description of count mechanisms that satisfy differential privacy and a fixed-point constraint using an atomic element that we call an \textit{$\epsilon$-scale} (Definition \ref{defn:scale}). Roughly speaking, an $\epsilon$-scale is a probability distribution over $\{0,1,2,...,n-1\}$ in which a differential privacy constraint binds for every pair of neighboring positions. The name comes from the notion that, if one were to take a ``walk'' along a scale, every step from one position to the next would either be a step up or step down by the same multiplicative constant. Our results will show that $\epsilon$-scales are fundamental building blocks of differentially private count mechanisms.

We begin by formally defining $F_\epsilon[z]$ as the set of count mechanisms that satisfy $\epsilon$-differential privacy with fixed-point $z$ (Definition \ref{defn:F_defined}); we often just write $F$ for brevity when $\epsilon$ and $z$ are clear from context. Viewed geometrically, $F$ is a convex polytope, and hence it can be characterized using its extreme points. 

We show that any differentially private fixed-point count mechanism in $F$ can be described using a conic combination of $\epsilon$-scales (Corollary \ref{cor:F_col_scales}). Building on this observation, we construct a \textit{representation polytope} $R_F$ that encodes how much of each scale contributes to each column of a mechanism $T$ (Definition \ref{defn:RF_definition}). This polytope is related to $F$ by a linear map $\Psi$ whose matrix representation is comprised of the $\epsilon$-scales. In Theorem \ref{thm:RF_to_F_surjection}, we prove that any mechanism in $F$ has a representation (not necessarily unique) in $R_F$. Furthermore, Proposition \ref{prop:extreme_surjection_R_to_F} links each extreme point of $F$ to at least one extreme point of $R_F$ through $\Psi$.

These results suggest that we can learn about the extreme points of $F$ by studying the extreme points of $R_F$. In Theorem \ref{thm:u_general_extreme}, we characterize these extreme points in terms of the scales that make them up. Specifically, the extreme points of $R_F$ are exactly those for which certain weighted differences of scales are linearly independent -- a condition we call  \textit{$\Psi$-affinely simplified} (Definition \ref{defn:psi_aff_simple}). 

Taken together, our results show that every extreme matrix of $F$ can be represented by a relatively small number of $\epsilon$-scales (Corollary \ref{cor:nonzero_RF}), fulfilling certain conditions (Theorem \ref{thm:u_general_extreme}). The algebraic structure that we discover provides a framework for reasoning about differentially private fixed-point count mechanisms, which can be leveraged for algorithmic design.

\textbf{Constructor Algorithm.} Next, we consider the design of a constructor algorithm. Our primary focus is on the development of constructor algorithms whose output has fixed-point $z$, which we refer to as \textit{fixed-point constructor algorithms}. Since all such constructors share the same fixed-point constraint, we expect them to behave similarly in terms of accuracy of distribution (we will later confirm this in experiments). We therefore follow an approach in which we focus on accuracy of counts and use $\mathcal{C}$ to find a count mechanism with low count error. It is sometimes possible to minimize count error exactly using standard linear programming algorithms; however, the runtime of existing solvers can be prohibitively long for large $n$. For such settings, we provide Algorithm \ref{alg:heuristic}, which leverages our scale representation and builds a conic combination of scales in a greedy fashion to output an extreme point of $F_{\epsilon_2}[z]$ (Theorem \ref{thm:heur_output_extreme_F}). Our algorithm runs as fast as $\mathcal{O}(n^2)$ (Theorem \ref{thm:heur_halts}), demonstrating its practicality even when $n$ is large.

Our mathematical framework can also be utilized to design constructor algorithms that do not involve a fixed-point constraint. When used in a two-stage framework, this forms a relevant baseline against which fixed-point methods can be compared. We let $U$ denote the set of $\epsilon$-differentially private count mechanisms that may send $z$ to a different distribution (Definition \ref{defn:U}). We leverage our framework and a key result from Ghosh et al. \cite{ghosh2009universally} to create a \textit{two-staged unfixed optimum constructor} (Algorithm \ref{alg:ghosh_scale}), which returns the lowest count-error mechanism in $U$ in $\mathcal{O}(n^2)$ time (Theorem \ref{thm:unfixed_runtime}).

%One might wonder if the two-staged unfixed optimum constructor can be used in isolation without the rest of the two-stage counting framework. Even though $U$ is not affected by $\zeta$, the count error measures we use -- expected absolute deviation and mean squared error -- do depend on the distribution. If $\zeta$ were not privatized, changing an individual's data could change the point in $U$ that minimizes count error -- for example, altering columns of the transition matrix from positive to zero. This would violate differential privacy; hence, we continue to use the two-stage counting framework with the two-staged unfixed optimum, but without the fixed-point requirement in Step 3.

\textbf{Empirical Results and Practical Guidance.} In Section \ref{experiments}, we report on a set of experiments in which our fixed-points methods are used to generate tables of privatized counts. We are interested in how different fixed-point constructors perform relative to each other, and also how they perform relative to four \textit{unfixed baselines} -- mechanisms representing prior work that do not enforce a fixed-point constraint. We divide these questions into three components, aligned with our three performance criteria: accuracy of distribution, accuracy of counts, and runtime. 

First, we find that the inclusion of a fixed-point constraint usually yields a significant improvement to accuracy of distribution, relative to all unfixed baselines. This is to be expected, as the unfixed baselines do not encode a preference for accuracy of distribution. Next, we find that that switching from an unfixed baseline to a fixed-point method generally degrades accuracy of counts. This is also to be expected, as fixed-point methods must output a count mechanism in polytope $F$, which is a strict subset of the unfixed polytope $U$. Nevertheless, we find the difference to be moderate in magnitude -- sometimes a matter of a few percentage points. Finally, we find that the fastest runtime is always attained by one of the unfixed baselines. Fixed-point constructors that minimize count error exactly (i.e., using the interior point method or simplex algorithm) exhibit large growth rates and are impractical for all but the smallest values of $n$; in these situations, our heuristic constructor provides a promising alternative, with execution time that is experimentally comparable to the fastest unfixed baseline.

Taken together, our experimental results shed light on the practical tradeoffs from adopting a fixed-point constraint. Throughout Section \ref{experiments}, we highlight scenarios where a fixed-point constraint provides substantial improvements in terms of accuracy of distribution, with modest performance losses in both accuracy of counts and runtime.

\vspace{.2cm}

\noindent \textbf{Organization of Paper:} In Section \ref{related_works}, we discuss related works. In Section \ref{privatizing_z}, we propose a distribution privatizer that is $\epsilon$-differentially private and argue that it has favorable error characteristics. In Section \ref{characterization} we develop a mathematical framework that describes mechanisms in $F$ using $\epsilon$-scales, and provide conditions that characterize all extreme points. In Section \ref{algorithmic}, we discuss constructor algorithms, including novel algorithms that leverage $\epsilon$-scales to run efficiently. In Section \ref{experiments}, we explore the practicality of our approach with a set of experiments. We conclude with a discussion of future avenues of inquiry in Section \ref{disc}. For readability, all proofs of the mathematical results described in the main text are deferred to the Appendix.

\section{Related Works}
\label{related_works}
Differential privacy was introduced in 2006 by Dwork, McSherry, Nissim, and Smith \cite{dwork2006calibrating}. One of the earliest and most studied problems in the field is that of privately computing a count of individuals \cite{dwork2006calibrating}. The modern toolkit for privatizing counts includes the Laplace \cite{dwork2006calibrating} and Gaussian mechanisms \cite{dwork2006our}, and their discrete analogs (the geometric \cite{ghosh2009universally, fernandes2019utility} and discrete Gaussian mechanisms \cite{canonne2020discrete}), sometimes followed by a postprocessing step to ensure that desirable data properties are met \cite{boninsegna2025differential, us2021disclosure, cumings2023disclosure}.

A number of authors present methods for privatizing distributions, which can serve as a distribution privatizer in our two-stage framework. Typically in this literature, the outputs corresponding to bins are correlated, in order to improve the utility for certain sets of queries. For example, \citet{xiao2010differential} apply wavelet transformations to privatize ordinal data while maintaining high accuracy for range count queries. \citet{hay2010boosting} introduce hierarchical methods to improve the accuracy of range queries for differentially private histograms.   \citet{qardaji2013understanding} improve upon the hierarchical method of Hay et al. by specifying an error measure and recommending optimal branching factors. Like the above studies, our cyclic Laplace mechanism also utilizes noise that is shared between histogram bins, but our sharing strategy is tailored specifically to the setting of distributions of counts. Sharing a focus on privatized distributions, \citet{li2020estimating} propose a mechanism that adds noise to individual datapoints before reconstructing a distribution. Their noise is drawn from a square wave to meet the standard of local differential privacy, a stricter standard than the notion of (central) differential privacy considered in this paper. 

Several studies address privatizing counts, and are especially relevant to the constructor algorithm of our two-stage framework. In a series of papers, Geng et al. show that the staircase mechanism is the differentially private algorithm that minimizes a certain class of cost functions with real-valued outputs \cite{geng2014optimal, geng2013optimal, geng2015staircase}. Li et al. introduce the matrix mechanism to privately answer a collection of count-based queries using (potentially correlated) Laplace and Gaussian noise to minimize the total error \cite{li2015matrix}. Sadeghi et al. utilize techniques from linear programming to develop a differentially private count mechanism that adds integer-valued noise over a fixed and finite range, and numerically demonstrate its improved privacy properties over the discrete Gaussian mechanism \cite{sadeghi2020differentially}. 

In the study closest to ours, Ghosh et al. study the problem of minimizing count error for count mechanisms without a fixed-point constraint \cite{ghosh2009universally}. Under mild assumptions on an objective function, they find that all optimal mechanisms are equivalent to a geometric mechanism under post-processing -- a result that we leverage in developing our two-staged unfixed optimum constructor. Unlike Ghosh et al., we provide a linear algebra framework that provides new intuition for the structure of count mechanisms, and we focus on a scenario with a fixed-point constraint.

Kairouz et al. \cite{kairouz2016extremal} and Holohan et al. \cite{holohan2017extreme} both mathematically analyze the behavior of optimal algorithms satisfying local differential privacy \cite{erlingsson2014rappor, yang2023local}. As in our setting, the authors represent mechanisms as transition matrices and apply linear constraints corresponding to a differential privacy guarantee. In contrast to the task of counting, the authors allow neighboring datasets to be associated with any pair of inputs, not just consecutive rows. This leads to the local differential privacy polytope. We study different polytopes in this paper, but our focus on characterizing extreme points is shared by the above authors.

The mathematical techniques we use to analyze our polytopes of interest, $U$ and $F$, are rooted in the algebraic study of matrix polytopes more broadly. The classic Birkhoff–von Neumann Theorem states that permutation matrices are the extreme points of the set of doubly stochastic matrices \cite{dufosse2018further}. Jurkat and Ryser \cite{jurkat1967term} generalize this result to the transportation polytope with row and column sums equal to $\varphi \in \R^{1\times n}$. Hartfiel \cite{hartfiel1974study} characterizes the extreme points of the set of stochastic matrices with a fixed point $z$. Since we incorporate privacy constraints into the definition of our polytopes, our results can be viewed as extensions of the above results, though our algebraic tools are somewhat different.

\section{Privatizing Distributions of Counts}
\label{privatizing_z}
In this section, we consider the design of a distribution privatizer $\mathcal{D}$, which forms part of our two-stage counting framework. The role of this component is to take the secret distribution of counts $\zeta$ and return an approximation meeting $\epsilon$-differential privacy. One existing candidate algorithm is the $n$-dimensional Laplace mechanism, which adds independent Laplace noise\footnote{\label{foot:laplace_dist}The $\text{Laplace}(b)$ distribution is the probability distribution over $\R$ whose density function is $\varphi(x;b) = \frac{1}{2b}\exp(-|x|/b)$, expectation is 0, and variance is $2b^2$.} to each dimension proportional to $L_1$ global sensitivity \cite{dwork2006calibrating}. Compared to a generic probability distribution however, a distribution of counts brings additional structure that the classic Laplace mechanism does not leverage. We therefore provide a design for an alternative algorithm.

A \textit{randomized algorithm} $\mathcal{R}$ accepts an input and outputs a random variable that takes values in some set $\O$. Differential privacy depends on the concept of \textit{neighboring inputs} \cite{dwork2006calibrating}. We adopt the presence-absence variant of neighboring, which means that two inputs are neighbors if one input corresponds to adding or removing exactly one individual from the other. In the context of count tables, this means that two count tables $d$ and $d'$ are neighboring if and only if there exists a row $i \in \{0,...,N-1\}$ such that $|d_i - d'_i| = 1$ and for all $j \ne i$, $d_j = d'_j$. 

\begin{defn}
\label{defn:differential_privacy} 
A distribution privatizer satisfies $\epsilon$-differential privacy if for every pair of neighboring count tables $d$ and $d'$ and for every measurable set $\Theta \subseteq \O$, $\P(\mathcal{R}(d) \in \Theta) \le e^\epsilon \P(\mathcal{R}(d') \in \Theta)$.
\end{defn}

Throughout this study, we assume that $\epsilon>0$. When a distribution privatizer is used in the two-stage counting framework, we use the value of $\epsilon = \epsilon_1$.

\begin{defn}
\label{defn:cyclic_laplace} 
We define the \textit{cyclic Laplace mechanism for distributions of counts} as a randomized algorithm that accepts a parameter $\epsilon$ and table of counts $d$ as inputs. It then computes its distribution of counts $\zeta$ and independently samples $L_0, ..., L_{n-1} \sim \text{Laplace}(1/(N\epsilon))$, defines $L_n = L_0$, and outputs $V(\zeta) = (V_0,...,V_{n-1})$ where $V_i = \zeta_i + L_i - L_{i+1}$ for all $i \in \{0,...,n-1\}$.
\end{defn}

\begin{obs}
\label{obs:cyc_lap_sums}
    By construction, $\sum_{i=0}^{n-1} V_i = \sum_{i=0}^{n-1} \zeta_i = 1$. However, there is no guarantee that $V(\zeta)$ is a probability distribution because some elements may be negative, so post-processing may be required.
\end{obs}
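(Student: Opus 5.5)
The claim in Observation~\ref{obs:cyc_lap_sums} has two parts. The first is an algebraic identity: the coordinates of $V(\zeta)$ sum to $1$. The second is a non-guarantee: $V(\zeta)$ need not be a probability vector. I will prove the identity by a telescoping argument and the non-guarantee by a concrete example.

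\textbf{The sum identity.} I sum the defining equation $V_i = \zeta_i + L_i - L_{i+1}$ over $i \in \{0,\dots,n-1\}$, which gives
\[
\sum_{i=0}^{n-1} V_i = \sum_{i=0}^{n-1}\zeta_i + \sum_{i=0}^{n-1} L_i - \sum_{i=1}^{n} L_i = \sum_{i=0}^{n-1}\zeta_i + L_0 - L_n .
\]
By the cyclic convention $L_n = L_0$ in Definition~\ref{defn:cyclic_laplace}, the Laplace terms cancel. This holds pointwise for every realization of the noise, not merely in expectation. It remains to show $\sum_i \zeta_i = 1$. Since $\zeta_i = \eta_i/N$ and $\sum_i \eta_i = N$ (every category has exactly one count in $\{0,\dots,n-1\}$, after truncation), we get $\sum_i \zeta_i = N/N = 1$.

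\textbf{Possible negativity.} It suffices to exhibit one configuration in which some $V_i<0$ with positive probability. Take any $i$ with $\zeta_i < 1$. The variables $L_i$ and $L_{i+1}$ are independent Laplace variables; for $n=1$ the single coordinate is trivially $V_0 = 1$, so I assume $n \ge 2$, which makes them distinct. Their difference $L_i - L_{i+1}$ therefore has a density that is positive on all of $\R$. Hence $\P\big(L_i - L_{i+1} < -\zeta_i\big) > 0$, and on that event $V_i < 0$.

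For instance, whenever some $\zeta_i = 0$, we have $V_i = L_i - L_{i+1}$. This difference is symmetric and continuous, so $V_i < 0$ with probability exactly $1/2$. Thus $V(\zeta)$ is not guaranteed to be a probability distribution, and post-processing (for example, clipping and renormalizing, which preserves differential privacy) may be needed.

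No step here is a genuine obstacle. The only point needing care is to invoke the wrap-around $L_n = L_0$ explicitly, since the cancellation depends on it.
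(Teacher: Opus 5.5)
Your proposal is correct and is essentially the paper's ``by construction'' argument: the noise terms telescope to $L_0 - L_n = 0$ under the cyclic convention, and $\sum_i \zeta_i = \sum_i \eta_i/N = 1$; your positive-density argument for $n \ge 2$ supplies the justification for possible negativity that the paper only asserts. Restricting to an index with $\zeta_i < 1$ is harmless but unnecessary, since $\P(L_i - L_{i+1} < -\zeta_i) > 0$ for every value of $\zeta_i$.
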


To provide some intuition for this mechanism, consider our earlier example in Figure \ref{fig:diagram}. A switch to a neighboring dataset could represent an extra individual in Alabama contracting bird flu, increasing the count for Alabama from 0 to 1, while all other rows remain unchanged. As a result, there would then be one extra 1 in the column of counts, and one fewer 0. The histogram of counts therefore changes only in positions 0 and 1, and only by 1 in these positions. In general, switching to a neighboring dataset always has the effect of shifting one unit from one histogram position to an adjacent position. In the cyclic Laplace mechanism for distributions of counts, such a change can be ``masked'' if the Laplace random variable corresponding to the pair of positions changes its value by one. This intuition is formalized in the proof of Theorem \ref{thm:cyclic_laplace}.

\begin{thm}
\label{thm:cyclic_laplace}
    The cyclic Laplace mechanism for distributions of counts satisfies $\epsilon$-differential privacy.
\end{thm}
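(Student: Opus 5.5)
The plan is a change of variables on the noise vector $L=(L_0,\dots,L_{n-1})$, which has product density $\prod_i \varphi(\ell_i;1/(N\epsilon))$. We combine the bijection with a density-ratio bound, in the same way as the standard proof for the Laplace mechanism.

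First, fix neighboring count tables $d$ and $d'$ that differ in row $r$, with $d'_r = d_r+1$ (the reverse direction is symmetric). Write $c=d_r$. Then the histogram changes only in positions $c$ and $c+1$: $\eta'_c=\eta_c-1$ and $\eta'_{c+1}=\eta_{c+1}+1$. Dividing by $N$, which is unchanged under this notion of neighboring, gives $\zeta'=\zeta - \tfrac1N e_c + \tfrac1N e_{c+1}$. Since counts lie in $\{0,\dots,n-1\}$, we have $c\le n-2$, so $c+1$ is a genuine index.

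Next, match noise realizations. Define $\ell'=\ell+\tfrac1N e_{c+1}$, so only the shared variable $L_{c+1}$ moves. Since $c+1\in\{1,\dots,n-1\}$, it is not the wrapped copy $L_n=L_0$, and there is no cyclic double-counting to worry about. We then check that $V(\zeta')$ computed with noise $\ell'$ equals $V(\zeta)$ computed with noise $\ell$.
\begin{itemize}
\item In coordinate $c$, the term $-\ell_{c+1}$ decreases by $\tfrac1N$, which cancels the $+$ from $\zeta'_c-\zeta_c=-\tfrac1N$ in the right way. More precisely, $V'_c=\zeta_c-\tfrac1N+\ell_c-\ell_{c+1}-\tfrac1N$. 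This shows the sign convention matters.
\item We therefore choose the shift as $\ell'=\ell-\tfrac1N e_{c+1}$. Then $V'_c=\zeta_c-\tfrac1N+\ell_c-\ell_{c+1}+\tfrac1N=V_c$, and $V'_{c+1}=\zeta_{c+1}+\tfrac1N+\ell_{c+1}-\tfrac1N-\ell_{c+2}=V_{c+1}$.
\item All other coordinates are untouched.
\end{itemize}
So the map $\ell\mapsto\ell'$ is a measure-preserving translation of $\R^n$ that carries each output event for $d$ to the same output event for $d'$.

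Then, for a measurable $\Theta$, write $\P(V(\zeta)\in\Theta)=\int \mathbf{1}[\,\zeta+A\ell\in\Theta\,]\,p(\ell)\,d\ell$, where $A$ is the linear map $\ell\mapsto(\ell_i-\ell_{i+1})_i$. Substituting $\ell=\ell'+\tfrac1N e_{c+1}$ turns this into an integral of $\mathbf{1}[\zeta'+A\ell'\in\Theta]$ against $p(\ell'+\tfrac1N e_{c+1})$. The density ratio is
\[
\frac{p(\ell'+\tfrac1N e_{c+1})}{p(\ell')}=\exp\!\big(N\epsilon(|\ell'_{c+1}|-|\ell'_{c+1}+\tfrac1N|)\big)\le e^{\epsilon}
\]
by the triangle inequality. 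This yields $\P(V(\zeta)\in\Theta)\le e^\epsilon\P(V(\zeta')\in\Theta)$.

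The main point needing care is the bookkeeping: choosing the correct sign of the shift, and confirming that the shifted variable is never the cyclically identified $L_0=L_n$. Because the boundary pair $(n-1,0)$ never corresponds to a neighboring change, only one independent Laplace coordinate moves, by exactly $1/N$. This is why scale $1/(N\epsilon)$ suffices rather than twice that.
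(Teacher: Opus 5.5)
Your proof is correct, and it takes a different and somewhat cleaner route than the paper's. The paper works with output densities. Since $V$ lies on the hyperplane $\sum_i V_i=1$, it drops $V_{n-1}$ by post-processing. It then conditions on one Laplace variable, so that the remaining $n-1$ noise variables are a unit-Jacobian affine function of $(V_0,\dots,V_{n-2})$. It bounds the ratio of conditional output densities pointwise via the triangle inequality, and finally integrates out the conditioned variable.

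You never form an output density. Using $\zeta'-\zeta=\tfrac1N(e_{c+1}-e_c)=\tfrac1N Ae_{c+1}$, the outputs on $d$ and $d'$ are one fixed map $\ell\mapsto\zeta+A\ell$ applied to $L$ and to $L+\tfrac1N e_{c+1}$. A single translation of one independent Laplace coordinate then gives the factor $e^{\epsilon}$. The core insight is the same in both proofs: one shared Laplace variable, never the wrapped $L_0=L_n$, absorbs the $1/N$ shift. Your version, however, needs no dimension reduction, no Jacobian, and no choice of conditioning variable.

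That choice matters in the paper's bookkeeping. Its Step 3 cancellation is the computation for conditioning on $L_0$. With the conditioning on $L_{n-1}$ used in its Step 2, the case $c=n-2$ shifts all $n-1$ recovered noise variables by $1/N$, so the pointwise bound fails there and must be rescued by shifting the conditioned variable inside the outer integral. Your argument treats every $c$ uniformly.

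Your coupling also shows that the mechanism is a post-processing of a scalar additive-noise mechanism with sensitivity $1/N$. That observation transfers directly to the cyclic Gaussian variant. The paper's route, in exchange, gives explicit output densities and the correlated-noise description of $(V_0,\dots,V_{n-2})$ that it uses for that variant.

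In a write-up, delete your first bullet, whose sentence claims a cancellation that its own display refutes; only the shift $\ell'=\ell-\tfrac1N e_{c+1}$ should appear.
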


A benefit of the cyclic Laplace mechanism for distributions of counts is that it does not alter the cumulative sum of the distribution of counts by very much. For index $i \in \{0,...,N-1\}$, the variance of the cumulative sum of the output of the cyclic Laplace mechanisms for distributions of counts is given by,
$$
\text{Var}\left[\sum_{j\le i} V_j \right] = 
\text{Var}\left[\sum_{j\le i} \zeta_j + L_j - L_{j+1} \right] = 
\text{Var}\left[L_0 - L_{j+1} \right] 
= \frac{4}{N^2\epsilon^2} 
$$
This result can be compared to the classic Laplace mechanism, which would output $\tilde V(\zeta)=(\tilde V_0,...,\tilde V_{n-1})$ where $\tilde V_j = \zeta_j + \tilde L_j$ for independent Laplace random variables $\tilde L_j \sim \text{Laplace}(2/(N\epsilon))$ (the $L_1$ sensitivity is $2/N$ since adding an individual increases $\eta_i$ for one $i$ and reduces $\eta_{i'}$ for an adjacent $i'$). The resulting variance is
$$
\text{Var}\left[\sum_{j\le i} \tilde V_j \right]  = \text{Var}\left[\sum_{j\le i} \zeta_j + \tilde L_j \right] = \text{Var}\left[\sum_{j\le i} \tilde L_j \right]= 
(i+1)\frac{8}{N^2\epsilon^2}
$$
which is larger by a factor of $2(i+1)$, meaning that the cumulative sum gets progressively less precise from position $0$ to position $n-1$.

Note that the sums in the variances above are not technically cumulative distributions, because $V$ and $\tilde V$ may not be valid probability distributions. To use either of these mechanisms in the two-stage counting framework, it is necessary to post-process the output so it is a valid probability distribution. Unfortunately, the simple expressions for variance above may not hold after post-processing. 

We perform experiments to measure the distributional accuracy of the cyclic Laplace in practice. As points of comparison, we consider three mechanisms: the classic Laplace, the Privelet mechanism with the Haar wavelet transform of \citet{xiao2010differential}, and the Hierarchical mechanism with weighted averaging, mean consistency, and a branching factor of 16 as recommended by \citet{qardaji2013understanding} (hereafter the QYL mechanism). The latter two mechanisms are designed to improve accuracy when answering range queries. 

We use a dataset describing schools which will be introduced later in Section \ref{experiments}, top coding the counts at $n-1$ for different values of $n$. We operationalize distribution error using Wasserstein-1 and Kolmogorov-Smirnov (KS) distance.  Figure \ref{fig:distribution_privatizers} shows the results of this study for Wasserstein distance; the results for KS distance are similar. The cyclic Laplace mechanism outperforms the other mechanisms for all values of $n$.

More generally, this cyclic approach can be combined with other differentially private mechanisms as building blocks, such as the Gaussian mechanism (see Appendix \ref{app_proofs_privatizing_z} for details).

\vspace{-0.175cm}

\begin{figure}[!t]
\centering
\includegraphics[width=\linewidth]{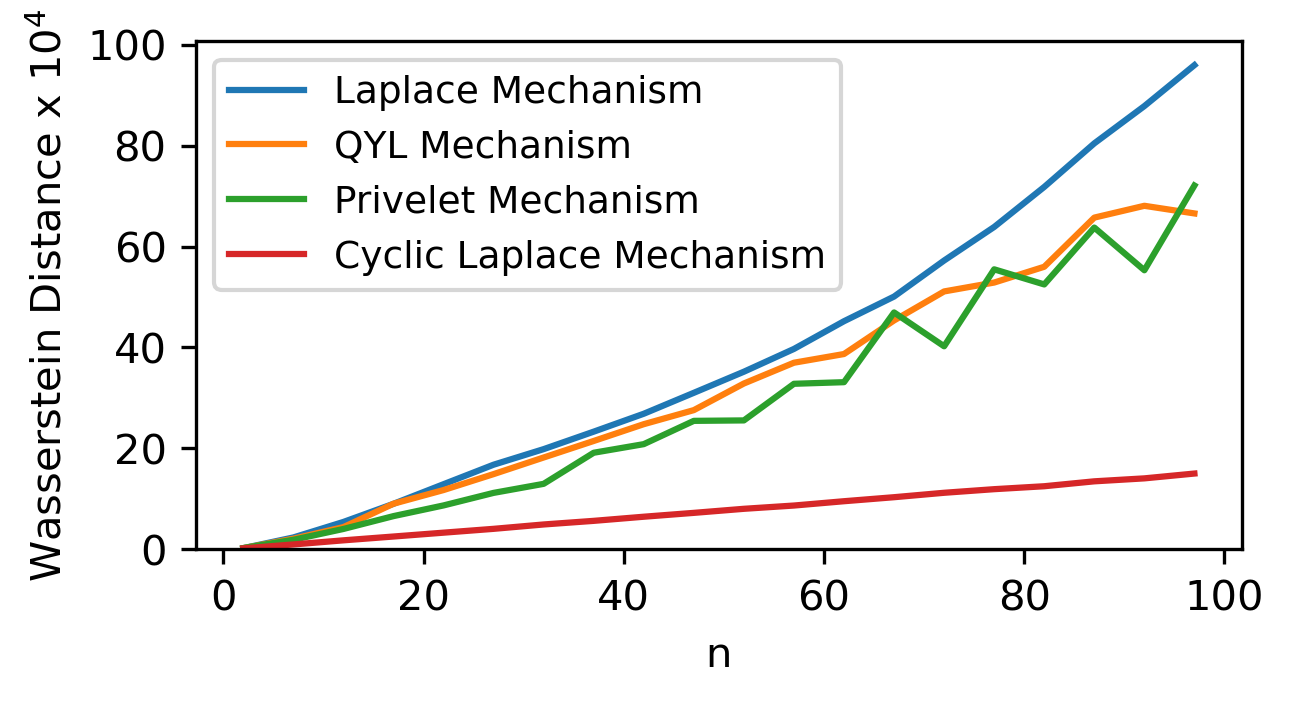}
    \vspace*{0.15mm}
\caption{Comparison of distribution error for the cyclic Laplace mechanism against other mechanisms with $\epsilon_1 = 1$.} 
\label{fig:distribution_privatizers} 
\end{figure}

\section{Describing Fixed-Point Count Mechanisms}
\label{characterization}

In this section, we provide a compact description of count mechanisms that satisfy differential privacy and the fixed-point constraint. A key aim of this effort is to provide a framework to assist in designing constructor algorithms for the two-stage counting framework.

%%%%%%%%%%%%%%%%%%%%%%%%%%%%%%%%%%%%%%%%%%%%%%%%%%%%%%%%%%%%%%%

\subsection{Mathematical Formulation}

Recall that from Section \ref{subsec:dist_preserving_count} that a count mechanism is represented by an $n \times n$ transition matrix $T$. To incorporate differential privacy into this setting, we compare the conditional probability of an output $j \in \{0,...,n-1\}$ based on the presence or absence of an individual; in the count setting, this requires us to compare an input of $i$ against an input of $i+1$, for $i \in \{0,...,n-2\}$. This is expressed for a single column as follows.

\begin{defn}
\label{defn:neighbor_indist}
    A column vector $v \in \R^n$ is $\epsilon$-\textit{neighbor indistinguishable} if $e^{-\epsilon} v_{i+1} \le v_{i} \le e^{\epsilon} v_{i+1}$ for all $0 \le i \le n-2$.
\end{defn}

Since $\epsilon > 0$ throughout our study, it follows from the definition that all entries of an $\epsilon$-neighbor indistinguishable vector are non-negative. Denote the $j^{th}$ column of a matrix $T$ as $T_j$. We can now define differential privacy over count mechanisms as follows.\footnote{We have chosen to define differential privacy over distribution privatizers and count mechanisms separately because they accept different inputs (the table of counts versus a single count). While we have defined these separately, they are actually connected as follows. We say two non-negative integers $x,y$ are \textit{neighboring counts} if $|x-y|=1$. Then, we two count tables $d$ and $d'$ are neighboring if and only if $d_i = d'_i$ for $N-1$ rows $i \in \{0,...,N-1\}$ and there exists a unique row $j  \in \{0,...,N-1\}$ such that $d_j$ and $d'_j$ are neighboring counts. These definitions ensure, by parallel composition, that if a count mechanism satisfies $\epsilon$-differential privacy, the independent application of the mechanism to every row in a count table also satisfies $\epsilon$-differential privacy.}

\begin{defn} 
\label{defn:U}
     A count mechanism $T$ satisfies $\epsilon$-\textit{differential privacy} if and only if every column $T_j$ is $\epsilon$-neighbor indistinguishable. Let $U_{\epsilon} \subset \R^{n \times n}$ denote the set of all $\epsilon$-\textit{differentially private count mechanisms}. Since we have placed no fixed-point restrictions on $U_{\epsilon}$, we will additionally refer to it as \textit{unfixed}. When $\epsilon$ is clear from context, we will denote this set as $U$.
\end{defn}

In the context of the two-stage counting framework, we require that $T$ satisfies $\epsilon_2$-differential privacy.

The privacy constraints above can be understood as a set of linear constraints on a vector space. To make this explicit, for $i \in \{0,...,n-2\}$, we define $\pi_i^+$ to be the $n$-dimensional row vector with position $i$ equal to 1, position $i+1$ equal to $-e^{\epsilon}$, and all other positions are 0; similarly, $\pi_i^-$ is the $n$ dimensional row vector with position $i$ equal to -1, position $i+1$ equal to $e^{-\epsilon}$, and all other positions are 0. Let $\Pi \in \R^{(2n-2) \times n}$ be the matrix whose rows are given by the $2n-2$ differential privacy constraints $\pi_0^-,\pi_0^+,...,\pi_{n-2}^-,\pi_{n-2}^+$. Given this notation, a column vector $v$ is $\epsilon$-neighbor indistinguishable if and only if $\Pi v \preceq 0_{2n-2}$, where $\preceq$ denotes the element-wise inequality $\le$. Note that a constraint of this form defines a convex cone in $\R^n$. This yields the following standard result from geometry.
 
\begin{obs}
\label{obs:conic_combination}
    Any conic combination of $\epsilon$-neighbor indistinguishable vectors is also $\epsilon$-neighbor indistinguishable.  
\end{obs}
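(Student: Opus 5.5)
The plan is to prove Observation \ref{obs:conic_combination} directly from the linear description of $\epsilon$-neighbor indistinguishability: a column vector $v$ satisfies Definition \ref{defn:neighbor_indist} if and only if $\Pi v \preceq 0_{2n-2}$. Let $v^{(1)},\dots,v^{(m)}$ be $\epsilon$-neighbor indistinguishable vectors and let $\lambda_1,\dots,\lambda_m \ge 0$. Set $w = \sum_{k=1}^m \lambda_k v^{(k)}$. The goal is to show $\Pi w \preceq 0_{2n-2}$, and then convert this back into the pairwise inequalities of Definition \ref{defn:neighbor_indist}.

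The key step uses the linearity of multiplication by $\Pi$:
\[
\Pi w = \sum_{k=1}^m \lambda_k \, \Pi v^{(k)} .
\]
Each vector $\Pi v^{(k)}$ has all entries $\le 0$ by hypothesis. Multiplying a nonpositive vector by a nonnegative scalar keeps every entry nonpositive, and a finite sum of entrywise nonpositive vectors is again entrywise nonpositive. Hence $\Pi w \preceq 0_{2n-2}$.

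Reading the rows of this inequality gives the required conditions. The row $\pi_i^+$ yields $w_i - e^{\epsilon} w_{i+1} \le 0$, and the row $\pi_i^-$ yields $-w_i + e^{-\epsilon} w_{i+1} \le 0$. Together these say $e^{-\epsilon} w_{i+1} \le w_i \le e^{\epsilon} w_{i+1}$ for every $0 \le i \le n-2$, which is exactly Definition \ref{defn:neighbor_indist}.

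For a reader who prefers not to go through the matrix $\Pi$, the same argument can be done coordinatewise. Multiply each inequality $e^{-\epsilon} v^{(k)}_{i+1} \le v^{(k)}_i \le e^{\epsilon} v^{(k)}_{i+1}$ by $\lambda_k \ge 0$, which preserves the direction of the inequalities, and sum over $k$. There is no genuine obstacle here. The only point needing care is that the coefficients are nonnegative, since a negative $\lambda_k$ would reverse the inequalities. The empty combination gives $w = 0$, which satisfies the conditions trivially.
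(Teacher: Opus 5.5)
Your proof is correct and takes the same approach as the paper. The paper writes $\epsilon$-neighbor indistinguishability as $\Pi v \preceq 0_{2n-2}$ and notes that this set is a convex cone. Your linearity argument, or its coordinatewise version, is exactly the check that this cone is closed under nonnegative combinations.
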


Similarly, a count mechanism $T$ is $\epsilon$-differentially private if and only if $\Pi T \preceq 0_{(2n-2) \times n}$. Next, we restrict the set of such mechanisms by adding a fixed-point constraint.

\begin{defn}
\label{defn:F_defined}
     Let $F_{\epsilon}[z] \subset U_\epsilon$ denote the set of all $\epsilon$-differentially private count mechanisms with fixed point $z$. When $\epsilon$ and $z$ are clear from context, we omit them and write $F$ instead. 
\end{defn}

We follow the convention of Holohan et al. and refer to a \textit{convex polyhedron} as the intersection of finitely many closed half-spaces, and refer to a bounded convex polyhedron as a \textit{convex polytope} \cite{holohan2017extreme}. We have defined both $U$ and $F$ as a finite set of linear constraints on $\R^{n \times n}$. Moreover, both sets are bounded because each element of a count mechanism $T$ must be between $0$ and $1$. Hence $U$ and $F$ are both convex polytopes. 

A fact from geometry is that any convex polytope can be characterized by its extreme points. We recall that $A$ is an \textit{extreme point} of polytope $P$ if $A$ cannot be written as a non-trivial convex combination of elements of $P$. That is, $A$ cannot be written as $A = \alpha X + (1-\alpha)Y$ for some distinct $X,Y \in P$ and $\alpha \in (0,1)$. We denote the set of extreme points of a convex polytope $P$ as $ex(P)$. We therefore view $ex(U)$ and $ex(F)$ as representatives of desirable classes of count mechanisms. We now aim to describe them in a way that builds intuition.

%%%%%%%%%%%%%%%%%%%%%%%%%%%%%%%%%%%%%%%%%%%%%%%%%%%%%%%%%%%%%%%
\subsection{$\epsilon$-Scales as Building Blocks}

We now introduce the important mathematical object we call a scale. This is a probability vector in which a differential privacy constraint binds for every pair of consecutive entries. Let $\Delta_n$ denote the set of probability column vectors of length $n$, $\Delta_n = \{v \in \R^{n}_{\ge 0} \text{ : } v^\top\mathds{1}_n = 1\}$.

\begin{defn}
\label{defn:scale}
We define an $\epsilon$-\textit{scale} as a probability vector $s \in \Delta_n$ such that for all $i \in \{0,...,n-2\}$, either $s_i = e^{\epsilon}s_{i+1}$ or $s_i = e^{-\epsilon}s_{i+1}$.
\end{defn}

Using our algebraic notation, we can equivalently define an $\epsilon$-scale as a vector $s \in \R^n$ such that $\mathds{1}_n^\top s = 1$ and either $\pi_i^+ s = 0$ or $\pi_i^- s = 0$ for all $i \in \{0,...,n-2\}$.

An elementary counting argument shows that there are $k = 2^{n-1}$ $\epsilon$-scales. For the remainder of this paper, we fix a matrix $\Psi \in \R^{n \times k}_{\ge 0}$ whose columns are the $k$ $\epsilon$-scales in any arbitrary order. In our notation, the $j^{th}$ scale is denoted $\Psi_j$, for $j = 0,...,k-1$.

Lemma \ref{lem:U_col_scales} shows that scales can be viewed as building blocks of each column of any matrix in $U$.

\begin{lem}
\label{lem:U_col_scales}
    If $\ T \in U$, then every column of $T$ can be written as a conic combination of $\epsilon$-scales.
\end{lem}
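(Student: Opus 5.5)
The plan is to view the set of $\epsilon$-neighbor indistinguishable vectors as the polyhedral cone $C=\{v\in\R^n : \Pi v \preceq 0_{2n-2}\}$. I will show that $C$ is pointed and that its extreme rays are exactly the rays spanned by $\epsilon$-scales. The lemma then follows from the Minkowski--Weyl theorem for pointed polyhedral cones: every element of such a cone is a conic combination of vectors on its extreme rays.

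\textbf{Step 1: basic facts about $C$.} Since $\epsilon>0$, every $v\in C$ is non-negative, as noted after Definition \ref{defn:neighbor_indist}. Moreover, if some $v_i=0$, the constraint $e^{-\epsilon}v_{i+1}\le v_i$ forces $v_{i+1}=0$, and $v_{i-1}\le e^{\epsilon}v_i$ forces $v_{i-1}=0$. Propagating this along the chain shows that every nonzero $v\in C$ has all entries strictly positive. Consequently $C\cap(-C)=\{0\}$, so $C$ is pointed. By Minkowski--Weyl, $C$ is then the conic hull of finitely many extreme rays.

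\textbf{Step 2: identifying the extreme rays.} This is the main step. A nonzero $v\in C$ spans an extreme ray if and only if the constraints active at $v$ have rank $n-1$. For a fixed index $i$, the two constraints $\pi_i^+v=0$ and $\pi_i^-v=0$ cannot both hold when $v\neq 0$: together they give $v_i=e^{\epsilon}v_{i+1}=e^{-\epsilon}v_{i+1}$, hence $v_i=v_{i+1}=0$, which contradicts Step 1. So each index $i\in\{0,\dots,n-2\}$ contributes at most one active constraint. Reaching rank $n-1$ therefore requires exactly one active constraint for every $i$, that is, either $v_i=e^{\epsilon}v_{i+1}$ or $v_i=e^{-\epsilon}v_{i+1}$ for each $i$. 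Rescaling $v$ by $1/(\mathds{1}_n^\top v)>0$ turns it into a probability vector satisfying Definition \ref{defn:scale}, so it is an $\epsilon$-scale.

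Conversely, the $n-1$ active constraints of a scale are linearly independent: their coefficient matrix is bidiagonal, with each row having a nonzero entry in position $i$ and the next in position $i+1$. So every scale does span an extreme ray, although only the forward direction is needed for the lemma.

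\textbf{Step 3: conclusion.} Take $T\in U$. By Definition \ref{defn:U}, each column $T_j$ lies in $C$. If $T_j=0$, it is the empty (all-zero-coefficient) conic combination of scales. Otherwise, Steps 1 and 2 together with Minkowski--Weyl express $T_j$ as a non-negative combination of $\epsilon$-scales, i.e. of columns of $\Psi$.

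The only delicate point is the rank argument in Step 2, specifically ruling out that both constraints at some index are active. Step 1 handles this. If one prefers to avoid citing Minkowski--Weyl, an alternative is a constructive peeling argument: repeatedly subtract the largest multiple of a suitably chosen scale that keeps the remainder in $C$, and show that each subtraction makes at least one new constraint bind, so the process terminates after at most $n-1$ steps.
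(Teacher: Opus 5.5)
Your proof is correct and is essentially the paper's argument. The paper works with the normalized cross-section $Q=\{t\in\Delta_n : \Pi t\preceq 0_{2n-2}\}$ of your cone rather than with the cone itself. It identifies the extreme points of $Q$ as exactly the $\epsilon$-scales by the same binding-constraint count you use: strict positivity of entries rules out both constraints being active at one index, and rules out any active non-negativity constraint. It then writes each column as a non-negative multiple of a convex combination of those scales, which is equivalent to your extreme-ray/Minkowski--Weyl formulation. As you note, only the inclusion ``extreme $\Rightarrow$ scale'' is needed for this lemma, and your counting yields it without Lemma \ref{lem:linear_indep_constraint_matrix}, which the paper invokes to obtain the full equality of extreme points and scales.
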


Since $F \subset U$, every column of matrices in $F$ can be written as conic combinations of scales as well.

\begin{cor}
\label{cor:F_col_scales}
    If $\ T \in F$, then every column of $T$ can be written as a conic combination of $\epsilon$-scales.
\end{cor}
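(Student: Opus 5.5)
The corollary follows directly from Lemma \ref{lem:U_col_scales}. By Definition \ref{defn:F_defined}, $F = F_\epsilon[z]$ is defined as a subset of $U_\epsilon$: it consists of those $\epsilon$-differentially private count mechanisms that additionally satisfy $zT = z$. So I will take an arbitrary $T \in F$ and note that $T \in U$. Lemma \ref{lem:U_col_scales} then says every column $T_j$ can be written as $T_j = \sum_{l=0}^{k-1} c_{l} \Psi_l$ with coefficients $c_l \ge 0$, which is exactly the claim.

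The fixed-point constraint plays no role here. The statement is only about representing each column separately, and that needs nothing beyond the $\epsilon$-neighbor indistinguishability of each column, which every element of $U$ has. I do not expect any step to be an obstacle: all the content is in Lemma \ref{lem:U_col_scales}, which I am allowed to assume.

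For completeness, if the lemma itself had to be justified, I would argue as follows. The set of $\epsilon$-neighbor indistinguishable vectors is the pointed polyhedral cone $\{v : \Pi v \preceq 0\}$. Pointedness follows because membership forces $v \succeq 0$. Such a cone is generated by its extreme rays. An extreme ray needs $n-1$ linearly independent active constraints, and each pair of consecutive coordinates can have at most one binding constraint (either $\pi_i^+$ or $\pi_i^-$). Hence every pair must have exactly one binding, so after normalizing by $\mathds{1}_n^\top v$ each extreme ray is an $\epsilon$-scale. In that argument, the only delicate step would be checking that the constraints $\pi_i^+$ and $\pi_i^-$ cannot both bind at a nonzero vector.
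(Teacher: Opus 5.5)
Your proposal is correct and matches the paper's argument: the corollary follows from $F \subseteq U$ by applying Lemma~\ref{lem:U_col_scales} to each column, and the fixed-point constraint plays no role. Your optional sketch of the lemma via extreme rays of the pointed cone $\{v : \Pi v \preceq 0\}$ is the homogenized form of the paper's proof, which instead intersects that cone with the simplex and shows the extreme points of the resulting polytope are exactly the $\epsilon$-scales.
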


Hinting at the utility of scales, we begin by highlighting a result that characterizes certain extreme points of $U$ in terms of $\epsilon$-scales.

\begin{thm}
\label{thm:nonzero_U}
    For $T \in U$ with all positive entries, $T \in ex(U)$ if and only if the columns of $T$ are linearly independent multiples of $\epsilon$-scales. 
\end{thm}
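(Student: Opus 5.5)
The plan is to use the standard characterization of extreme points of a polytope: $T \in ex(U)$ if and only if there is no nonzero direction $D \in \R^{n\times n}$ such that $T \pm \delta D \in U$ for some small $\delta>0$. Because every entry of $T$ is positive, the nonnegativity constraints are slack, and so is every privacy inequality that does not bind. Small perturbations therefore only need to respect three things: the row-sum equalities $D\mathds{1}_n = 0$, and, for each column $j$, every privacy constraint that binds at $T_j$, which must hold with equality, namely $\pi_i^{\pm} D_j = 0$.

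One observation is used throughout. For a positive column $v$ and a fixed pair $(i,i+1)$, at most one of $\pi_i^+ v = 0$ and $\pi_i^- v = 0$ can hold, since $e^{\epsilon}\neq e^{-\epsilon}$ when $\epsilon>0$. Let $b_j$ be the number of binding constraints in column $j$, so $b_j \le n-1$. These binding rows are linearly independent, since each involves a distinct pair of coordinates $(i,i+1)$ and the rows form a bidiagonal system. Hence the space $W_j = \{w : \pi D_j = 0 \text{ for all binding } \pi\}$ has dimension $n - b_j \ge 1$. Moreover, $b_j = n-1$ exactly when $T_j$ is a positive multiple of an $\epsilon$-scale, and in that case $W_j = \mathrm{span}(s_j)$, where $s_j$ is that scale.

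For ($\Leftarrow$), suppose $T_j = c_j s_j$ with $s_0,\dots,s_{n-1}$ linearly independent. Any admissible direction has $D_j = \lambda_j s_j$, and the row-sum condition $D\mathds{1}_n=0$ becomes $\sum_j \lambda_j s_j = 0$. Independence forces every $\lambda_j = 0$, so $D=0$ and $T$ is extreme.

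For ($\Rightarrow$), assume $T$ is extreme. The admissible directions form the kernel of the linear map $\bigoplus_j W_j \to \R^n$ given by $D \mapsto D\mathds{1}_n$, whose domain has dimension $\sum_j (n-b_j)$.
\begin{itemize}
\item Suppose some column is not a multiple of a scale. Then that column has $n - b_j \ge 2$, and every column has $n-b_j\ge 1$, so the domain has dimension at least $n+1 > n$. The kernel is then nontrivial, and $T\pm\delta D\in U$ for small $\delta$, contradicting extremality. So every column is $c_j s_j$ for some scale $s_j$.
\item Suppose the scales $s_j$ are linearly dependent, say $\sum_j\lambda_j s_j=0$ with $\lambda \ne 0$. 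Then $D_j=\lambda_j s_j$ is a nonzero admissible direction, again a contradiction.
\end{itemize}
In particular, this also shows that the columns must use $n$ distinct scales.

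The main care point is the perturbation argument. We must verify that the binding constraints are exactly the ones that need equality, which uses strict positivity and the slack inequalities. We must also verify that the binding rows in a column are independent, so that $\dim W_j = n-b_j$ exactly. Once these two facts are in place, both directions reduce to the dimension count above.
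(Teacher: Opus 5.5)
Your proof is correct and follows essentially the same route as the paper. Positivity allows at most one binding privacy constraint per adjacent pair, and counting binding constraints then forces every column to be a scale multiple. The column-rescaling perturbation $D_j=\lambda_j s_j$ (the paper's $T\,\mathrm{diag}(\theta)$) rules out dependence. For the converse, the binding constraints force any perturbation of column $j$ to stay proportional to $s_j$. The only difference is packaging: you phrase all steps as a single feasible-direction dimension count on $\bigoplus_j W_j \to \R^n$, whereas the paper uses the ``$n^2$ linearly independent tight constraints'' vertex criterion for the first step and explicit convex decompositions for the others.
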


Given this result, one might wonder whether $\epsilon$-scales can be leveraged to give a compact description of the extreme points of $F$. This turns out to be true; however, additional machinery is required, which we develop in the remainder of this section. 

%%%%%%%%%%%%%%%%%%%%%%%%%%%%%%%%%%%%%%%%%%%%%%%%%%%%%%%%%%%%%%%
\subsection{Representing $F$ with $\epsilon$-Scales}

As we showed in the prior subsection,  every matrix $T \in F$ can be formed from $\epsilon$-scales. To proceed further, we require a more precise accounting of how much of each scale contributes to each column. 

By Corollary \ref{cor:F_col_scales}, every column of $T \in F$ can be expressed as a conic combination of scales. This means that there exist non-negative $b_{i,j}$ such that $T_j = \sum_{i=0}^{k-1} b_{i,j} \Psi_i$. In other words, $b_{i,j}$ quantifies the amount of scale $i$ that enters column $j$ of $T$. Arranging these $b_{i,j}$ into a $k \times n$ matrix $B$, we have $T = \Psi B$. We refer to $B$ as a \textit{representation matrix}, and denote the set of all such $B$ as $R_F$. A more mathematically precise definition is given below.

\begin{defn}
\label{defn:RF_definition}
    We define the \textit{fixed point representation polytope}\footnote{$R_F$ is a polytope, as it is bounded and formed from a finite number of linear constraints.} $R_{F}$ as
$$
R_{F} = \big\{B \in \R^{k \times n}_{\ge 0} \text{ : }  \Psi B \mathds{1}_n = \mathds{1}_n \text{ and } z\Psi B = z \big\}
$$
\end{defn}

For the remainder of this paper, we slightly overload notation and use $\Psi$ to refer to both the matrix $\Psi$ and the linear map formed by left multiplication by $\Psi$. The next theorem shows that $\Psi$ relates any matrix in $F$ to at least one matrix in $R_{F}$. 

\begin{thm}
\label{thm:RF_to_F_surjection}
    $\Psi $ is an affine surjection from $R_{F}$ to $F$.
\end{thm}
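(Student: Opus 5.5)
The plan is to verify three things in turn: that left multiplication by $\Psi$ is affine, that it sends $R_F$ into $F$, and that every element of $F$ is hit. Affineness is immediate, since $B \mapsto \Psi B$ is a linear map from $\R^{k\times n}$ to $\R^{n\times n}$, and its restriction to the convex set $R_F$ is affine. So the substance lies in the other two steps.

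\textbf{Well-definedness ($\Psi(R_F)\subseteq F$).} Take $B \in R_F$ and set $T = \Psi B$. Each column is $T_j = \sum_i b_{i,j}\Psi_i$ with $b_{i,j}\ge 0$. Every $\epsilon$-scale is $\epsilon$-neighbor indistinguishable, since its consecutive ratios equal exactly $e^{\pm\epsilon}$. By Observation \ref{obs:conic_combination}, each $T_j$ is therefore $\epsilon$-neighbor indistinguishable; in particular $T_j$ is entrywise nonnegative, because $\Psi\ge 0$ and $B\ge 0$. The constraint $\Psi B\mathds{1}_n=\mathds{1}_n$ says exactly that $T\mathds{1}_n=\mathds{1}_n$, so $T$ is a transition matrix and hence $T\in U$. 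The constraint $z\Psi B = z$ says exactly that $zT=z$, so $T\in F$.

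\textbf{Surjectivity.} Take $T\in F$. By Corollary \ref{cor:F_col_scales}, each column $T_j$ can be written as $\sum_i b_{i,j}\Psi_i$ with $b_{i,j}\ge 0$. Collect these coefficients into $B\in\R^{k\times n}_{\ge 0}$, so that $\Psi B = T$. Then $\Psi B\mathds{1}_n = T\mathds{1}_n=\mathds{1}_n$ and $z\Psi B = zT = z$, which gives $B\in R_F$. The only nontrivial ingredient here is Corollary \ref{cor:F_col_scales}, which may be assumed. The main point needing care is therefore just confirming that scales are neighbor indistinguishable, so that Observation \ref{obs:conic_combination} applies; this follows directly from Definition \ref{defn:scale}.

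Finally, I would remark that the map is generally not injective. The $k=2^{n-1}$ scales live in $\R^n$ and so are linearly dependent, which is consistent with the phrase ``not necessarily unique'' in the text.
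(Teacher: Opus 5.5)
Your proposal is correct and follows essentially the same route as the paper. Both arguments prove the two inclusions: $\Psi(R_F)\subseteq F$ via Observation \ref{obs:conic_combination} together with the two defining constraints of $R_F$, and $F\subseteq\Psi(R_F)$ by collecting column-wise conic coefficients into $B$ (the paper cites Lemma \ref{lem:U_col_scales} through $F\subset U$, which is exactly your Corollary \ref{cor:F_col_scales}). One small caveat, affecting only your closing aside: the non-injectivity claim needs $n\ge 3$, since for $n\le 2$ the $k=2^{n-1}\le n$ scales are linearly independent.
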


Thus, the surjection $\Psi$ allows us to express a matrix $T \in F$ using a matrix $B$ in the new space $R_{F}$ where the neighbor-indistinguishability constraints within each column of $T$ are replaced by linear constraints on the rows and columns of $B$. In Proposition \ref{prop:extreme_surjection_R_to_F}, we show the linear map $\Psi$ also provides a relation between the extreme points of $F$ and the extreme points of $R_{F}$. 

\begin{prop}
\label{prop:extreme_surjection_R_to_F}
    For all $T \in ex(F)$ there exists $B \in ex(R_{F})$ such that $T = \Psi B$. 
\end{prop}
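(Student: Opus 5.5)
The plan is to use a standard fact about affine images of polytopes: the preimage of a face is a face. Fix $T \in ex(F)$. By Theorem \ref{thm:RF_to_F_surjection}, the fiber
$$
G = \{B \in R_F \text{ : } \Psi B = T\}
$$
is nonempty. Since $G$ is the intersection of the polytope $R_F$ with finitely many linear equalities, it is itself a nonempty polytope. It is bounded because $R_F$ is, so $G$ has at least one extreme point. Let $B$ be an extreme point of $G$. Then $\Psi B = T$, and it remains to show that $B \in ex(R_F)$.

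Suppose instead that $B = \alpha X + (1-\alpha) Y$ for some distinct $X, Y \in R_F$ and $\alpha \in (0,1)$. The map $B \mapsto \Psi B$ is linear, so
$$
T = \Psi B = \alpha \Psi X + (1-\alpha)\Psi Y .
$$
By Theorem \ref{thm:RF_to_F_surjection}, both $\Psi X$ and $\Psi Y$ lie in $F$. Since $T$ is an extreme point of $F$, this forces $\Psi X = \Psi Y = T$. (If $\Psi X \neq \Psi Y$, then $T$ would be a nontrivial convex combination of distinct points of $F$, which is impossible.) Hence $X, Y \in G$, and $B$ is a nontrivial convex combination of distinct points of $G$. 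This contradicts the choice of $B$ as an extreme point of $G$, so $B \in ex(R_F)$.

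The only step needing any care is the existence of an extreme point of $G$. This is where nonemptiness (from the surjectivity in Theorem \ref{thm:RF_to_F_surjection}) and compactness are used. Compactness holds because $G$ is closed and sits inside the bounded set $R_F$, so Krein--Milman, or simply the fact that a nonempty polytope has a vertex, applies. Alternatively, one can pick $B$ to be a maximizer over $G$ of a strictly convex function, or take a lexicographic extreme point. The rest of the argument is purely formal and uses only linearity of $\Psi$ together with the fact that $\Psi$ maps $R_F$ into $F$.
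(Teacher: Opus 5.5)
Your proof is correct. It differs from the paper's arguments mainly in how it is organized.

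The paper's second proof starts from an arbitrary preimage $W \in R_F$ of $T$. It writes $W$ as a convex combination of extreme points $B^{(1)},\dots,B^{(l)}$ of $R_F$, and then uses extremality of $T$ to conclude that every $B^{(i)}$ maps to $T$. You instead work inside the fiber $G=\{B\in R_F : \Psi B = T\}$ and pick a vertex of it. You then use extremality of $T$ to show that no convex splitting of that vertex within $R_F$ can leave $G$, which amounts to showing that $G$ is a face of $R_F$.

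The decisive step is shared: extremality of $T$ forces both pieces of any convex decomposition into the fiber. Your version is slightly lighter, since it needs only that a nonempty polytope has a vertex, not a convex-hull representation of points of $R_F$. It also yields a bit more. The extreme points of $R_F$ that map to $T$ are exactly the extreme points of $G$, so you describe all extreme representations of $T$ rather than exhibiting one.

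The paper's first proof implicitly reaches the same fiber from the optimization side. If $T$ is the unique maximizer of $\langle c,\cdot\rangle$ on $F$, then $G$ is exactly the set of maximizers of $\langle \Psi^\top c,\cdot\rangle$ on $R_F$, so $G$ is even an exposed face, and any extreme maximizer works. That route relies on the supporting-functional characterization in Lemma~\ref{lem:linear_alg_opt}, but it is the viewpoint the paper reuses in Theorem~\ref{thm:ex_preserve_surjection}.
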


Hence, every extreme point in $F$ can be written as an extreme point in $R_{F}$ under the mapping $\Psi$. However, this representation is not necessarily unique. In particular, it is possible for multiple extreme points of $R_{F}$ to map to the same extreme point in $F$. Additionally, it is possible for an extreme point of $R_{F}$ to map to a non-extreme point of $F$ under $\Psi$. We demonstrate both phenomena when $n=3$ in Example \ref{ex:unpreserved_extreme_RF_to_F} of Appendix \ref{app_details_F}. Additionally, in Corollary \ref{cor:k_perp_characterization_F} of Appendix \ref{app_geom_ext}, we provide a geometric condition on extreme points of $R_F$ that indicates exactly when they are mapped to an extreme point of $F$.

%%%%%%%%%%%%%%%%%%%%%%%%%%%%%%%%%%%%%%%%%%%%%%%%%%%%%%%%%%%%%%%
\subsection{Characterizing the Extreme Points of $R_{F}$}

Up to this point, we have argued that differentially private count mechanisms with fixed point $z$ are represented by the extreme points of $F$, and there is a relationship between these extreme points and those of the representation polytope $R_F$. Next, we characterize the extreme points of $R_{F}$ in terms of the scales they encode.

Given a column vector $x \in \R^k$, let $H(x)$ be the set of vectors
$$
h = (z\Psi_u)^{-1}\Psi_u - (z\Psi_v)^{-1}\Psi_v
$$
where $v$ is the smallest index such that $x_v > 0$ and $u$ is another index such that $x_u > 0$.\footnote{Note that $(z\Psi_j)^{-1}$ is well-defined for all $j \in \{0,...,k-1\}$, since $z\Psi_j > 0$ because $z$ is a probability row vector and $\Psi_j$ is an $\epsilon$-scale.} Note that, under this definition, if a vector $x$ has either 0 or 1 positive entry, $H(x) = \emptyset$. Also, for $h \in H(x)$, $zh = 0$ by routine computation.

\begin{defn}
    \label{defn:psi_aff_simple} We call a $k \times n$ matrix $B$ \textit{$\Psi-$affinely simplified} if the additive union\footnote{The \textit{additive union} of sets $X$ and $Y$ is the multiset denoted as $A \uplus B$, where the multiplicity of $s \in X \uplus Y$ equals the sum of the multiplicity of $s$ in $X$ and of $s$ in $Y$ \cite{blizard1989multiset}.} $\uplus_{j=0}^{n-1} H(B_j)$ is linearly independent. 
\end{defn}

Let $M_{F}(\Psi)$ be the set of $k \times n$ matrices $\mu$ such that $z\Psi \mu = 0_n^\top$ and $\Psi \mu \mathds{1}_n = 0_n$. By construction of $M_{F}(\Psi)$, if $X,Y \in R_{F}$, then $X-Y \in M_{F}(\Psi)$. Conceptually, $M_{F}$ represents ``movement matrices'' that can potentially be added to matrices in $R_{F}$ to move through the affine space. The following lemma and theorem formalize the equivalence between the extreme points of $R_{F}$, the notion of $\Psi$-affinely  simplified matrices, and these movement matrices.

\begin{lem}
\label{lem:psi_simple_relationship_movement_matrices}
    Suppose $B \in R_{F}$. Then $B$ is $\Psi$-affinely  simplified if and only if there does not exist a non-zero matrix $\mu \in M_F(\Psi)$ such that $\mu_{i,j}$ is zero whenever $b_{i,j}$ is zero.
\end{lem}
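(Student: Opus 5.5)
The plan is to parametrize the movement matrices supported on $\mathrm{supp}(B)$ column by column. Under this parametrization, the constraint $\Psi\mu\mathds{1}_n = 0_n$ becomes exactly a linear relation among the vectors of the multiset $\uplus_{j=0}^{n-1} H(B_j)$. The hypothesis $B\in R_F$ plays no role beyond fixing the zero pattern of $B$; only the support sets $S_j = \{i : b_{i,j} > 0\}$ matter. Since $B\ge 0$, "$\mu_{i,j}=0$ whenever $b_{i,j}=0$" means exactly that each column $\mu_j$ is supported in $S_j$.

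\textbf{Step 1 (single column).} Fix a column $j$. The condition $z\Psi\mu = 0_n^\top$ decouples over columns: it says $z\Psi\mu_j = 0$ for each $j$, i.e.\ $\sum_{i\in S_j} (z\Psi_i)\,\mu_{i,j} = 0$. Every coefficient $z\Psi_i$ is strictly positive (footnote to the definition of $H$), so this single functional is nonzero on $\R^{S_j}$ whenever $S_j \neq \emptyset$. Hence the admissible $\mu_j$ form a subspace $W_j$ of dimension $|S_j|-1$ (and $W_j = \{0\}$ if $|S_j|\le 1$). Let $v_j$ be the smallest index in $S_j$. I will check that the vectors
$$g_{u,j} = (z\Psi_u)^{-1}e_u - (z\Psi_{v_j})^{-1}e_{v_j}, \qquad u\in S_j\setminus\{v_j\},$$
lie in $W_j$ and are linearly independent. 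Independence holds because each $g_{u,j}$ is the only one with a nonzero entry at position $u$. There are $|S_j|-1$ of them, so they form a basis of $W_j$. Moreover $\Psi g_{u,j}$ is exactly the element $h\in H(B_j)$ indexed by $u$.

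\textbf{Step 2 (assemble).} By Step 1, every $\mu$ satisfying $z\Psi\mu=0_n^\top$ with the support condition can be written uniquely as $\mu_j = \sum_{u} c_{u,j}\, g_{u,j}$ for each $j$. Also $\mu \neq 0$ if and only if some coefficient $c_{u,j}\neq 0$, since the $g_{u,j}$ form a basis in each column. The remaining constraint is
$$\Psi\mu\mathds{1}_n = \sum_j \Psi\mu_j = \sum_j\sum_u c_{u,j}\,\Psi g_{u,j} = 0_n.$$
This says the coefficient family $(c_{u,j})$, indexed over the additive union, gives a vanishing linear combination of the elements of $\uplus_j H(B_j)$.

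\textbf{Step 3 (conclude).} A nonzero admissible $\mu$ exists if and only if the multiset $\uplus_j H(B_j)$ admits a nontrivial vanishing linear combination, i.e.\ is linearly dependent. Contrapositively, $B$ is $\Psi$-affinely simplified if and only if no such $\mu$ exists.

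The only delicate point is bookkeeping in the multiset. The same vector $h$ may occur in $H(B_j)$ and $H(B_{j'})$ for $j\neq j'$. Such a repetition must count as a dependence, and it does: $\mu_j = g$ and $\mu_{j'} = -g'$ (with the corresponding basis elements) give a nonzero $\mu$. This is why the definition uses the additive union rather than the ordinary union. I would also make sure Step 1 covers the edge cases $|S_j|\in\{0,1\}$ explicitly, where $W_j = \{0\}$ and $H(B_j)=\emptyset$ consistently.
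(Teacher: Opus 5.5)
Your proof is correct and follows essentially the paper's argument. The paper's two contrapositive computations are exactly the two directions of your coordinate map with respect to the basis $g_{u,j}$ of $W_j$: in one direction it rewrites $\Psi\mu_j = \sum_{i \in D(j)} (z\Psi_i)\,\mu_{i,j}\, h^j_i$, and in the other it builds $\mu$ from the dependency coefficients $\alpha_{x,j}$. Your single column-wise isomorphism is slightly cleaner, since nontriviality of the coefficients follows directly from the basis property, whereas the paper gets it by locating a column of $\mu$ with two nonzero entries.
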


\begin{thm}
\label{thm:extreme_f_characterization_general}
    Given scale matrix $\Psi$, $B \in ex(R_{F})$ if and only if $B \in R_{F}$ is $\Psi$-affinely simplified.
\end{thm}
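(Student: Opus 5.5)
The plan is to reduce the theorem to Lemma \ref{lem:psi_simple_relationship_movement_matrices}, using the standard support characterization of extreme points of a polyhedron in the nonnegative orthant. Recall that
$$
R_F=\{B\ge 0 : \Psi B\mathds{1}_n=\mathds{1}_n,\ z\Psi B=z\}
$$
is cut out by affine equalities together with the sign constraints $b_{i,j}\ge 0$. Moreover, the difference of any two points of $R_F$ lies in $M_F(\Psi)$, the linear space of matrices satisfying the homogeneous versions of these equalities.

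The first step is to prove the following claim: $B\in R_F$ is extreme if and only if there is no nonzero $\mu\in M_F(\Psi)$ supported inside the support of $B$. Here "supported inside" means $\mu_{i,j}=0$ whenever $b_{i,j}=0$.

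For the direction "not extreme $\Rightarrow$ such $\mu$ exists", suppose $B=\alpha X+(1-\alpha)Y$ with $X\neq Y$ in $R_F$ and $\alpha\in(0,1)$. Take $\mu=X-B$, which lies in $M_F(\Psi)$ and is nonzero. If $b_{i,j}=0$, then $\alpha x_{i,j}+(1-\alpha)y_{i,j}=0$ with both terms nonnegative. Hence $x_{i,j}=0$, and so $\mu_{i,j}=0$.

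For the converse, suppose such a nonzero $\mu$ exists. Every entry of $\mu$ with $\mu_{i,j}\neq0$ has $b_{i,j}>0$. So for
$$
0<t<\min\{\, b_{i,j}/|\mu_{i,j}| : \mu_{i,j}\neq 0\,\}
$$
both $B+t\mu$ and $B-t\mu$ are nonnegative. They also satisfy both equality constraints, because $\mu\in M_F(\Psi)$. They are distinct, and $B$ is their midpoint, so $B$ is not extreme.

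The second step combines the claim with Lemma \ref{lem:psi_simple_relationship_movement_matrices}. That lemma states that, for $B\in R_F$, being $\Psi$-affinely simplified is equivalent to the nonexistence of such a $\mu$. Chaining the two equivalences gives the theorem.

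The argument is routine once the lemma is granted. The one point to handle carefully is the choice of $t$: it must make both perturbations nonnegative using only the entries where $\mu\neq0$, and this is exactly why the support condition is needed. The real content (linear independence of $\uplus_j H(B_j)$ versus the movement matrices) sits entirely in the lemma.

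If we must also justify the lemma, the plan is as follows. Write $\mu_j=\sum_u c_{u,j}e_u$ on the support of $B_j$. The condition $z\Psi\mu=0$ says $\sum_u c_{u,j}\, z\Psi_u=0$ for each $j$. Rescaling coordinates by $z\Psi_u$ turns the space of such column vectors into the span of the differences $(z\Psi_u)^{-1}\Psi_u-(z\Psi_v)^{-1}\Psi_v$ once $\Psi$ is applied, with $v$ the minimal support index. Then $\Psi\mu\mathds{1}_n=0$ becomes a vanishing linear combination of elements of $\uplus_j H(B_j)$. A nontrivial $\mu$ corresponds to a nontrivial dependence, giving the equivalence.
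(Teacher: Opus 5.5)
Your proof is correct and is essentially the paper's proof: it uses the perturbation $B\pm\delta\mu$ with $\delta$ below $\min b_{i,j}/|\mu_{i,j}|$ for one direction, and for the other it takes a difference of points from a nontrivial convex combination as the support-respecting movement matrix, with both directions chained through Lemma~\ref{lem:psi_simple_relationship_movement_matrices}. The only cosmetic difference is that you use $X-B$ where the paper uses $X-Y$; both are nonzero elements of $M_F(\Psi)$ supported inside the support of $B$.
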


Hence, the extreme points of our representation space are those that are $\Psi$-affinely simplified. Additionally, Theorem \ref{thm:extreme_f_characterization_general} can be used to quantify combinatorial properties of $B \in ex(F)$, such as the number of positive entries.

\begin{cor}
\label{cor:nonzero_RF}
    Let $\mathcal{P}$ denote set of indices where $z$ is positive. Then extreme matrices in $R_{F}$ contain at least $|\mathcal{P}|$ and at most $|\mathcal{P}| + n - 1$ positive entries.
\end{cor}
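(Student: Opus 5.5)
The plan is to get both bounds from the column structure of $B$. The lower bound uses only the fixed-point constraint. The upper bound comes from Theorem \ref{thm:extreme_f_characterization_general}, which turns extremality into linear independence of the multiset $\uplus_{j} H(B_j)$, followed by a dimension count.

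First, I determine exactly which columns of $B \in R_F$ are nonzero. The constraint $z\Psi B = z$ reads, column by column, $\sum_{i} b_{i,j}\,(z\Psi_i) = z_j$. As noted in the footnote defining $H(x)$, $z\Psi_i > 0$ for every scale $i$. Every $b_{i,j}$ is also non-negative.
\begin{itemize}
\item If $j \notin \mathcal{P}$, then $z_j = 0$, which forces $b_{i,j}=0$ for all $i$, so $B_j = 0$.
\item If $j \in \mathcal{P}$, then $z_j > 0$, which forces some $b_{i,j} > 0$.
\end{itemize}
So the nonzero columns of $B$ are exactly those indexed by $\mathcal{P}$. Each such column has at least one positive entry, which already gives the lower bound of $|\mathcal{P}|$ positive entries, for every $B \in R_F$ and in particular for extreme ones.

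For the upper bound, let $c_j$ be the number of positive entries in column $B_j$. By the definition of $H$, when $c_j \ge 1$ the set $H(B_j)$ has exactly $c_j - 1$ elements: one for each positive index $u$ other than the smallest one $v$. These elements are distinct, since their $\Psi_u$ components differ. Hence the multiset $\uplus_{j\in\mathcal{P}} H(B_j)$ has $\sum_{j \in \mathcal{P}} (c_j - 1)$ elements, counted with multiplicity. By Theorem \ref{thm:extreme_f_characterization_general}, if $B \in ex(R_F)$ then $B$ is $\Psi$-affinely simplified, so this multiset is linearly independent. A linearly independent multiset cannot contain repeats, so its size equals the dimension of its span.

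It remains to bound that dimension. Every $h \in H(x)$ satisfies $zh = 0$, as noted after the definition of $H$. Since $z$ is a nonzero row vector, $\{h \in \R^n : zh = 0\}$ is a subspace of dimension $n-1$. Therefore
\[
\sum_{j\in\mathcal{P}} (c_j - 1) \le n - 1,
\]
and the total number of positive entries is $\sum_{j\in\mathcal{P}} c_j \le |\mathcal{P}| + n - 1$.

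The only delicate step is this dimension count. One has to check that the elements of $H(B_j)$ are counted correctly, $c_j-1$ per nonzero column, and that they all lie in the hyperplane $z^\perp$ rather than merely in $\R^n$. Working in $\R^n$ would only give the weaker bound $|\mathcal{P}| + n$.
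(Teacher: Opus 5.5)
Your proof is correct and follows the same route as the paper's. The lower bound comes from $z\Psi B_j = z_j > 0$ for $j \in \mathcal{P}$, and the upper bound from the linear independence (Theorem~\ref{thm:extreme_f_characterization_general}) of the $\sum_{j}(c_j-1)$ vectors of $\uplus_j H(B_j)$ inside the $(n-1)$-dimensional subspace $\{h : zh = 0\}$. One small improvement: you show explicitly that $B_j = 0$ for $j \notin \mathcal{P}$ (from $z\Psi_i > 0$ and $b_{i,j} \ge 0$), a step the paper's count relies on without stating.
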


Thus the extreme points of $F$ can be represented using relatively few scales. This suggests that constructor algorithms $\mathcal{C}$ that work at the granularity of scales could be used to build an extreme point of $F$ efficiently. We showcase an example of this in the next section.

\section{Constructor Algorithms}
\label{algorithmic}
In this section, we turn our attention to the design of the constructor algorithm. This component of the two-stage counting framework outputs a single count mechanism from within a set of allowable count mechanisms. We first discuss fixed-point constructors before describing the two-staged unfixed optimum constructor.

\subsection{Fixed-Point Constructors}

The goal of a fixed-point constructor is to accept target distribution $z$ and output a single mechanism inside $F_\epsilon[z]$. Recall from the introduction that we identified three performance criteria for a two-stage counting framework. The first criterion, accuracy of distribution, is supported by the inclusion of a fixed-point constraint. Moreover, in the experiments we describe later, we observe that all fixed-point constructors we create behave very similarly with respect to this criterion. We therefore set accuracy of distribution aside and focus the following discussion on the other criteria: accuracy of counts and runtime.

%When designing a fixed-point constructor, we primarily stress two of these three: accuracy of counts and runtime. The reason for this is that the final criterion, accuracy of distribution, is already being supported by the inclusion of a fixed-point constraint. 

Beginning with accuracy of counts, we consider a natural class of count-error measures -- those that are linear in the coefficients of the transition matrix. Given a count mechanism $T$, and an $n \times n$ matrix of weights $W$, we define the count error of $T$ under $W$ with the Frobenius inner product  $\langle W,T\rangle = tr[W^\top T] =\sum_{i=0}^{n-1}\sum_{j=0}^{n-1} w_{i,j} t_{i,j}$. For example, $w_{ij} = z_i |i-j|$ yields expected absolute deviation:
$$\langle W,T\rangle = \sum_{i=0}^{n-1}\sum_{j=0}^{n-1} z_i |i-j| t_{i,j} =\E_{i \sim z} \E_{j \sim \text{Row $i$ of $T$}}\big[|i-j|\big] 
$$
Similarly, setting $w_{ij} = z_i (i-j)^2$ encodes mean squared error. Since $F_\epsilon[z]$ is a polytope, the optimum of such a function can be achieved at one of the extreme points $ex(F_\epsilon[z])$ (Lemma \ref{lem:linear_alg_opt}).

Given a linear count-error function, it is useful to divide the task of choosing a mechanism in $F$ into two cases. First, when $n$ is small, a mechanism that minimizes count error can be found using standard linear programming algorithms. These include open-source and commercial variants of the simplex algorithm and the interior point method. We refer to a constructor algorithm that minimizes count error over $F$ as a \textit{count-error-minimizing fixed-point constructor}.

For larger values of $n$, the runtime of linear programming algorithms can be prohibitively large. For situations like these, we provide a heuristic algorithm that outputs a single extreme point of $F_\epsilon[z]$. Our algorithm is inspired by our scale representation, but only requires polynomial space, and can run as fast as $O(n^2)$. 

\subsection{Heuristic Constructor Algorithm}

To motivate our algorithm, we begin with a piece of intuition. To make the output count close to the input count, we want the probability near the diagonal to be high, and the probability far away from the diagonal to be low. This leads us to the notion of a \textit{single-peaked scale at position $j$}, which increases to a single peak at row $j$, then decreases until the end. The following definitions make this notion precise.
\begin{defn}
A \textit{pattern} is an $(n-1)$-tuple $p \in \{-1,+1\}^{n-1}$. We say an $\epsilon$-scale $s$ \textit{has pattern} $p$ if, for all $j \in \{0,...,n-2\}$, $p_j = 1$ if $s_j< s_{j+1}$, and $p_j = -1$ otherwise.
\end{defn}
\begin{defn}
\label{defn:single-peaked}
    The \textit{single-peaked pattern} at $j \in \{0,...,n-1\}$ is the pattern $p$ such that $p_i = 1$ for all $i < j$ and $p_i = -1$ otherwise. We will refer to an $\epsilon$-scale with a single-peaked pattern at $j$ as a \textit{single-peaked scale at position $j$}.
\end{defn}
One might hope that a mechanism $T$ can be constructed solely from singled-peaked scales. This is possible for $T \in U$, but is not possible for $T \in F$ in general given the fixed-point constraint. However, single-peaked scales can provide a useful starting point for constructing a count mechanism in $F$. This informs the design of Algorithm \ref{alg:heuristic}.

The core idea underlying our algorithm is as follows. Imagine starting with the $n \times n$ zero matrix $A$ and then adding scales to it until the row sum and fixed-point constraints of $F$ are fulfilled. By Observation \ref{obs:conic_combination}, adding positive multiples of $\epsilon$-scales always maintains $\epsilon$-neighbor indistinguishability. As we add scales, we can keep track of how much still needs to be added to each row, $r = \mathds{1}_n - A \mathds{1}_n $, and how much still needs to be added to each column, $c = z - z A$. At the end of the algorithm, we must have $r = 0_n$ and $c=0_n^\top$. If at any point when adding scales, an element of $c$ becomes negative, it will not be possible to arrive at $c = 0_n^\top$ by adding more scales. Similarly, if at any point $r$ becomes not $\epsilon$-neighbor-indistinguishable, it will not be possible to add more scales to end with $r=0_n$ (Proposition \ref{prop:max_compute} quantifies how much of a scale $s$ can be added without triggering one of these two conditions).

Applying this idea, Algorithm \ref{alg:heuristic} proceeds iteratively, filling one column at a time. When operating on a particular column $j$, if it is possible to add some multiple of the single-peaked scale at position $j$ while preserving the $\epsilon$-neighbor-indistinguishability of $r$, we greedily add as much of the scale as possible. Otherwise, we adjust the single-peaked scale as necessary so that it matches the pattern of $r$ at any row where a privacy bound holds for $r$. This guarantees that we are able to add some positive multiple of the scale to $A$. The details can be observed in the proof of Theorem \ref{thm:heur_halts}. We repeat the process of adding scales to column $j$ until the column is at capacity ($c_j = 0$).

When execution begins, and any time a column has been filled to capacity, the algorithm must select a new column $j$ to operate on. The selection of a new column is delegated to a function $\kappa$, which is an input parameter. The only requirement for the results of this section is that $\kappa$ always returns a column index $j$ with $c_j > 0$. For example, $\kappa$ could select the column $j$ for which $c_j$ is maximal and non-zero. In principle, one could choose many other options for $\kappa$, which may result in varying runtimes and have a significant effect on the accuracy of counts. We observe this phenomenon for three possible choices for $\kappa$ in the experiments of Section \ref{experiments}. 

\begin{algorithm}
%\DontPrintSemicolon % Some LaTeX compilers require you to use \dontprintsemicolon    instead
\KwIn{
\begin{itemize}
    \item Fixed point $z$
    %\vspace{-3mm}
    \item Privacy parameter $\epsilon$
    %\vspace{-3mm}
    \item Column selection function $\kappa$ 
\end{itemize}
}
%\vspace{-3mm}
%\KwOut{$n \times n$ matrix}
\vspace{2mm}
Initialize an $n \times n$ matrix $A$ of 0's, $r = \mathds{1}_n$, and $c = z$\\
\While{$r \ne 0_n$ (equivalently, $c \ne 0_n^\top$)}{
Select a column $j = \kappa(z, r, c, A)$ such that $c_j > 0$.\\
\While{$c_j > 0$} {
Assign $p$ as the single-peaked pattern for $j$.\\
Adjust $p$ as follows: for any $i$ which $\frac{r_{i+1}}{r_{i}} = \exp(-p_i \epsilon)$, reassign $p_i \leftarrow -p_i$.\\
Let $s$ be the $\epsilon$-scale with pattern $p$.\\
Compute $q = \max\{\gamma \text{ : } \gamma zs \le c_j \text{ and }r - \gamma s \text{ is $\epsilon$-neighbor indistinguishable}\}$.\\
Update $A_j \leftarrow A_j + qs$, $r \leftarrow r - qs$, and $c_j \leftarrow c_j - q z s$.\\
}
}
\textbf{Return} $A$
\caption{Heuristic Constructor Algorithm}
\label{alg:heuristic}
\end{algorithm}

The following two theorems demonstrate that Algorithm \ref{alg:heuristic} can run efficiently and outputs an extreme point of $F$.

\begin{thm} 
\label{thm:heur_halts}
If the worst-case runtime of selection algorithm $\kappa$ is $\mathcal{O}(\tau)$, then the worst-case runtime of Algorithm \ref{alg:heuristic} is $\mathcal{O}\left(\max\{\tau n, n^2\}\right)$.
\end{thm}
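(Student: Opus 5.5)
The proof bounds two things: the total number of inner-loop iterations, and the cost of each one. I will show that each inner iteration costs $\mathcal{O}(n)$, that the outer loop runs at most $n$ times, and that the total number of inner iterations over the whole run is $\mathcal{O}(n)$. Together these give $\mathcal{O}(n\tau + n\cdot n)$.

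\textbf{Cost per iteration.} In the inner loop, building the single-peaked pattern and adjusting it by comparing $r_{i+1}/r_i$ against $e^{\pm\epsilon}$ takes $\mathcal{O}(n)$. Constructing $s$ from its pattern also takes $\mathcal{O}(n)$: take cumulative products of $e^{\pm\epsilon}$, then normalize. The quantities $zs$ and the update of $A_j$, $r$ and $c_j$ are each $\mathcal{O}(n)$. By Proposition~\ref{prop:max_compute}, $q$ is a minimum over $\mathcal{O}(n)$ explicit ratios: the bound $c_j/(zs)$, plus one bound per consecutive pair $(i,i+1)$ at which $r-\gamma s$ would violate a privacy inequality. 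So $q$ is also computed in $\mathcal{O}(n)$.

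\textbf{Well-definedness and the outer loop.} Before counting, I check that each step is well defined and makes progress. I will show by induction that $r$ stays $\epsilon$-neighbor indistinguishable, that $c\succeq 0$, and that $\mathds{1}_n^\top r$ equals the total mass $\sum_j c_j$; the last holds because $z$ is a probability vector and $zr=\sum c_j$. The adjustment step makes the scale match $r$ at every binding constraint. Then $r-\gamma s$ keeps every binding constraint tight, so it stays feasible there, while the slack constraints are strict, so they tolerate some $\gamma>0$. Hence $q>0$. For the outer loop, each exit from the inner loop fills a column to capacity ($c_j=0$), and a filled column is never chosen again because $\kappa$ only returns indices with $c_j>0$. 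So there are at most $n$ outer iterations, contributing $\mathcal{O}(n\tau)$ for the calls to $\kappa$.

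\textbf{Counting inner iterations (the main obstacle).} Each inner iteration stops for one of two reasons: $q$ is limited by $c_j$, or by a new privacy constraint of $r$ becoming binding. The first kind happens at most once per column, so at most $n$ times in total. For the second kind, the plan is a potential argument: once a constraint $i$ of $r$ binds, it remains binding for the rest of the run. The reason is that every later scale is forced to match $r$'s pattern at $i$, so $r_i$ and $r_{i+1}$ keep the exact ratio $e^{\pm\epsilon}$. This tightness persists even when $r$ has zero entries, since both sides are scaled together. There are only $n-1$ constraints, so the second kind of stop happens at most $n-1$ times. When all $n-1$ constraints bind, $r$ is a multiple of a scale, and that scale is exactly the one the algorithm picks. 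Adding it either exhausts $c_j$ or leaves $r$ as a multiple of the same scale, which is again handled by the $c_j$ count. The subtle points are two. First, I must verify that a binding constraint cannot become unbound when $r$ reaches $0$ in some coordinates. Second, I must verify that termination, meaning $r=0_n$ exactly when $c=0_n^\top$, follows from the mass identity. With both in hand, the total number of inner iterations is at most $2n-1$, and the total runtime is $\mathcal{O}(n\tau + n^2)=\mathcal{O}(\max\{\tau n,n^2\})$.
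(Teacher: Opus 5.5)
Your proposal is correct and follows the paper's proof. The paper uses exactly your potential: the number of columns with $c_j>0$ plus the number of non-binding privacy constraints of $r$. It shows binding constraints persist because the adjusted pattern matches $r$, concluding at most $2n-1$ inner iterations of $\mathcal{O}(n)$ cost plus at most $n$ calls to $\kappa$. One slip to fix: the invariant is $zr=\sum_j c_j$, not $\mathds{1}_n^\top r=\sum_j c_j$ (initially $\mathds{1}_n^\top r=n$ while $\sum_j c_j=1$), but the correct identity, which you also state, is all your termination and $q>0$ arguments need.
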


\begin{thm} 
\label{thm:heur_output_extreme_F}
 Algorithm \ref{alg:heuristic} outputs a matrix in $ex(F)$. 
\end{thm}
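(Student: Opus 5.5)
The plan is to prove extremality directly: suppose $A$ is the output of Algorithm \ref{alg:heuristic}, $D$ is an $n\times n$ matrix, and $A\pm D\in F$. We will show $D=0$.

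\textbf{Membership and supports.} First, $A\in F$. Each update adds a positive multiple of an $\epsilon$-scale, so by Observation \ref{obs:conic_combination} every column stays $\epsilon$-neighbor indistinguishable. Termination is guaranteed by Theorem \ref{thm:heur_halts}, and the loop ends only when $r=0_n$ and $c=0_n^\top$, which are exactly the row-sum and fixed-point constraints. Second, the standard fact about perturbations: any linear inequality of $F$ that is tight at $A$ must be tight at $A+D$ and at $A-D$, so $D$ satisfies it with equality. In particular, $D_j$ vanishes wherever $A_j$ does. Likewise, if a privacy constraint $\pi_i^{\pm}A_j=0$ binds, then $\pi_i^{\pm}D_j=0$. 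Also $D\mathds{1}_n=0_n$ and $zD=0_n^\top$.

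\textbf{Induction over columns.} Let $j_1,j_2,\dots,j_m$ be the order in which $\kappa$ selects columns. Write $A_{j_t}=\sum_\ell q_\ell s^{(\ell)}$, where the $s^{(\ell)}$ are the scales added during the inner loop, and let $r^{(t)}$ be the residual row vector just before column $j_t$ is processed. We will show by induction on $t$ that $D_{j_t}=0$.

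The key invariant comes from line 6. Once $r$ becomes tight at some pair $(i,i+1)$, every later scale is chosen with the matching pattern at $i$. Subtracting it therefore keeps $r$ tight at $i$, and the constraint stays tight until $r=0$. As a result, the remaining columns $\sum_{t'>t}A_{j_{t'}}=r^{(t+1)}$ have a binding privacy constraint in every column at each such $i$. This holds at least where they are nonzero; where they are zero, the support condition controls $D$ instead.

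\textbf{The inner step.} Each inner step terminates because the maximal $q$ hits a constraint. In the first case, $c_j$ hits $0$, which gives a fixed-point equation. In the second case, $r-qs$ becomes tight at a new pair $i$. That pair is then frozen for all subsequent columns, and hence (through $D\mathds{1}_n=0$, after earlier columns have been shown to vanish) it constrains $D_{j_t}$ as well. Counting, each column $j_t$ accumulates binding privacy equalities on $D_{j_t}$ at every pair where its added scales agree in pattern. Its remaining freedom is spanned by the differences of distinct scales used in the column. The newly frozen pairs of $r$, together with the equation $zD_{j_t}=0$ inherited from $c_{j_t}=0$, then kill these differences.

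The main obstacle is this last linear-algebra step. We must show that the scales used within one column, together with the constraints frozen after it, leave no nonzero $D_{j_t}$. In effect, the family of weighted scale differences must be independent in the sense of Definition \ref{defn:psi_aff_simple}. I would first try to prove that each new scale added to a column differs from the previous one only at the pairs that became tight during the previous step. This makes the differences triangular with respect to the order in which pairs froze, and hence linearly independent. Triangularity combined with the frozen constraints would give $D_{j_t}=0$ and close the induction.
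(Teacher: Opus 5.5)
Your framework is sound: $A\in F$, the perturbation $A\pm D\in F$, induction over the selection order, and the transfer of a frozen constraint of $r^{(t+1)}$ to $D_{j_t}$ through the identity $\sum_{l\in\text{post}(j_t)}(A\pm D)_l=r^{(t+1)}\mp D_{j_t}$, whose left side is $\epsilon$-neighbor indistinguishable. The gap is the step that actually forces $D_{j_t}=0$. You leave it open, and the route you propose for it targets the wrong object. Independence of the weighted scale differences is $\Psi$-affine simplification (Definition \ref{defn:psi_aff_simple}). That condition characterizes $ex(R_F)$, not $ex(F)$. Example \ref{ex:unpreserved_extreme_RF_to_F} shows an extreme point of $R_F$ whose image under $\Psi$ is not extreme in $F$, so proving it would not close your induction. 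Pattern-level triangularity also does not carry over to the vectors themselves, since flipping one pattern entry changes every coordinate after normalization.

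The missing idea is a pair-by-pair covering argument, which needs no triangularity. Fix $t<m$ and a pair $i\in\{0,\dots,n-2\}$. If $r^{(t+1)}$ is tight at $i$, your transfer argument gives $\pi_i D_{j_t}=0$ for the matching $\pi_i\in\{\pi_i^+,\pi_i^-\}$. If it is not, then by persistence $r$ was never tight at $i$ while column $j_t$ was being filled, so line 6 never flipped $p_i$. Every scale added to that column then has the single-peaked orientation at $i$, so $A_{j_t}$ is tight there and again $\pi_i D_{j_t}=0$. Thus $D_{j_t}$ satisfies one privacy equation at every pair, i.e.\ $(D_{j_t})_{i+1}=e^{\pm\epsilon}(D_{j_t})_i$ for all $i$. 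So $D_{j_t}$ is a multiple of a strictly positive $\epsilon$-scale $s$, and since $zD_{j_t}=0$ and $zs>0$, we get $D_{j_t}=0$. The last column then follows from $D\mathds{1}_n=0_n$, and unselected columns have $A_j=0$, hence $D_j=0$.

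With this step supplied, your proof is a valid alternative to the paper's. The paper instead shows (Lemma \ref{lem:first_mismatch}) that $O_j$ uniquely maximizes the diagonal entry over all completions agreeing on earlier columns, which makes $O_j$ extreme in $\Gamma_j$. It then argues at the first selected column where two witnesses $X,Y$ differ. Both proofs rest on persistence of tightness and on neighbor indistinguishability of the residual; yours replaces the paper's case analysis with a rank count.
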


\smallskip

\noindent \textbf{Selection Guidelines for $\kappa$ in Practice.} The reader might wonder how to choose a column selector in order to have the lowest count error for a given dataset. Surprisingly, once a data holder generates the privatized distribution of counts $z$, they can then compute the count error analytically for any number of columns selectors without incurring additional privacy loss. 

The rationale is an iterated application of the post-processing guarantee. First, distribution $z$ is the only input to Algorithm \ref{alg:heuristic} that is computed from the underlying data. Therefore, when Algorithm \ref{alg:heuristic} with a column selector $\kappa$ produces an output $T$ using $z$, no additional privacy loss is incurred. Second, the metrics we have chosen to represent count error $\langle W, T \rangle$ depend only on $z$ and $T$, so they can be computed without additional privacy loss. This argument holds for any number of columns selectors $\kappa$ tested. Hence, a data holder can create a transition matrix for every available column selector and keep the one with the lowest count error without incurring additional privacy loss beyond the  $\epsilon_1$ used to generate $z$.

%%%%%%%%%%%%%%%%%%%%%%%%%

\subsection{Two-Stage Unfixed Optimum Constructor}

While the main focus of this section was the creation of fixed-point constructors, we will want to compare our techniques against a set of baselines that do not enforce a fixed-point constraint. One such baseline can be developed by adapting our two-stage framework. We envision a constructor algorithm that minimizes count error over the unfixed polytope $U$ instead of $F$. Even though $U$ does not depend on the distribution $z$, the distribution privatization step is still required since common measures of count error depend on the distribution of counts (e.g., expected total deviation is given by $\langle W, T\rangle$, with $w_{ij} = z_i |i-j|$). For that reason, we now develop a constructor that selects a count mechanism from $U$. This will form the basis of our \textit{two-stage unfixed optimum}. 

As was the case for $F$, it would theoretically be possible to minimize count error over $U$ with standard linear programming algorithms, but these would take impractically long for some of the datasets we study. We therefore seek to leverage the structure of polytope $U$ to find the lowest count error point more efficiently.

First, in Appendix \ref{app_characterizing_U}, we show that count mechanisms in $U$ can be represented using scales in a similar way as those in $F$, yielding a new representation polytope, $R_U$ (Theorem \ref{thm:RU_to_U_surjection}).  Next, in Appendix \ref{app_ghosh_scale} we leverage a result from Ghosh et al. \cite{ghosh2009universally}, which tells us exactly which scales make up the optimal point in $U$ for a natural class of count error measures, which we call \textit{row-wise concentrating}.

\begin{defn}
    \label{defn:row_wise_concentrating}
    A weight matrix $W \in \R^{n \times n}_{\ge 0}$ is \textit{row-wise concentrating} if $w_{i,j}$ is non-decreasing in $|i-j|$ for each $i \in \{0,...,n-1\}$.
\end{defn}

Theorem \ref{thm:ghosh_which_scales} tells us that the scales that make up a count mechanism in $U$ that minimizes a row-wise concentrating weight matrix are exactly the single peaked scales. We let $\Sigma$ be the $n \times n$ matrix, where column $j$ is the single-peaked scale at position $j$. 

We will further restrict attention to a narrower class of weight matrices, which will enable certain optimizations, stated below.

\begin{defn}
    A weight matrix $W$ is \textit{row-wise convex} if for every $i \in \{0,...,n-1\}$ and every $j \in \{0,...,n-2\}$, $w_{i,j+1} - w_{i,j}$ is non-decreasing in $j$.
\end{defn}

Note that commonly used count-error measures, including expected absolute deviation and mean squared error, are row-wise concentrating and row-wise convex. Row-wise convexity is useful because it ensures that once we find a column for a single-peaked scale that minimizes count error in a local sense, it is also optimal over all columns (Lemma \ref{lem:weight_matrix_scale_convex}). Putting these ideas together, Algorithm \ref{alg:ghosh_scale} places the single peaked scales into a transition matrix one at a time. 

\begin{algorithm}
%\DontPrintSemicolon % Some LaTeX compilers require you to use \dontprintsemicolon    instead
\KwIn{
\begin{itemize}
    \item Privacy parameter $\epsilon$
    \item Weight matrix $W$ that is both row-wise concentrating and row-wise convex
    %\vspace{-3mm}
\end{itemize}
}
%\vspace{-3mm}
%\KwOut{$n \times n$ matrix}
\vspace{2mm}
Initialize $\Sigma$ and  $\{\omega_l\}_{l=0}^{n-1}$ as described in Observation \ref{obs_ghosh_weights}.\\
Initialize an $n \times n$ matrix $A$ of 0's, $j=0$, and $l=0$. \\
\While{$l \le n-1$}{
$current\_error \leftarrow \omega_l (W_j)^\top \Sigma_l$ \\
\uIf{$j < n-1$} {
$next\_error \leftarrow \omega_l (W_{j+1})^\top \Sigma_l$
}
\Else {
$next\_error \leftarrow \infty$
}
\uIf{$next\_error > current\_error$} {
$A_j \leftarrow A_j + \omega_l \Sigma_{l}$\\
$l \leftarrow l+1$
}
\Else {
$j \leftarrow j + 1$
}
}
\textbf{Return} $A$
\caption{Two-Stage Unfixed Optimum Constructor}
\label{alg:ghosh_scale}
\end{algorithm}

The following two theorems establish that the two-stage unfixed optimum runs efficiently and outputs a count mechanism that minimizes count error.

\begin{thm}
\label{thm:unfixed_runtime}
The worst-case runtime of Algorithm \ref{alg:ghosh_scale} is $\mathcal{O}(n^2)$.
\end{thm}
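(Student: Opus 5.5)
The plan is to bound the work in three parts: the initialization on line 1, the number of iterations of the \texttt{while} loop, and the cost of one iteration.

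\textbf{Initialization.} Each single-peaked scale $\Sigma_l$ has the closed form $\Sigma_l(i) \propto e^{-\epsilon|i-l|}$. Its normalizing constant is a sum of two geometric series, so it can be computed in $\mathcal{O}(1)$, and writing out the $n$ entries costs $\mathcal{O}(n)$. Filling all of $\Sigma$ therefore costs $\mathcal{O}(n^2)$. For the weights $\omega_l$ I would rely on Observation \ref{obs_ghosh_weights}. They are determined by the row-sum constraint $\Sigma\omega = \mathds{1}_n$ with $\Sigma$ built from single-peaked scales, and this should reduce to a tridiagonal or explicit closed-form solve costing $\mathcal{O}(n)$, or at worst $\mathcal{O}(n^2)$. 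Clearing the matrix $A$ and setting the counters costs another $\mathcal{O}(n^2)$. This is the step I expect to be the main obstacle, because I have not seen how Observation \ref{obs_ghosh_weights} defines $\omega_l$. If computing the weights turned out to require a general linear solve, this step would break the $\mathcal{O}(n^2)$ bound. I would therefore first write $\omega$ down explicitly from the geometric structure, using the fact that $\Sigma^{-1}$ is tridiagonal.

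\textbf{Number of iterations.} Every iteration either increments $l$ (the \texttt{if} branch) or increments $j$ (the \texttt{else} branch), and neither counter ever decreases. The loop ends once $l = n$, so there are at most $n$ increments of $l$. The index $j$ never exceeds $n-1$: when $j = n-1$ we have $next\_error = \infty > current\_error$, since the errors are finite, so the \texttt{if} branch is forced. Hence $j$ is incremented at most $n-1$ times, and there are at most $2n-1$ iterations. The same observation shows that $W_{j+1}$ is only accessed when $j+1 \le n-1$, so the loop is well defined and halts.

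\textbf{Cost per iteration.} Each iteration computes at most two inner products of length-$n$ vectors, $\omega_l (W_j)^\top\Sigma_l$ and $\omega_l (W_{j+1})^\top\Sigma_l$, and performs at most one column update $A_j \leftarrow A_j + \omega_l\Sigma_l$. Each of these costs $\mathcal{O}(n)$, so an iteration costs $\mathcal{O}(n)$.

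Multiplying, the loop costs $\mathcal{O}(n)\cdot\mathcal{O}(n) = \mathcal{O}(n^2)$. Adding the $\mathcal{O}(n^2)$ initialization gives the stated worst-case bound of $\mathcal{O}(n^2)$.
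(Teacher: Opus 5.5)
Your proof is correct and follows the paper's argument essentially step for step: each iteration increments either $l$ or $j$, the $\infty$ sentinel keeps $j$ from passing $n-1$ (so there are at most about $2n$ iterations), each iteration costs $\mathcal{O}(n)$ for its inner products, and initialization costs $\mathcal{O}(n^2)$. The obstacle you flag does not arise, because Observation~\ref{obs_ghosh_weights} already gives each $\omega_l$ explicitly as a constant times $\sum_{i=0}^{n-1} e^{-\epsilon|i-l|}$ (the same form your tridiagonal-$\Sigma^{-1}$ derivation produces), so computing all the weights costs at most $\mathcal{O}(n^2)$; your charging of $\mathcal{O}(n)$ to the column update $A_j \leftarrow A_j + \omega_l \Sigma_l$ is slightly more careful than the paper, which treats that update as atomic.
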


\begin{thm}
    \label{thm:unfixed_optimal}
    Algorithm \ref{alg:ghosh_scale} outputs a count mechanism $O \in ex(U)$ that minimizes count error. That is, for all $\hat O \in U$, $\langle W, O \rangle \le \langle W, \hat O \rangle$.
\end{thm}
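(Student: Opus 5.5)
The proof splits into two parts. First, Algorithm \ref{alg:ghosh_scale} outputs the matrix $O=\Sigma\,\mathrm{diag}(\omega)\,G$, where $G$ is a $0/1$ assignment sending each single-peaked scale $l$ to a column $j(l)$. Second, this $O$ lies in $ex(U)$ and minimizes $\langle W,\cdot\rangle$ over $U$. I would lean on three earlier results: the representation polytope $R_U$ and the surjection $\Psi:R_U\to U$ (Theorem \ref{thm:RU_to_U_surjection}), the Ghosh-type result (Theorem \ref{thm:ghosh_which_scales}), and the convexity lemma (Lemma \ref{lem:weight_matrix_scale_convex}).

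\textbf{Reducing to single-peaked scales.} By Lemma \ref{lem:linear_alg_opt}, some minimizer of $\langle W,T\rangle$ over $U$ exists in $ex(U)$. By Theorem \ref{thm:ghosh_which_scales}, when $W$ is row-wise concentrating, some minimizer is built only from single-peaked scales. Because the row sums must equal $\mathds{1}_n$, the total weight of each $\Sigma_l$ is forced to be the value $\omega_l$ of Observation \ref{obs_ghosh_weights}. This holds because the $\Sigma_l$ are linearly independent, and $\Sigma\omega=\mathds{1}_n$ has a unique solution with $\omega\succeq 0$. Each candidate mechanism is then $\Sigma\,\mathrm{diag}(\omega)\,G$, with $G\in\R^{n\times n}_{\ge 0}$ and $G\mathds{1}_n=\mathds{1}_n$. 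Its cost separates by scale:
\[
\langle W,\Sigma\,\mathrm{diag}(\omega)G\rangle=\sum_{l}\sum_j g_{l,j}\,\omega_l (W_j)^\top\Sigma_l .
\]
So the problem decouples: each scale $l$ should be sent entirely to a column minimizing $f_l(j)=\omega_l(W_j)^\top\Sigma_l$.

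\textbf{Checking the greedy walk.} Row-wise convexity of $W$ makes $f_l$ convex in $j$, since it is a nonnegative combination of convex sequences. Lemma \ref{lem:weight_matrix_scale_convex} then says that a local minimizer (the first $j$ with $f_l(j+1)>f_l(j)$) is a global minimizer. Algorithm \ref{alg:ghosh_scale} scans $j$ upward and places $\Sigma_l$ at the first such $j$. The step I expect to be the main obstacle is showing this scan never needs to move backwards, i.e., that the chosen column $j(l)$ is non-decreasing in $l$. My first attempt would be an exchange argument: $f_{l+1}(j+1)-f_{l+1}(j)\le c\,(f_l(j+1)-f_l(j))$ for some $c>0$, using that $\Sigma_{l+1}$ puts relatively more mass on larger rows than $\Sigma_l$ does. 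This is a monotone likelihood-ratio property, since $\Sigma_{l+1,i}/\Sigma_{l,i}$ is non-decreasing in $i$. Combined with the row-wise convexity of $W$, this gives monotone minimizers. If that comparison fails, I would fall back on the observation that ties and non-monotone cases can be resolved by choosing the largest minimizer, which still yields a global optimum. Either way, $O$ attains $\min_U\langle W,\cdot\rangle$.

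\textbf{Extremality.} $O$ has $n$ nonzero columns at most, and each is a sum of positive multiples of single-peaked scales with consecutive indices. I would argue through $R_U$: the representation $B=\mathrm{diag}(\omega)G$ has exactly one nonzero entry per scale row. Any movement matrix with support inside $\mathrm{supp}(B)$ must preserve the row sums $\Psi B\mathds{1}_n$. By linear independence of the $\Sigma_l$ it must therefore vanish, so $B\in ex(R_U)$. To transfer extremality to $O$, suppose $O=\tfrac12(X+Y)$ with $X,Y\in U$. Then each column of $X$ and $Y$ is neighbor-indistinguishable and dominated in support by the corresponding column of $O$. The binding privacy constraints of the scales in column $j$ of $O$ force those columns to lie in the span of the same consecutive block of single-peaked scales. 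The row-sum constraint, together with the unique decomposition $\Sigma\omega=\mathds{1}_n$, then pins $X=Y=O$. If this direct step proves delicate, I would instead invoke the $R_U$ analogue of Corollary \ref{cor:k_perp_characterization_F} to certify that $\Psi B$ is extreme.
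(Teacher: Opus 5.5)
Your proposal is correct. For optimality it follows the paper's skeleton. Theorem~\ref{thm:ghosh_which_scales} reduces the problem to mechanisms built from the single-peaked scales with forced weights $\omega_l$, the cost then separates scale by scale, and the one-directional scan is justified by Lemma~\ref{lem:weight_matrix_scale_convex} (local equals global) plus monotonicity of the rightmost minimizers (Lemma~\ref{lem:unfixed_left_scales_convexity}). Your relaxation to a row-stochastic $G$ is harmless. It only needs \emph{some} minimizer to lie in the family, which is the form that survives degenerate cases such as $W=0$.

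The real difference in this half is how you prove monotonicity. The paper splits the sums at the peak $l$, where $\Sigma_{l+1,i}/\Sigma_{l,i}$ jumps from $\alpha<1$ to $\beta>1$, and then treats $c\le l$ and $c>l$ as separate cases. Your likelihood-ratio route splits at the column instead. Let $e_i=w_{i,j+1}-w_{i,j}$, which is $\ge 0$ for $i\le j$ and $\le 0$ for $i\ge j+1$, and let $\rho_i=\Sigma_{l+1,i}/\Sigma_{l,i}$, which is non-decreasing. Then any $\rho^*\in[\rho_j,\rho_{j+1}]$ gives $f_{l+1}(j+1)-f_{l+1}(j)\le \rho^*\frac{\omega_{l+1}}{\omega_l}\bigl(f_l(j+1)-f_l(j)\bigr)$, with no cases. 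Be precise that this sign pattern comes from row-wise \emph{concentration}. Convexity enters only afterwards, to pass from $f_{l+1}(c_l)\le f_{l+1}(c_l-1)$ to $f_{l+1}(c_l)\le f_{l+1}(j)$ for all $j<c_l$. Your stated fallback of taking the largest minimizer would not repair a failure of monotonicity, since the scan never decreases $j$, but you do not need it.

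Your extremality argument goes beyond the paper. The paper's proof establishes only $O\in\mathcal{T}$ and optimality. Membership in $\mathcal{T}$ does not imply $O\in ex(U)$: the matrix $\Psi B^{(6)}$ in Example~\ref{ex:unpreserved_extreme_RU_to_U} lies in $\mathcal{T}$ but is not extreme. Your face argument closes this gap. Constraints binding at $O$ also bind at $X$ and $Y$, so each nonzero column is confined to the span of its block of scales, and the unique solution of $\Sigma\omega=\mathds{1}_n$ then pins $X=Y=O$. State explicitly that this needs each column's scales to form a \emph{consecutive} block, which holds because the algorithm's $j$ is non-decreasing in $l$. That is exactly what fails for $B^{(6)}$, where $\Sigma_0$ and $\Sigma_2$ share a column while $\Sigma_1$ sits in another. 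As you note, showing $B\in ex(R_U)$ would not suffice on its own.
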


\section{Experiments}
\label{experiments}
In this section, we explore the \textit{practicality} of fixed-point methods for privatizing a table of counts. We perform a set of simulations in which we measure the performance of different privatization methods. Our simulations are grouped into three experiments, corresponding to our three performance criteria: accuracy of distribution, accuracy of counts, and runtime. As we detail in this section, there are some scenarios in which a fixed-point method can dramatically improve accuracy of distribution with only moderate cost to accuracy of counts and runtime; however, considerable nuance can be found through each of our experiments. 

\begin{figure*}[h]
    \centering
    \includegraphics[width=\linewidth]{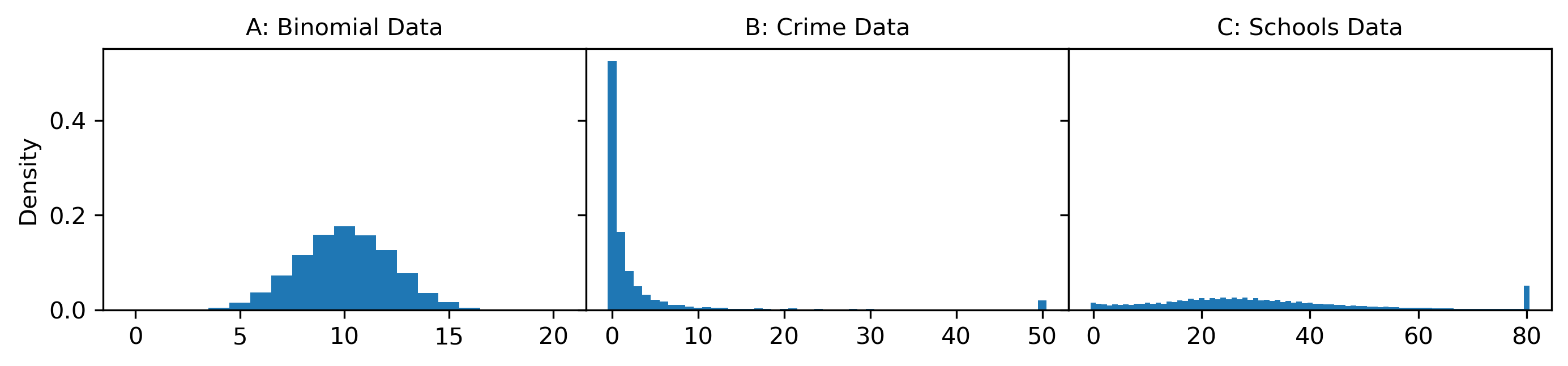}
    \vspace*{0.75mm}
    \caption{Distributions of the three datasets used in our experiments: (A) synthetic draws from a binomial distribution, (B) observed counts of homicides in US counties, and (C) observed counts of teachers in public schools.} 
    \label{fig:distributions}
\end{figure*}

Each of our simulations represents a complete privatization of a table of counts. All experiments were written in Python 3.12.7 and executed on a single machine. We include the following privatization methods.

\smallskip

\noindent \textbf{Fixed-Point Methods.} We implement our two-stage fixed-point framework. In the role of distribution privatizer $\mathcal{D}$, we select the cyclic Laplace mechanism, guided by the favorable results of Section \ref{privatizing_z}. While other choices for $\mathcal{D}$ are possible, fixing the distribution privatizer allows us to focus on the effects of other experimental parameters. As a robustness check, we replicated key experiments with each of the other distribution privatizers from Section \ref{privatizing_z} and observed similar trends. For the constructor algorithm $\mathcal{C}$, we implement the following choices.

\begin{itemize}[leftmargin=*]
    \item The simplex algorithm provided by the open-source SciPy Python library.
    \item The interior point method provided by the open-source SciPy Python library.
    \item Our heuristic algorithm, Algorithm \ref{alg:heuristic}. Furthermore, we explore several options for the column selector $\kappa$.
    \begin{itemize}
        \item Max probability. This selector always chooses the remaining column $j$ with maximum $z_j$.
        \item Min probability. This selector always chooses the remaining column $j$ with minimum $z_j$.
        \item Sandwich. This selector always follows a prescribed pattern: $0, n-1, 1, n-2, 2...$ until every column has been chosen.
    \end{itemize}
\end{itemize}

\smallskip

\noindent \textbf{Unfixed Baselines.} As points of comparison, we implement a set of privatization algorithms that do not enforce a fixed-point constraint.

\begin{itemize}[leftmargin=*]
    \item Two-stage unfixed optimum given in Algorithm \ref{alg:ghosh_scale}.
    \item Truncated geometric of \citet{ghosh2009universally}. This mechanism may not coincide with the output of the two-stage unfixed optimum, because it does not include the post-processing described by Ghosh el al. As such, there is no guarantee that the truncated geometric minimizes a given notion of count error.
    \item Staircase mechanism of \citet{geng2015staircase}, which minimizes count error for a certain class of mechanisms whose outputs are in $\R$. Because this mechanism outputs a real number, we post-process the output by mapping to the nearest integer in $\{0,...,n-1\}$.\footnote{The staircase mechanism includes a parameter $\gamma \in [0,1]$, which dictates the width the center-most step. When $\gamma = 1/2$ the staircase mechanism coincides with the truncated geometric. We follow the authors' recommendation and set $\gamma = (1+\exp(\epsilon/2))^{-1}$ to minimize the expected noise amplitude, resulting in a distinct mechanism.  However, in our experiments, we found that the performance of the staircase mechanism was visually indistinguishable from the truncated geometric.}
    \item Discrete Gaussian of \citet{canonne2020discrete}. This is the integer-analog of the commonly used Gaussian mechanism, and satisfies the weaker privacy notion of approximate $(\epsilon, \delta)$-differential privacy. We set $\delta = 1 / (N+1)$, maximizing the benefit to this baseline while satisfying the convention that $\delta < 1/N$ \cite{dwork2014algorithmic, page2018differential}. We post-process the output of this mechanism to ensure that it lies between $0$ and $n-1$.
\end{itemize}

Note that the latter three mechanisms do not utilize a probability distribution as input. Accordingly, they are implemented as stand-alone algorithms, not part of a two-stage framework.

For each of our privatization methods, we vary the following experimental parameters.

\begin{figure*}[h]
    \centering
    \includegraphics[width=\linewidth]{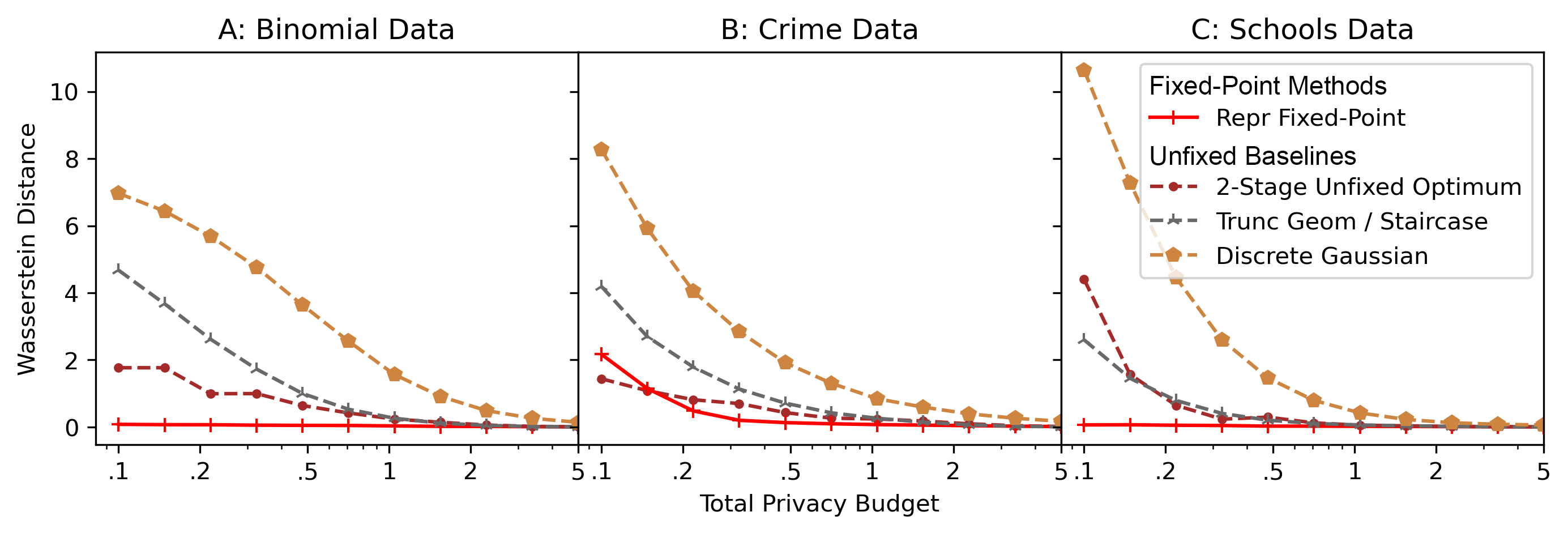}
    \vspace*{-0.15cm}
    \caption{Distribution error under different total privacy budgets on logarithmic scale. All fixed-point constructors were visually overlapping, so we include only the heuristic with sandwich selector to represent all fixed-point constructors.} 
    \label{fig:distribution_accuracy}
\end{figure*}

\smallskip

\noindent \textbf{Input Data.}  We select three datasets to have a range of distributional characteristics (see Figure \ref{fig:distributions}).

\begin{itemize}[leftmargin=*]
    \item 10,000 synthetic draws from a Binomial distribution with size parameter $20$ and probability parameter $1/2$. This is a symmetric distribution that closely approximates a normal distribution.
    \item Real counts of homicides in each US County.\footnote{Downloaded from https://www.kaggle.com/datasets/mikejohnsonjr/united-states-crime-rates-by-county} When conducting accuracy experiments, we top code these counts at 50. This dataset has a considerable right skew (5.83) and a substantial peak at zero (53\% of counties report zero homicides). It is our smallest dataset with 3,136 total counties.
    \item Real counts of full-time teachers in public elementary and secondary education facilities in the US.\footnote{Downloaded from https://public.opendatasoft.com/explore/dataset/us-public-schools/} When conducting accuracy experiments, we top code these counts at 80. This data was published by the US Department of Education for the 2017-2018 school year. This dataset has a mild positive skewness (1.29). It is our largest dataset, representing 95,969 schools.
\end{itemize}

\smallskip

\noindent \textbf{Privacy Parameters.} We set a total privacy budget, $\epsilon_t$, to values ranging from $0.1$ to $5$. As noted in Section \ref{subsec:proposed_workflow}, the two-stage counting framework requires this budget to be divided into two components: $\epsilon_1$ for the distribution privatizer $\mathcal{D}$, and $\epsilon_2$ for the count mechanism $T$, with $\epsilon_t = \epsilon_1 + \epsilon_2$. Let $f = \epsilon_1 / \epsilon_t$ represent the fraction allocated to the distribution privatizer. Setting $f$ is complicated by two stylized facts. First, dividing the privacy budget requires a value judgement, as it trades off between accuracy of counts and accuracy of distribution. Second, both types of accuracy generally depend on the underlying data, and existing methods of estimating these for candidate values of $f$ require considerable privacy budget. For these reasons, we pursue the development of a rule of thumb that is independent of data, and that practitioners can use to set $f$.

We detail the development of our rule of thumb in Appendix \ref{app_additional_figures}. To summarize our process, we begin with a \textit{combined error objective} of the form $\ln(\text{count\_error}(f)) + \ln(\text{distribution\_error}(f))$. This objective has the interpretation that, at the optimum, a multiplicative decrease in one type of error must be matched by an equal (or greater) multiplicative increase in the other. We perform a set of simulations on six new synthetic datasets, varying the total privacy budget, and computing the value of $f$ that minimizes the combined error objective. We observe an inverse relationship between $\epsilon_t$ and the optimal $f$ that appears in all datasets tested. We fit a model on these simulation results, yielding the following rule of thumb.
$$
\hat f(\epsilon_t)  = 0.106 + 0.533\exp{(-2.87 \epsilon_t)}
$$
We employ this rule of thumb  whenever we simulate the two-stage framework. Because the truncated geometric, staircase, and discrete Gaussian mechanisms do not require a two-stage framework, we grant them the entire privacy budget $\epsilon_t$.

\begin{figure*}[h]
    \centering
    \includegraphics[width=\linewidth]{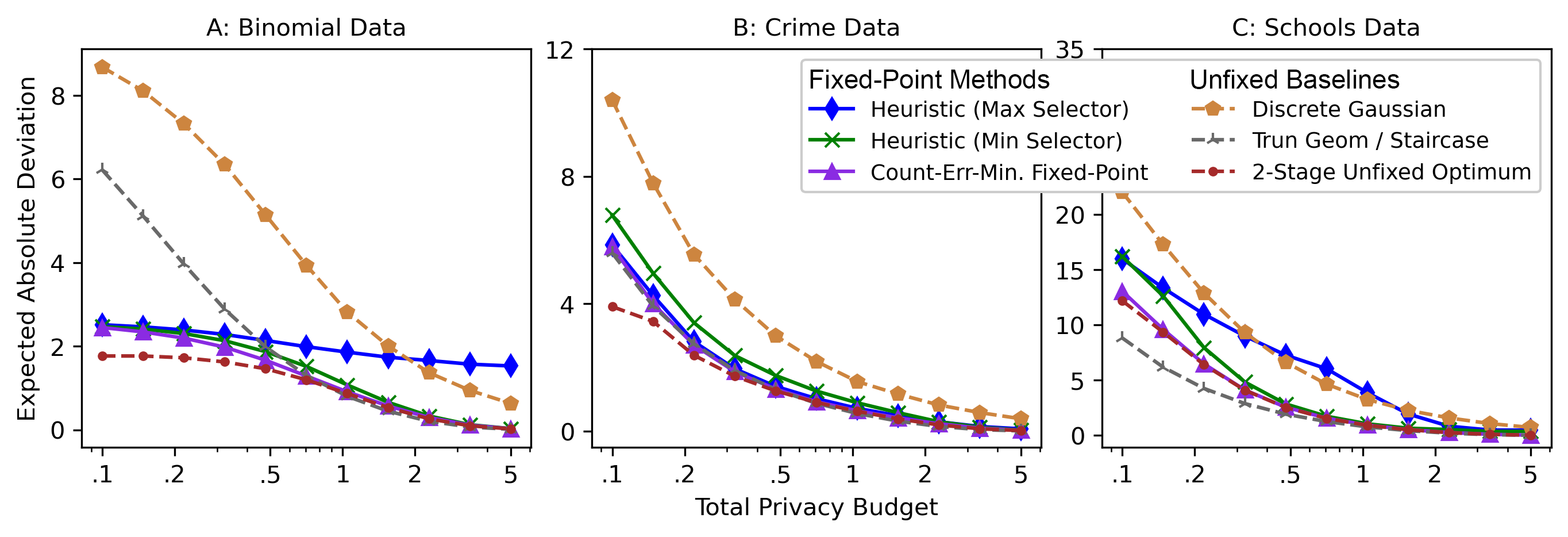}
    \vspace*{-0.15cm}
    \caption{Count error under different total privacy budgets on logarithmic scale. As expected, an unfixed baseline yields lower count error than the fixed-point methods.} 
    \label{fig:noise}
\end{figure*}

\subsection{Accuracy of Distribution}

In our first experiment, we investigate how the choice of privatization method affects accuracy of distribution. We run a set of simulations, varying the privatization  method, dataset, and privacy parameters as described above.  To operationalize accuracy of distribution, we use total variation distance, Kolmogorov–Smirnov (KS) distance, and Wasserstein-1 distance. 
Figure \ref{fig:distribution_accuracy} shows the Wasserstein distance between the output distribution of counts and the true input distribution from a set of simulations; the corresponding plots for KS distance and total variation distance are similar and we omit them. We found that all fixed-point methods were closely overlapping, so we represent them with a single curve. Similarly, the truncated geometric and staircase mechanisms were closely overlapping, so we also represent them with a single curve. 

As seen in Figure \ref{fig:distribution_accuracy}, the introduction of a fixed-point constraint often leads to a substantial improvement in accuracy of distribution. For example, for the binomial dataset at $\epsilon_t = 0.48$, switching from the best performing baseline (two-stage unfixed optimum) to a fixed-point constructor reduces Wasserstein distance from $0.64$ to $0.04$, a $94\%$ decrease in distribution error. For the crime and schools dataset, the corresponding decreases in distribution error are $74\%$  and  $87\%$ respectively. There are exceptions in which an unfixed baseline has slightly better accuracy of distribution than a fixed-point constructor. In particular, for the crime data at $\epsilon = 0.1$, Wasserstein distance is $1.44$ for the two-stage unfixed optimum, but rises to $2.17$ for the fixed-point methods. Aside from the crime data, the fixed point constraint seem to offer the greatest advantage when $\epsilon_t$ is small. In particular, at $\epsilon_t = 0.1$, for the schools dataset, the fixed-point methods feature a Wasserstein distance of $0.06$. The closest baseline is the truncated geometric mechanism, with a Wasserstein distance of $2.60$ -- more than 40-times larger.

Overall, we believe this first experiment presents promising evidence in favor of imposing a fixed-point constraint. While such a constraint does not lead to perfect equality between the output distribution of counts and the true input, it represents a substantial improvement over current practice.

\subsection{Accuracy of Counts}

Our second experiment investigates how our fixed-point methods affect accuracy of counts. As noted earlier, we expect our techniques to involve some performance loss in this dimension. One reason is that they select a count mechanism from the polytope $F$, which is a strict subset of the unfixed polytope $U$. A more intuitive reason is that fixed-point techniques split attention between accuracy of distribution and accuracy of counts, whereas prior techniques focus exclusively on accuracy of counts. While some performance loss is expected, we believe that its magnitude is important. If users are to adopt a fixed-point technique, the loss in accuracy of counts cannot be too large, relative to the gain in accuracy of distribution.

To measure how much fixed-point methods degrade accuracy of counts, we run a set of simulations, varying the privatization method, dataset, and privacy parameters as before. To operationalize count error, we use both expected absolute deviation and mean squared error. Figure \ref{fig:noise} shows the expected absolute deviation from a set of simulations; the corresponding plot for mean squared error is similar so we omit it. Additionally, we omit the curve for the sandwich selector, noting that it closely overlaps with the better of the max or min selector for all datasets. 

We first observe that the two-stage unfixed optimum is consistently near the bottom of each plot, which we expect since this method outputs the best possible count mechanism in the polytope $U_{\epsilon_2}$. Somewhat surprisingly, there are situations in which the truncated geometric and staircase mechanisms outperform the two-stage unfixed optimum.  The reason this is possible is that these mechanisms don't require a privatized estimate of $z$, and hence the entire privacy budget $\epsilon_t$ can be used to privatize the table of counts. The truncated geometric and staircase mechanisms have the most visible advantage in the schools data, with the advantage shrinking with higher $\epsilon_t$. One possible explanation relates to the shape of the schools data, which is the closest to a uniform among our three datasets. Recall that the two-stage unfixed optimum can be viewed as a truncated geometric mechanism with post-processing \cite{ghosh2009universally}. When data is near uniform, we expect that the optimal transition matrix in $U_{\epsilon_2}$ closely resembles the truncated geometric, so the benefits of post-processing are small. On the other hand, the truncated geometric mechanism enjoys the benefit of choosing a point from the larger polytope $U_{\epsilon_t}$, even though that point is not optimal.

Among the fixed-point methods, the count-error-minimizing fixed-point constructors (i.e. fixed-point constructors using simplex or the interior point method) have the lowest count error. These methods are never the lowest curve in the graph, and in particular, the two-staged unfixed optimum always has lower count error. One reason is that adding constraints reduces the set of feasible mechanisms (from $U_{\epsilon_2}$ to $F_{\epsilon_2}$). %, which may increase the minimum amount of count error in the polytope. 
Even though switching from the best baseline to a count-error minimizing fixed-point method does increase count error, the gap is often moderate. For example, for the crime data, at $\epsilon_t = 0.48$, switching from the best baseline (two-stage unfixed optimum) to a count-error-minimizing fixed-point method increases count error from 
$1.27$ to $1.32$, a $5.7\%$ increase (recall from the prior section that for the same switch, distribution error decreased by $74\%$). The largest performance gaps tend to occur for small values of $\epsilon_t$. 

The heuristic fixed-point constructors appear as a set of curves above the count-error-minimizing fixed point methods. Since these constructors also output a point in $F_{\epsilon_2}[z]$, they must result in at least as much count error as the minimum value over $F$. An interesting finding is that no column selector consistently outperforms the alternatives. In fact, by manipulating experimental parameters, we found situations in which each of the column selectors we studied was the best choice. As a general observation, the sandwich selector tends to have either the lowest count error, or close to the lowest for all datasets and parameter choices we tested. The max selector performed well for the highly skewed crime dataset, but poorly for the other two. Alternatively, the min selector performed well for the binomial and schools data, both of which exhibit more symmetry than the crime data. 

When using a heuristic constructor is necessary (because it is computationally impractical to find the fixed-point optimum), the cost of imposing a fixed-point constraint, relative to the best unfixed baseline, is larger. However, the additional cost of switching from a count-error-minimizing fixed-point constructor to a heuristic constructor is usually less than the cost of switching from the best unfixed baseline to a count-error-minimizing fixed-point constructor. For example, in the crime dataset, when $\epsilon_t = 1$, the expected absolute deviation is $0.55$ for the truncated geometric, $0.64$ for the count-error-minimizing fixed-point constructors, and $0.66$ for the best-performing heuristic (sandwich selector).

Overall, we find these results promising. As expected, count error does increase when adopting a fixed-point constraint. However, the increase is often a matter of a few percentage points and we expect that users will find this acceptable in some scenarios.

\subsection{Runtime}

In our final experiment, we measure how the choice of privatization method affects runtime. This is important because an algorithm must finish running in a reasonable time to be deployable.

We measure runtime using the system clock on a computer as it executes each privatization algorithm. Since each constructor algorithm must choose an $n \times n$ matrix, we expect that the bound on the number of counts, $n-1$, would have a strong effect on runtime. We therefore adjust the datasets used in this experiment so that we can vary $n$. For the binomial dataset, we continue taking draws with probability parameter $1/2$ but with size parameter set to $n-1$. For the crime and school datasets, we use the same raw data, but top code counts at $n-1$. 

\begin{figure}[!t]
    \centering
    \includegraphics[width=\linewidth]{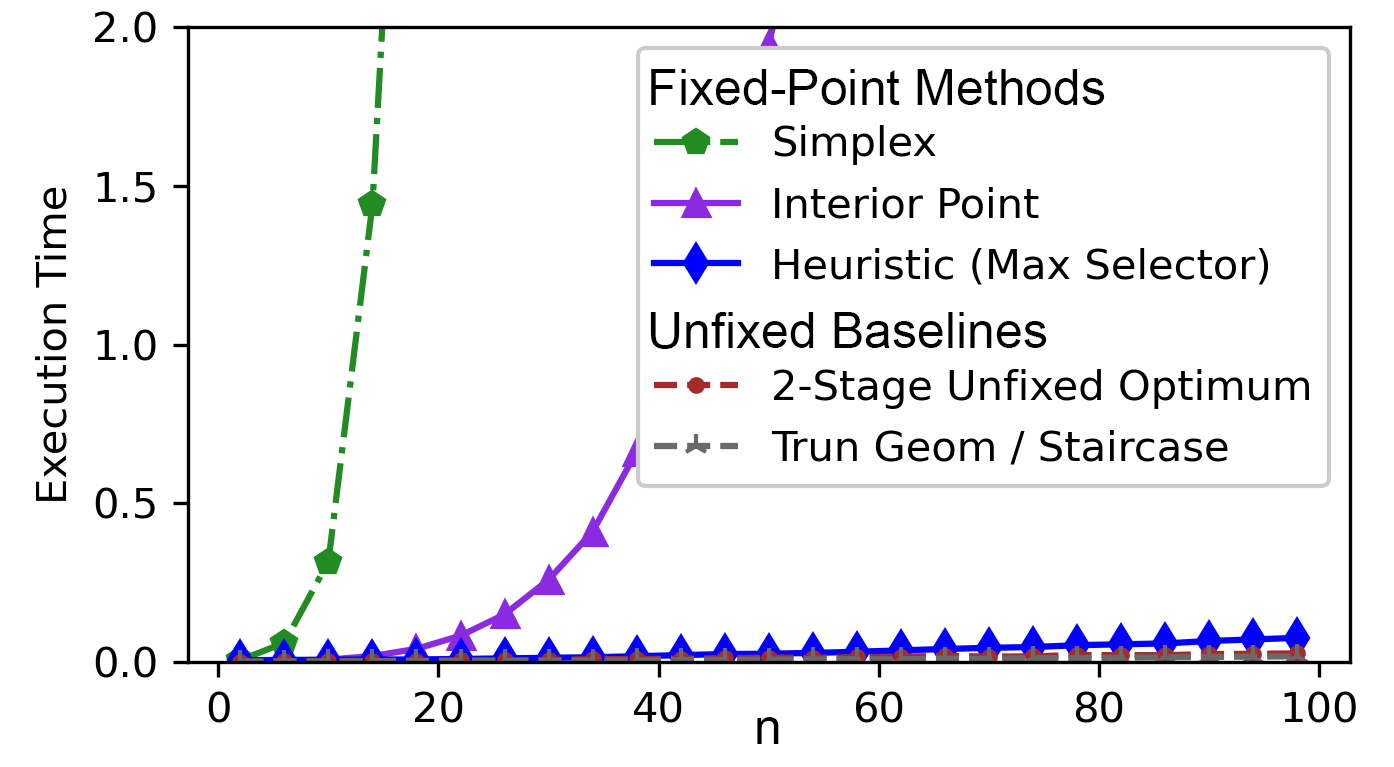}
    \vspace*{-0.15cm}
    \caption{Runtime for crime data with $\epsilon_t = \ln(2)$.} 
    \label{fig:execution}
\end{figure}

Measurements of runtime for different values of $n$ for the crime dataset are shown in Figure \ref{fig:execution}. Plots for the binomial and schools dataset are similar and we omit them. Additionally, we found that the selector function had no visual impact on the runtime of our heuristic algorithm; we therefore include just a single selector function in the plot. Omitted from the plot is the Discrete Gaussian, whose runtime is consistently off the plot due to its use of rejection sampling \cite{canonne2020discrete}. The truncated geometric and staircase mechanisms were closely overlapping, so they are represented by a single curve.

For all parameter choices, we find that the truncated geometric and staircase baselines provide the fastest runtime. These are followed closely by the two-stage unfixed optimum. Among the fixed-point constructors, we find that the one that uses the interior point method, and especially the one that uses simplex, exhibit rapid growth rates. This suggests that their use must be limited to smaller values of $n$. Our heuristic constructor, on the other hand, performs more efficiently, with runtimes closer to that of the unfixed baselines. To explore this further, we ran additional simulations with even larger values of $n$. We were able to execute our heuristic constructor with $n = 2,000$ in under 10 seconds on a single laptop.

Overall, these results provide further evidence supporting the practicality of a fixed-point constraint. While the fixed point leads to higher runtimes relative to the unfixed baselines, the use of heuristic constructors serves as an effective technique to ensure that runtimes remain reasonable, even for high values of $n$.

%\begin{figure*}[h]
%    \centering
%    \includegraphics[width=\linewidth]{figures/epsilon_differences_combined.png}
%    \vspace*{0.75mm}
%    \caption{\nitin{ADD IN HERE; this was generated using the crime data}} 
%    \label{fig:split_epsilons}
%\end{figure*}

\section{Discussion}
\label{disc}
This study proposes a framework for computing tables of privatized counts that balances three performance criteria: accuracy of distribution, accuracy of counts, and runtime. We support the design of such a framework with a mathematical theory of transition matrices that obey differential privacy and fixed-point constraints. We provide advances for two framework components: the distribution privatizer and the constructor algorithm. Our experiments provide evidence that a fixed-point constraint can be practical, often yielding a favorable tradeoff among our three performance criteria.

Our study leaves open several questions. We would like to know if, under a reasonable class of loss functions, we can provide a compact description of the global optimum of $F$, analogous to the one found by Ghosh et al. for the unfixed case \cite{ghosh2009universally}. A little more broadly, we are interested in whether an algorithm for finding the global optimum exists that is more efficient than today's general-purpose solvers. Furthermore, numerous variants of differential privacy have been proposed \cite{desfontaines2020sok}, some of which may be well-suited for a polytope-style analysis. In particular, the set of  $(\epsilon,\delta)$-differentially private count mechanisms forms a polytope in $\R^{n \times n}$. We would therefore like to investigate whether an analogue of $\epsilon$-scales can be applied to this setting. 

We are also interested in whether the upper bound $n$ can be dropped from our framework, alleviating the need for truncating counts. We believe this would require radically different techniques, including infinite-dimensional analogs of the mathematical tools developed in this paper.

Another interesting direction involves preserving distributions for other types of data beyond simple counts. For example, many applications rely on the joint distribution of two count variables. Mortality rates and disease rates are typically computed as a fraction, meaning that systematic distortions in the joint distribution of incidence counts and population counts can interfere with our understanding of health trends \cite{santos2020differential, hauer2021differential, li2023impacts}. Similarly, measures of segregation are based on the joint distribution of minority counts and majority counts, and can be biased upwards or downwards by privatization \cite{asquith2022assessing}. We believe the techniques developed in this study provide a promising starting point for developing privatization algorithms that can support applications like these.

%%%==========================%%%
%%% =*= Acknowledgements =*= %%%
%%%==========================%%%

%% The acknowledgments section is defined using the "acks" environment
%% (and NOT an unnumbered section). This ensures the proper
%% identification of the section in the article metadata, and the
%% consistent spelling of the heading.
\begin{acks}
This work was  greatly improved by comments
from our anonymous reviewers, our revision editor, Joshua Blumenstock, and members of UC Berkeley's Global Opportunity Lab. This research received no specific grant from any funding agency in the public, commercial, or not-for-profit sectors.
\end{acks}

%%%======================%%%
%%% =*= Bibliography =*= %%%
%%%======================%%%

%% The next two lines define the bibliography style to be used, and
%% the bibliography file.
\bibliographystyle{ACM-Reference-Format}
\bibliography{main}

%%%====================%%%
%%% =*= Appendices =*= %%%
%%%====================%%%

\newpage

%% If your work has an appendix, this is the place to put it.
\appendix
\section*{Appendix}
\label{appendix}  
\section{Proof and Other Details from Section \ref{privatizing_z}}
\label{app_proofs_privatizing_z}

%%%%%%%%%%%%%%%%%%%%%%%%%%%%%%%%%%%%%
\noindent \textbf{Proof of Theorem \ref{thm:cyclic_laplace}.} The proof proceeds in 5 steps. 
\smallskip

\noindent \textbf{Step 1:} \textit{Relate the cyclic Laplace $V(\zeta)$ to a new mechanism $\hat V(\zeta)$ to facilitate analysis.}  We introduce modified mechanism $\hat V(\zeta)$, which is identical to the cyclic Laplace mechanism $V(\zeta)$, except that it only outputs the first $n-1$ dimensions of the output, $\hat V = (V_0,..., V_{n-2}) \in \R^{n-1}$. By Observation \ref{obs:cyc_lap_sums}, the sum of the cyclic Laplace outputs must be 1. It is therefore possible to post-process the output of $\hat V(\zeta)$ by computing the dropped component, $V_{n-1} = 1 - \sum_{i=0}^{n-2}V_i$, thereby recovering the original mechanism. By the postprocessing guarantee, it is therefore sufficient to prove that $\hat V$ satisfied $\epsilon$-differential privacy. This reduction allows us to use the convenient language of probability density functions in the remainder of the proof.

\smallskip

\noindent \textbf{Step 2:} \textit{Compute the conditional probability density function of $\hat V$ given $L_{n-1} = l_{n-1}$.} The modified cyclic Laplace outputs $V_0,...,V_{n-2}$ where $V_i = \zeta_i + L_i - L_{i+1}$ for all $i \in \{0,...,n-2\}$. Repeatedly substituting $ L_{j+1} - \zeta_{j} + V_{j}$ for $L_j$ yields the following $n-1$ equations.
\begin{align*}
 L_{i} &= L_{n-1} + \sum_{j = i}^{n-2} V_j - \zeta_j  \text{ for all } i \in \{0,...,n-2\} 
\end{align*}
Given $l_{n-1} \in \R$, we condition on the event $L_{n-1} = l_{n-1}$. We also define the random vector $\hat L = (L_0,...,L_{n-2})$. For any $\hat l = (\hat l_0,...,\hat l_{n-2}) \in \R^{n-1}$, let $\phi(\hat l)$ denote the joint density of $\hat L$. By independence of the $L_i$'s, $\phi(\hat l) = \prod_{i=1}^{n-2} \varphi(\hat l_i; 1/(N\epsilon))$ where $\varphi(\cdot; 1/(N\epsilon))$ is the probability distribution function of the single-variable Laplace with scale parameter $1/(N\epsilon)$ (as described in Footnote \ref{foot:laplace_dist} of Section \ref{privatizing_z}). Since the scale parameter will not change throughout this proof, we omit it and write $\varphi(\hat l_i)$ in place of $\varphi(\hat l_i ;1/(N\epsilon))$ for brevity. Also by independence of the $L_i$'s, the conditional density of $\hat L$ given $L_{n-1} = l_{n-1}$ is also $\phi$.

Then we can write, $\hat L = g_{\zeta}(\hat V; l_{n-1})$ for function $g_{\zeta}(\ \cdot \ ; l_{n-1}):\R^{n-1} \to \R^{n-1}$,
\begin{align*}
 g_{\zeta}(\hat v; l_{n-1})_i = l_{n-1} + \sum_{j = i}^{n-2} v_j - \zeta_j \text{ for all } i \in \{0,...,n-2\}
\end{align*}
Note that $g_{\zeta}(\ \cdot \ ; l_{n-1})$ is a one-to-one function that relates random vector $\hat V$ with random vector $\hat L$.  For any $\hat v = (v_0,...,v_{n-2}) \in \R^{n-1}$, let $\rho_{\zeta}(\hat v|l_{n-1})$ denote the conditional probability density of $\hat V$ at $\hat v$, given $L_{n-1} = l_{n-1}$. By the standard change of coordinates for probability densities, we have:
\begin{align*}
\rho_{\zeta}(\hat v| l_{n-1}) &= |\text{det}(J(\hat v))| \ \phi(g_{\zeta}(\hat v; l_{n-1}))
\end{align*}
where $J(\hat v)$ is the Jacobian of $g_{\zeta}(\ \cdot \ ; l_{n-1})$ evaluated at $\hat v$. It can be verified that $J(\hat v)$ is upper triangular with ones on the diagonal, so $|\text{det}(J(\hat v))| = 1$. Hence $\rho_{\zeta}(\hat v| l_{n-1}) = \phi(g_{\zeta}(\hat v; l_{n-1}))$.

\smallskip

\noindent \textbf{Step 3:} \textit{Bound the ratio of conditional probability densities of $\hat V$ on neighboring inputs.} Consider two neighboring tables of counts $d$ and $d'$, with corresponding histograms $\eta$ and $\eta'$, and with corresponding distributions of counts $\zeta$ and $\zeta'$, respectively. Without loss of generality, suppose $d'$ is formed from $d$ by adding an individual to row $i$. Then $d'_i = d_i + 1$. 

Without loss of generality, suppose $d$ has count $f$ in row $i$, $d'$ has count $f+1$ in row $i$ and $d$ and $d'$ are otherwise identical. Let $\mathbb{I}(\cdot)$ be the indicator function. Then
$$
\eta'_{f} =  \sum_{j =0}^{N-1} \mathbb{I}(d'_j = f) = \sum_{j \ne i} \mathbb{I}(d'_j = f)
$$
because $\mathbb{I}(d'_i = f) = 0$. Also, 
$$
\eta_{f} =  \sum_{j =0}^{N-1} \mathbb{I}(d_j = f) = \sum_{j \ne i} \mathbb{I}(d_j = f) + 1 = \sum_{j \ne i} \mathbb{I}(d'_j = f) + 1
$$
Hence, $\eta'_{f} =\eta_{f} - 1$ and $\zeta'_{d_i} = \zeta_{d_i} - \frac{1}{N}$. By a similar argument, $\eta'_{f+1} =\eta_{f+1} + 1$ and $\zeta'_{f+1} = \zeta_{f+1} + \frac{1}{N}$. For all other $\bar f \notin \{f,f+1\}$, $\eta_{\bar f} = \eta'_{\bar f}$, so $\zeta_{\bar f} = \zeta'_{\bar f}$.

\smallskip

\noindent Next, consider the following ratio of conditional probability density functions.
\begin{align*}
    \frac{\rho_{\zeta}(\hat v| l_{n-1})}{\rho_{\zeta'}(\hat v| l_{n-1})} = \frac{\phi(g_{\zeta}(\hat v; l_{n-1}))}{\phi(g_{\zeta'}(\hat v; l_{n-1}))} = \frac{\prod_{i=0}^{n-2}\varphi(g_{\zeta}(\hat v; l_{n-1})_i)}{\prod_{i=0}^{n-2}\varphi(g_{\zeta'}(\hat v; l_{n-1})_i)} 
\end{align*}
Plugging in the values for $g_{\zeta}(\hat v; l_{n-1})_i)$ and $g_{\zeta'}(\hat v; l_{n-1})_i)$, we have
\begin{align*}
    \frac{\rho_{\zeta}(\hat v| l_{n-1})}{\rho_{\zeta'}(\hat v| l_{n-1})} = \frac{\prod_{i=0}^{n-2}\varphi(l_{n-1} + \sum_{j = i}^{n-2} v_j - \zeta_j )}{\prod_{i=0}^{n-2}\varphi(l_{n-1} + \sum_{j = i}^{n-2} v_j - \zeta_j' )} 
\end{align*}
Notice that $\sum_{j \le i} \zeta'_j = \sum_{j \le i} \zeta_j $ whenever $i \ne f$.  Canceling equal terms, 
\begin{align*}
    \frac{\rho_{\zeta}(\hat v| l_{n-1})}{\rho_{\zeta'}(\hat v| l_{n-1})}
    &= \frac{\varphi(l_0 + \sum_{j \le f} \zeta_j - v_j )}{\varphi(l_0 + \sum_{j \le f} \zeta'_j - v_j)}
\end{align*}
Since $\sum_{j \le f} \zeta'_j = \sum_{j \le f} \zeta_j - \frac{1}{N}$, we have
\begin{align*}
    \frac{\rho_{\zeta}(\hat v| l_{n-1})}{\rho_{\zeta'}(\hat v| l_{n-1})}
    &= \frac{\varphi(l_0 + \sum_{j \le f} \zeta_j - v_j )}{\varphi(l_0 + \sum_{j \le f} \zeta_j - \frac{1}{N} - v_j)} \\
    &= \frac{\exp(N\epsilon|l_0 + \sum_{j \le f} \zeta_j - v_j|)}{\exp(N\epsilon|l_0 + \sum_{j \le f} \zeta_j - \frac{1}{N} - v_j|)} \\
    &\le \exp(N\epsilon|1/N|) \\
    &= \exp(\epsilon)
\end{align*}
where the inequality follows by the triangle inequality. By a symmetric argument, $\frac{\rho_{\zeta}(\hat v| l_{n-1})}{\rho_{\zeta'}(\hat v| l_{n-1})} \ge \exp(-\epsilon)$. 

\smallskip

\noindent \textbf{Step 4:} \textit{Bound the ratio of (unconditional) probability densities of $\hat V$ on neighboring inputs $\zeta$ and $\zeta'$.} Define $\rho_{\zeta}(\hat v)$ as the unconditional density of $\hat V$. Then,
\begin{align*}
    \frac{\rho_{\zeta}(\hat v)}{\rho_{\zeta'}(\hat v)} &= \frac{\int_{\R}\rho_{\zeta}(\hat v| l_{n-1})\varphi(l_{n-1}) dl_{n-1}}{\int_{\R}\rho_{\zeta'}(\hat v| l_{n-1})\varphi(l_{n-1}) dl_{n-1}}
\end{align*}
Upper-bounding the numerator using Step 3,
\begin{align*}
    \frac{\rho_{\zeta}(\hat v)}{\rho_{\zeta'}(\hat v)} \le \frac{\int_{\R}e^\epsilon\rho_{\zeta'}(\hat v| l_{n-1})\varphi(l_{n-1}) dl_{n-1}}{\int_{\R}\rho_{\zeta'}(\hat v| l_{n-1})\varphi(l_{n-1}) dl_{n-1}} = e^\epsilon
\end{align*}
By a symmetric argument using the other bound from Step 3, we similarly deduce $\frac{\rho_{\zeta}(\hat v)}{\rho_{\zeta'}(\hat v)} \ge e^{-\epsilon}$.

\smallskip

\noindent \textbf{Step 5:} \textit{Deduce that the original cyclic Laplace mechanism satisfies differential privacy.} By Step 4, the ratio of probability density functions is between $e^{-\epsilon}$ and $e^{\epsilon}$ at every point $\hat v \in \R^{n-1}$. Hence, by standard integration, for any measurable set $\hat E \subseteq \R^{n-1}$, we have $e^{-\epsilon} \P(\hat V(\zeta') \in \hat E) \le \P(\hat V(\zeta) \in \hat E) \le e^{\epsilon} \P(\hat V(\zeta') \in \hat E)$. By the reduction argument from Step 1, this implies that the cyclic Laplace mechanism also satisfies $\epsilon$-differential privacy. \qed

%%%%%%%%%%%%%%%%%%%%%%%%%%%%%%%%%%%%%

\medskip

\noindent \textbf{Additional Applications of the Cyclic Approach.} Our cyclic approach for the Laplace mechanism can also be used with other differentially private mechanisms, such as the Gaussian mechanism \cite{dwork2006our, balle2018improving}. 

\begin{defn}
\label{defn:cyclic_gauss} 
We define the \textit{cyclic Gaussian mechanism for distributions of counts} as a randomized algorithm that accepts a parameter $\sigma$ and table of counts $d$ as inputs. It then computes its distribution of counts $\zeta$ and independently samples $G_0,...,G_{n-2}$ from Normal$(0, \sigma^2)$, defines $G_{n} = G_0$, and outputs $V_i = \zeta_i + G_i - G_{i+1}$ for all $i \in \{0,...,n-1\}$.
\end{defn}

This mechanism satisfies approximate differential privacy \cite{dwork2006our}, for appropriately chosen values of $\sigma$. To sketch a proof, we first rewrite the cyclic Gaussian mechanism as $V = \zeta + H$, where $H_i = G_i - G_{i+1}$. We perform the technique in Step 1 of Theorem \ref{thm:cyclic_laplace}, and construct a new mechanism $\hat V = (V_0,...,V_{n-2})$. This new mechanism $\hat V$ is equivalent to the Exact Correlated Gaussian Mechanism of \citet{xiao2021optimizing}, with an $(n -1) \times (n-1)$ covariance matrix whose diagonal entries are $2\sigma^2$, whose superdiagonal and subdiagonal entries are both $-\sigma^2$, and all other entries are 0. This covariance matrix is invertible, as it is a symmetric tridiagonal Toeplitz matrix, so its eigenvalues are given by $\lambda_m = 2\sigma^2 + 2\sigma^2 \cos\left( \frac{m \pi}{n}\right)$ for $m \in \{1,...,n-1\}$ \cite{willms2008analytic}, all of which are positive. Hence $\hat V$ satisfies approximate differential privacy by Theorem 3 of \citet{xiao2021optimizing} for appropriately chosen values of $\sigma$. By the post-processing argument presented in Step 1 of Theorem \ref{thm:cyclic_laplace}, the cyclic Gaussian mechanism for distributions of counts $V$ also satisfies approximate differential privacy.

\section{Proofs and Other Details from Section \ref{characterization}}
\label{app_proofs_characterization}

\subsection{Using $\Psi$ to Represent $F$}
\label{app_details_F}

%%%%%%%%%%%%%%%%%%%%%%%%%%%%%%%%%%%%%%%%%%%%%%

\noindent \textbf{Proof of Lemma \ref{lem:U_col_scales}.} To prove Lemma \ref{lem:U_col_scales}, the following technical lemma will be helpful; it establishes the linear independence of the differential privacy constraints and the row sum constraints on a transition matrix.

\begin{lem}
\label{lem:linear_indep_constraint_matrix}
        For any $\pi_i \in \{\pi_i^-,\pi_i^+\}$ for $i \in \{0,...n-2\}$, the set of row vectors $\{\pi_0,...,\pi_{n-2}, \mathds{1}_n^\top\}$ are linearly independent.
\end{lem}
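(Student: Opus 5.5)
The plan is to show that the only linear dependence among these $n$ row vectors is the trivial one. Equivalently, since there are exactly $n$ vectors in $\R^n$, it suffices to show that the $n\times n$ matrix $C$ with rows $\pi_0,\dots,\pi_{n-2},\mathds{1}_n^\top$ is nonsingular. I will argue directly from the sparsity pattern. Each $\pi_i$, whether $\pi_i^+$ or $\pi_i^-$, is supported only on positions $i$ and $i+1$. Its entry at $i$ is $\pm 1$, and its entry at $i+1$ is a nonzero multiple $-a_i$ of that entry, with $a_i \in \{e^{\epsilon},e^{-\epsilon}\}$. Concretely, $\pi_i^+=(1,-e^\epsilon)$ and $\pi_i^-=-(1,-e^{-\epsilon})$ on positions $(i,i+1)$.

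Suppose $\sum_{i=0}^{n-2}\lambda_i\pi_i+\lambda_{n-1}\mathds{1}_n^\top=0$, and look at the coordinates in order.
\begin{itemize}
\item Coordinate $0$ involves only $\pi_0$ and $\mathds{1}_n^\top$, giving a relation between $\lambda_0$ and $\lambda_{n-1}$.
\item Coordinate $j$ for $1\le j\le n-2$ involves $\pi_{j-1}$, $\pi_j$ and $\mathds{1}_n^\top$, which lets me solve for $\lambda_j$ in terms of $\lambda_{j-1}$ and $\lambda_{n-1}$.
\item Coordinate $n-1$ involves only $\pi_{n-2}$ and $\mathds{1}_n^\top$.
\end{itemize}
Propagating this recursion expresses every $\lambda_i$ as a scalar multiple of $\lambda_{n-1}$. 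The final coordinate then yields one scalar equation $c\,\lambda_{n-1}=0$, and what remains is to show $c\neq 0$.

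A cleaner route to the same conclusion is a kernel argument. The vectors $\pi_0,\dots,\pi_{n-2}$ are linearly independent, because the matrix of their first $n-1$ columns is triangular with $\pm1$ on the diagonal. So their common kernel is one-dimensional. A vector $v$ in that kernel satisfies $v_i=a_i v_{i+1}$ for every $i$. It is therefore a nonzero multiple of a vector with all entries strictly positive: set $v_{n-1}=1$, so that $v_i=\prod_{l\ge i}a_l>0$. In fact this vector is, up to scaling, exactly the $\epsilon$-scale determined by the choice of signs.

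Now if $\mathds{1}_n^\top$ were in the span of the $\pi_i$, then $\mathds{1}_n^\top v=0$. That is impossible, since $\mathds{1}_n^\top v=\sum_i v_i>0$. Hence $\mathds{1}_n^\top$ lies outside the span of the $\pi_i$. Together with the independence of the $\pi_i$ themselves, this proves the whole set is linearly independent.

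The only step needing care is the sign bookkeeping for $\pi_i^-$ versus $\pi_i^+$: I must check that in both cases the kernel relation reads $v_i=a_iv_{i+1}$ with $a_i>0$. This is immediate from the definitions, since $-v_i+e^{-\epsilon}v_{i+1}=0$ and $v_i-e^{\epsilon}v_{i+1}=0$ both give positive ratios.
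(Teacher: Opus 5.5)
Your proof is correct and uses the same key fact as the paper's: the conditions $\pi_i v = 0$ force $v$ to be a multiple of a strictly positive vector (an $\epsilon$-scale), so $\mathds{1}_n^\top v = 0$ forces $v = 0$. The paper argues directly that the full $n \times n$ matrix has trivial kernel by propagating the sign of $v_0$, while you split this into independence of the $\pi_i$ (triangularity) plus $\mathds{1}_n^\top$ not annihilating their one-dimensional kernel; this is only a repackaging, and your unfinished coefficient recursion is not needed.
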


\begin{proof}  (Of Lemma \ref{lem:linear_indep_constraint_matrix}.)
    Let $W$ be the $n \times n$ matrix whose rows are $\pi_0,...,\pi_{n-2}, \mathds{1}_n^\top$ in this order. Suppose $v \in \R^n$ is in the kernel of $W$. If $v_0 > 0$, then row $0$ of $Wv = 0_n$ specifies $\pi_0 v = 0$. If $\pi_0 = \pi_0^+$, then $\pi_0 v = v_0 - e^\epsilon v_1 = 0$, yielding $v_1 = e^{-\epsilon} v_0 > 0$. By similar argument, $\pi_0 = \pi_0^-$, also yields $v_1 > 0$. Repeating the process $n-1$ times, we know $v_i > 0$ for all $i \in \{0,..,n-1\}$. The last row of $Wv = 0_n$ specifies $\mathds{1}_n^\top v = 0$. However, the left hand side is a sum of all positive entries, yielding a contradiction. By similar argument, $v_0$ cannot be negative. Therefore, $v_0 = 0$. By similar procedure as above, this implies $v_i=0$ for all $i \in \{0,...,n-1\}$. Hence the kernel of $W$ only contains the zero vector, so $W$ is full rank, implying the rows are linearly independent, as required.
\end{proof}

\begin{proof} (Of Lemma \ref{lem:U_col_scales})
     Let $Q$ denote the set of probability vectors that satisfy $\epsilon$-neighbor indistinguishability.
     $$Q = \{t \in \Delta_n: \pi_i^+ t \le 0, \pi_i^- t \le 0 \text{ for all } 0 \le i \le n-2\}$$ 
     Then $Q$ is a convex polytope, and every scale $\Psi_j \in Q$. To prove the result, we will first show that the extreme points of $Q$ are the scales, $ex(Q) = \{\Psi_0,...,\Psi_{k-1}\}$.

    A point in a convex polytope in $\R^n$ is an extreme point if and only if there are $n$ linearly independent binding constraints. The privacy constraints guarantee that any $t \in Q$ has all positive entries. So for each $i \in \{0,...,n-1\}$ only one of $\pi_i^+$ or $\pi_i^-$ can bind. Therefore, all binding constraints are linearly independent by Lemma \ref{lem:linear_indep_constraint_matrix}. This implies that a point in $Q$ is extreme if and only if there are $n$ binding constraints. In turn, this is equivalent to the statement that $t$ is an $\epsilon$-scale. This establishes $ex(Q) = \{\Psi_0,...,\Psi_{k-1}\}$.

    We now complete the proof of the lemma. Suppose $T \in U$ and consider the $j^{th}$ column. We can write $T_j = \alpha_j q$ for some $\alpha_j \ge 0$ and $q \in Q$. Since the extreme points of $Q$ are the $\epsilon$-scales, we can write $q = \Psi \beta$ for $\beta \in \Delta_n$. Hence $T_j = \alpha_j \Psi \beta$, which is a conic combination of $\epsilon$-scales. 
\end{proof}

%%%%%%%%%%%%%%%%%%%%%%%%%%%%%%%%%%%%%%%%%%%%%%

\noindent \textbf{Proof of Theorem \ref{thm:nonzero_U}.} $(\implies)$ Suppose $T \in ex(U)$. A well known fact from geometry is that a point in a convex polytope in $\R^{n\times n}$ is an extreme point if and only if there are $n^2$ linearly independent binding constraints. Since $U$ includes $n$ row constraints, $T$ must satisfy $n^2 - n = n(n-1)$ tight privacy constraints. Since each column is non-zero, only a single differential privacy constraint can bind for every pair of adjacent entries, meaning that each column is associated with a maximum of $n-1$ tight privacy constraints. It follows that every column must have exactly $n-1$ tight privacy constraints, hence every column of $T$ is a  multiple of an $\epsilon$-scale. 

    The proof that the columns of $T$ are linearly independent follows by a similar argument to Theorem 2 of Holohan et al. \cite{holohan2017extreme}, which we include below for completeness. Assume for contradiction that there exists a linear dependence among the columns of $T$. Then there is a non-zero vector of weights, $\theta \in \R^n$ such that $\sum_{j=0}^{n-1} \theta_j T_j = 0_n$.  Let $W = T \ \text{diag}(\theta)$. Then $W \ne 0_{n\times n}$ as $\theta \ne 0_n$. Also, $W\mathds{1}_n = 0_n$ by the linear dependency.

    Choose any $\delta \in (0,1/\max_j\{|\theta_j|\}).$ Define $X = T-\delta W$ and $Y = T+\delta W$. By construction,  $(T \pm \delta W)\mathds{1}_n = T\mathds{1}_n \pm \delta W\mathds{1}_n = T\mathds{1}_n = \mathds{1}_n$, so the row sums of both $X$ and $Y$ are 1. Next, we show that all elements of $X$ are positive. For $i,j \in \{0,...,n-1\}$,
    \begin{align*}
        x_{i,j} &= t_{i,j} - \delta w_{i,j} \\
        &= t_{i,j} - \delta \sum_{u=0}^{n-1}t_{i,u}\text{diag}(\theta)_{u,j} \\
        &= t_{i,j}(1-\delta \theta_j) \\
        &> t_{i,j}\bigg(1-\frac{\theta_j}{\max_v\{|\theta_v|\}}\bigg) \\ &\ge 0
    \end{align*}
    By similar argument, all elements of $Y$ are positive. Next, we note that any column $j$ of $X$ is a positive multiple of the corresponding column of $T$, as $X_j = T_j(1-\delta \theta_j)$. Since $T_j$ is $\epsilon$-neighbor indistinguishable, so is $X_j$. Therefore, $X \in U$. By a similar argument, $Y \in U$ as well. But then $T = \frac{1}{2}X + \frac{1}{2}Y$, implying that $T \notin ex(U)$, yielding a contradiction.

    $(\impliedby)$ Suppose $T\in U$ and the columns of $T$ are linearly independent multiples of $\epsilon$-scales. Assume for contradiction that $T$ is not an extreme point. Then, there exists $X,Y \in U$, distinct, and $\alpha\in (0,1)$ such that $T = \alpha X + (1-\alpha) Y$. 
    
    For each column $j$ we show that $T_j$ is a constant multiple of $X_j$. This is equivalent to showing $x_{i,j}/t_{i,j} = x_{i+1,j}/t_{i+1,j}$ for all $i \in \{0,...,n-2\}$. Assume for contradiction that there exists some $i \in \{0,...,n-2\}$ such that $x_{i,j}/t_{i,j} \ne x_{i+1,j}/t_{i+1,j}$. We proceed by cases. If $t_{i,j} = e^{\epsilon} t_{i+1,j}$, then $x_{i,j} < e^{\epsilon} x_{i+1,j}$. Hence,
    \[
    \begin{split}
        e^{\epsilon} t_{i+1,j} & = t_{i,j} \\
    & = \alpha x_{i,j} + (1-\alpha) y_{i,j} \\
    & < \alpha e^{\epsilon} x_{i+1,j} + (1-\alpha) y_{i,j} \\
    & \le \alpha e^{\epsilon} x_{i+1,j} + (1-\alpha) e^{\epsilon} y_{i+1,j} \\
    & = e^{\epsilon} t_{i+1,j}
    \end{split}
    \]
    producing a contradiction. An analogous argument holds when $t_{i,j} = e^{-\epsilon} t_{i+1,j}$, implying that $X_j$ must be proportional to $T_j$. Similarly, $Y_j$ must be proportional to $T_j$. Hence, there must be weight vectors $\theta, \nu \in \R^n$ such that $X = T \ \text{diag}(\theta)$ and $ Y = T \ \text{diag}(\nu)$. Define $\beta = \theta - \nu$ and note that $\beta \ne 0_n$ since $X \ne Y$. Then 
    \begin{align*}
        0_n & = (X-Y)\mathds{1}_n \\
        &= \left( T \ \text{diag}(\theta) - T \ \text{diag}(\nu) \right) \mathds{1}_n \\
        &= T \ \text{diag}(\theta-\nu) \mathds{1}_n \\
        &= T \ \text{diag}(\beta) \mathds{1}_n
    \end{align*}
    which is a linear dependency on the columns of $T$, yielding a contradiction. \qed

%%%%%%%%%%%%%%%%%%%%%%%%%%%%%%%%%%%%%%%%%%%%%%

\medskip

\noindent \textbf{Proof of Theorem \ref{thm:RF_to_F_surjection}.} We first show that $F \subseteq \Psi(R_F)$. Given $T \in F$, it is also in $U$, so every column of $T$ can be expressed a conic combination of $\epsilon$-scales by Lemma \ref{lem:U_col_scales}. So for every column $j$ there exists a column vector $\beta(j)$ such that $T_j = \Psi \beta(j)$. Let $B \in \R^{k \times n}_{\ge 0}$ be the matrix whose $j^{th}$ column is $\beta(j)$. Then $T = \Psi B $. To complete the proof, we show $B \in R_F$. To do so, we show $z\Psi B = z$ and $\Psi B \mathds{1}_n = \mathds{1}_n$.

    We establish the first condition $z\Psi B = z$ via contradiction. Suppose there exists some $j \in \{0,...,n-1\}$ such that $z_j \ne (z\Psi B)_j = z \Psi B_j$. By construction, $T_j = \Psi B_j$, so $z_j \ne zT_j$, so $z \ne zT$, contradicting $T \in F$. Hence $z\Psi B = z$.
    
    The second condition holds as well, as $\Psi B \mathds{1}_n = T \mathds{1}_n = \mathds{1}_n$ since $T \in F$. Hence, given $T \in F$ there exists $B \in R_F$ such that $T = \Psi B$. This establishes $T \in \Psi(R_F)$.

    Next, we show that $F \supseteq \Psi(R_F)$. Let $ T \in \Psi(R_F)$. Then there exists some $B \in R_F$ such that $T = \Psi B$. By definition of $R_F$, $z T = z \Psi B = z$, and $T \mathds{1}_n = \Psi B \mathds{1}_n =\mathds{1}_n$. Column $j$ of $T$ is given by $\Psi B_j$. This is a conic combination of $\epsilon$-neighbor indistinguishable vectors, and is hence $\epsilon$-neighbor indistinguishable. Since every column of $T$ is $\epsilon$-neighbor indistinguishable, $T$ is $\epsilon$-differentially private, establishing $T \in F$. Hence, $\Psi(R_F) \supseteq F$, producing $\Psi(R_F) = F$. \qed

%%%%%%%%%%%%%%%%%%%%%%%%%%%%%%%%%%%%%%%%%%%%%%

\medskip

\noindent \textbf{Proof of Proposition \ref{prop:extreme_surjection_R_to_F}.} We provide two proofs of this result.

\begin{proof} (Approach 1) For two matrices $X$ and $Y$ of the same size, we write the Frobenius inner product as $\langle X, Y \rangle = tr[X^\top Y]$. A standard result from linear algebra is that a point $X$ in a convex polytope $P$ is an extreme point if and only if it is the unique maximizer of some linear function from $P$ to $\R$. We restate this result in our context as follows.

\begin{lem} 
\label{lem:linear_alg_opt}
    Given convex polytope $P$ in some vector space of real matrices $V$, a matrix $T \in P$ is an extreme point if and only if there exists $d \in V$ such that $T$ is the unique maximizer of $\phi(\cdot) = \langle \cdot, d\rangle$ in $P$. 
\end{lem}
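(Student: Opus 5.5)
The plan is to prove both directions using only convexity and the finite-dimensional structure of $P$. Polytopes are compact, so linear functionals attain their maxima on $P$. Also, $P$ is the convex hull of its finitely many extreme points, by the Minkowski--Krein--Milman theorem for polytopes. The inner product $\langle \cdot,\cdot\rangle$ is the Frobenius inner product on $V$, which identifies $V$ with a Euclidean space $\R^m$. So every linear functional on $V$ has the form $\langle \cdot, d\rangle$ for some $d\in V$.

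\emph{($\Leftarrow$)} Suppose $T$ uniquely maximizes $\phi(\cdot)=\langle \cdot,d\rangle$ on $P$, and suppose $T=\alpha X+(1-\alpha)Y$ with $X\ne Y$ in $P$ and $\alpha\in(0,1)$. By linearity, $\phi(T)=\alpha\phi(X)+(1-\alpha)\phi(Y)$. Both $\phi(X)$ and $\phi(Y)$ are at most $\phi(T)$, so equality forces $\phi(X)=\phi(Y)=\phi(T)$. Since $X\ne Y$, at least one of them differs from $T$, and that point is a second maximizer. This contradicts uniqueness, so $T\in ex(P)$.

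\emph{($\Rightarrow$)} This is the direction needing the polytope structure, and it is the main step. Write $P$ as $\{X: \langle A_\ell, X\rangle \le c_\ell,\ \ell\in L\}$ for finitely many $A_\ell\in V$; equality constraints can be split into two inequalities. Let $L_T$ be the set of constraints that bind at $T$, and set $d=\sum_{\ell\in L_T} A_\ell$.

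For any $X\in P$,
\[
\langle X,d\rangle=\sum_{\ell\in L_T}\langle A_\ell,X\rangle\le\sum_{\ell\in L_T}c_\ell=\langle T,d\rangle,
\]
so $T$ is a maximizer. Equality holds exactly when $X$ satisfies every constraint in $L_T$ with equality.

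It remains to show that $T$ is the only point of $P$ binding all constraints in $L_T$. Suppose some $X\ne T$ in $P$ binds them all, and put $D=X-T$. Then $\langle A_\ell,D\rangle=0$ for every $\ell\in L_T$. Every constraint $\ell\notin L_T$ is strict at $T$, and there are finitely many of them, so $T\pm\delta D\in P$ for all sufficiently small $\delta>0$. But then
\[
T=\tfrac12(T+\delta D)+\tfrac12(T-\delta D)
\]
is a nontrivial convex combination of distinct points of $P$, contradicting $T\in ex(P)$. Hence the maximizer is unique.

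The only delicate point is the use of an inequality description of $P$. This is available because the paper defines polytopes as bounded intersections of finitely many closed half-spaces, and that finiteness is exactly what makes the choice of $\delta$ work.
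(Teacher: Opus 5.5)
Your proof is correct and complete in both directions. It also covers the degenerate case where no constraint binds at $T$: there your perturbation argument forces $P=\{T\}$, and $d=0$ works.

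The paper gives no proof of this lemma to compare against. It introduces the statement as ``a standard result from linear algebra'' and restates it without argument. Your write-up supplies the textbook proof the paper takes for granted: linearity for ($\Leftarrow$), and for ($\Rightarrow$) the exposedness argument with $d$ equal to the sum of the active constraint normals. Finiteness of the inequality description then guarantees a uniform $\delta$.

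Two minor remarks:
\begin{itemize}
\item The compactness and Minkowski--Krein--Milman facts in your opening paragraph are never used and can be dropped.
\item Your ($\Rightarrow$) argument does not use boundedness, so it proves the statement for any convex polyhedron. The hypothesis doing the real work is the finite half-space description. For a general compact convex set, such as a stadium shape, extreme points need not be exposed and the lemma fails.
\end{itemize}
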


    Now, we prove Proposition \ref{prop:extreme_surjection_R_to_F}. Suppose $T \in ex(F)$. Then by Lemma \ref{lem:linear_alg_opt} there exists a $c \in \R^{n \times n}$ such that $T$ is the unique maximizer of $\langle c, T' \rangle$ for all $T' \in F$. Consider $d = \Psi^\top  c$ and note for any $B' \in R_F$, $\langle d, B'  \rangle = tr[d^\top B'] = tr[c^\top \Psi B'] = \langle c, \Psi B' \rangle$.  Let $B$ be any extreme point of $R_F$ that maximizes $\langle d, B'  \rangle = \langle c, \Psi B' \rangle$ over all $B' \in R_F$. We must have $\Psi B = T$ because if $\Psi B = T' \ne T$, then for any $\bar{B}$ in the preimage of $T$ through $\Psi$, we have $\langle d, B \rangle = \langle c, \Psi B \rangle = \langle c, T' \rangle < \langle c, T \rangle = \langle c, \Psi\bar{B} \rangle  =\langle d, \bar{B} \rangle $ where the inequality follows from the maximality of $T$. This contradicts the maximality of $B$. Hence $B$ is an extreme point in $R_F$ such that $T = \Psi B$.
\end{proof}

\begin{proof} (Approach 2)
    Suppose $T \in ex(F)$. Since $\Psi$ is surjective, there exists some $W \in R_F$ such that $T = \Psi W$. Since $R_F$ is convex, $W = \theta_1 B^{(1)} + ... + \theta_l B^{(l)}$ for some $B^{(1)},...,B^{(l)} \in ex(R_F)$ and positive $\theta_i$'s that sum to 1.  Then 
    $$T = \Psi W = \sum_{i=1}^l \theta_l (\Psi B^{(i)})$$
    Since $T$ is extreme, the expression on the right-hand side must be the trivial convex combination. Hence, $\Psi B^{(i)} = T$ for all $i \in \{1,...,l\}$. Let $B = B^{(1)}$. Then there is some $B \in ex(R_F)$ such that $T = \Psi B$. 
\end{proof}

%%%%%%%%%%%%%%%%%%%%%%%%%%%%%%%%%%%%%
\begin{ex}
\label{ex:unpreserved_extreme_RF_to_F}  \textbf{$\Psi$ is not a 1-1 mapping between $ex(R_F)$ and $ex(F)$.} Consider the target distribution $z = \left(\frac{1}{3},\frac{1}{3},\frac{1}{3}\right)$ with $\epsilon = \ln(2)$ and $n=3$. There are 4 possible $\epsilon$-scales, displayed in the matrix $\Psi$ below.
$$ \Psi = 
    \begin{bmatrix}
    4/7 & 2/5 & 1/4 & 1/7\\
    2/7 & 1/5 & 1/2 & 2/7\\
    1/7 & 2/5 & 1/4 & 4/7\\
    \end{bmatrix}
    $$
Using the Avis-Fukuda vertex enumeration algorithm \cite{avis1991pivoting}, we generate all extreme points of $F$ and $R_{F}$. We find $|ex(F)| = 36$ and $|ex(R_{F})| = 78$. 

The matrices $B^{(1)}$ and $B^{(2)}$ below are extreme points of $R_{F}$, yet they both map to the same extreme point in $F$ under $\Psi$.

$$ B^{(1)} = 
    \begin{bmatrix}
    0 & 0 & 0 \\
    1 & 0 & 2/3 \\
    0 & 1 & 1/3 \\
    0 & 0 & 0 \\
    \end{bmatrix} \in ex(R_{F})
$$   

$$ B^{(2)} = 
    \begin{bmatrix}
    0 & 0 & 14/30 \\
    1 & 0 & 0 \\
    0 & 1 & 2/30 \\
    0 & 0 & 14/30 \\
    \end{bmatrix} \in ex(R_{F})
$$

$$ \Psi B^{(1)} = \Psi B^{(2)} = 
    \begin{bmatrix}
    2/5 & 1/4 & 7/20 \\
    1/5 & 1/2 & 3/10 \\
    2/5 & 1/4 & 7/20 \\
    \end{bmatrix} \in ex(F)
$$  
Additionally, matrix $B^{(3)}$ below is an extreme point of $R_F$, yet $\Psi B^{(3)} \notin ex(F)$, as the matrix $\Psi B^{(3)}$ is not an output of the Avis-Fukuda algorithm over $F$.
$$ B^{(3)} = 
    \begin{bmatrix}
    5/6 & 0 & 1/3 \\
    0 & 0 & 0 \\
    0 & 0 & 2/3 \\
    1/6 & 1 & 0 \\
    \end{bmatrix} \in ex(R_{F})
    $$
\end{ex}

%%%%%%%%%%%%%%%%%%%%%%%%%%%%%%%%%%%%%

\medskip

\noindent \textbf{Proof of Lemma \ref{lem:psi_simple_relationship_movement_matrices}.} $(\implies)$ We proceed by the contrapositive. Suppose there exists a non-zero movement matrix $\mu \in M_F(\Psi)$ such that $\mu_{i,j}$ is zero whenever $b_{i,j}$ is zero. Since $\mu \ne 0_{k\times n}$, there must be a column with at least one non-zero entry. Note that this column cannot have exactly one non-zero entry, because if $\mu_{i,j} \ne 0$ yet $\mu_{i',j} = 0$ for all $i' \ne i$, then $(z\Psi\mu)_j = (z\Psi)\mu_j = z\Psi_i\mu_{i,j} \ne 0$, implying $z\Psi\mu \ne 0_n^\top$, contradicting $\mu \in M_F(\Psi)$. Hence, there is some column of $\mu$ with at least two non-zero entries. 
    
    For column $j$, let $l(j)$ be the smallest row index $i$ such that $b_{i,j}>0$, and let $D(j)$ be the set of all other indices $i$ such that $b_{i,j}>0$. Then there exists some $j \in \{0,...,n-1\}$ such that $D(j) \ne \emptyset$ because $\mu$ has a column with at least 2 non-zero entries and $B$ must be non-zero at those positions. We can then write, 
    $$
        \Psi \mu_j = \Psi_{l(j)}\mu_{l(j),j} + \sum_{i \in D(j)} \Psi_{i}\mu_{i,j} 
    $$
    For all $i \in D(j)$, define $h^j_i$ as the vector
    $$
    h^j_i = (z\Psi_i)^{-1}\Psi_i - (z\Psi_{l(j)})^{-1}\Psi_{l(j)}
    $$
    Note that the set of vectors $\{h_i^j \text{ : } i \in D(j)\}$ is the set $H(B_j)$. Rearranging gives, 
    $$\Psi_i = z\Psi_i\left((z\Psi_{l(j)})^{-1}\Psi_{l(j)} + h^j_i\right)$$ 
    Substituting into the above equation for $\Psi \mu_j$ yields,
    $$
        \Psi \mu_j = \Psi_{l(j)}\mu_{l(j),j} + \sum_{i \in D(j)} z\Psi_i\left((z\Psi_{l(j)})^{-1}\Psi_{l(j)} + h^j_i\right) \mu_{i,j} 
    $$
    Next, we rewrite each term on the right-hand side of the above equation. The first term can be expressed as $z\Psi_{l(j)}(z\Psi_{l(j)})^{-1}\Psi_{l(j)} \mu_{l(j),j}$ and the sum can be rewritten as 
    $$
         \sum_{i \in D(j)} z\Psi_i (z\Psi_{l(j)})^{-1}\Psi_{l(j)} \mu_{i,j} + \sum_{i \in D(j)} z\Psi_i  \mu_{i,j} h^j_i 
    $$ 
    Hence,  
    \begin{align*}
        \Psi \mu_j &= z\Psi_{l(j)}(z\Psi_{l(j)})^{-1}\Psi_{l(j)} \mu_{l(j),j} \\
        &+\sum_{i \in D(j)} z\Psi_i (z\Psi_{l(j)})^{-1}\Psi_{l(j)} \mu_{i,j} \\
        &+\sum_{i \in D(j)} z\Psi_i  \mu_{i,j} h^j_i 
    \end{align*}
    which simplifies to
         $$\Psi \mu_j = (z\Psi_{l(j)})^{-1}\Psi_{l(j)}\sum_{i \in D(j) \cup \{l(j)\}} z\Psi_i \mu_{i,j} + \sum_{i \in D(j)} z\Psi_i  \mu_{i,j} h^j_i  
        $$
    By the definition of $M(\Psi)$, $z\Psi\mu = 0_n^\top$, meaning the first sum is $0$. Therefore, we have      
        $$\Psi \mu_j= \sum_{i \in D(j)} z\Psi_i  \mu_{i,j} h^j_i  $$    
    So then
    \begin{align*}
        0_n & =  \Psi \mu \mathds{1}_n = \sum_{j=0}^{n-1} \Psi \mu_j = \sum_{j=0}^{n-1} \sum_{i \in D(j)} z\Psi_i  \mu_{i,j} h^j_i 
    \end{align*}
    The above is a linear dependency among the vectors of the multiset $\uplus_{j=0}^{n-1} H(B_j)$, implying that $B$ is not $\Psi$-affinely simplified, as required.

    $(\impliedby)$ Again by the contrapositive, suppose $B$ is not $\Psi$-affinely simplified. Then $\uplus_{j=0}^{m-1} H(B_j)$ is linearly dependent. As above, let $l(j)$ be the smallest row $i$ for which $b_{i,j} > 0$ and $D(j)$ be the set of all other indices $i$ such that $b_{i,j}>0$. The linear dependency can be written as,
    $$
    \sum_{j=0}^{n-1} \sum_{x \in D(j)} \alpha_{x,j}\left((z\Psi_i)^{-1}\Psi_x - (z\Psi_{l(j)})^{-1}\Psi_{l(j)}\right) = 0_n
    $$
    for a collection of $\alpha_{x,j}$'s, not all zero. We can rewrite the left-hand side of the above equation as
    $$
    \sum_{j=0}^{n-1} \left(-\sum_{x \in D(j)} \alpha_{x,j} (z\Psi_{l(j)})^{-1}\right)\Psi_{l(j)}+ \sum_{j=0}^{n-1} \sum_{x \in D(j)} \left(\alpha_{i,j} (z\Psi_{x})^{-1}\right)\Psi_{x}
    $$
    Let $\mu$ be the matrix such that $\mu_{i,j}$ equals $-\sum_{x \in D(j)} (z\Psi_{l(j)})^{-1}\alpha_{x,j}$ when $i = l(j)$, equals $\alpha_{i,j} (z\Psi_i)^{-1}$ whenever $i \in D(j)$, and equals 0 otherwise. Since the $\alpha_{x,j}$'s are not all zero, the matrix $\mu \ne 0_{k \times n}$. Also, the sum above can be written in matrix notation as $\Psi \mu \mathds{1}_n = 0_n$. Additionally, the $j^{th}$ column constraint of $ \mu $ can be written,
    \begin{align*}
    (z\Psi \mu)_j &= z\sum_{i=0}^{n-1} \Psi_i \mu_{i,j} \\ 
    &= z\Psi_{l(j)}\mu_{l(j),j} + \sum_{x\in D(j)} z\Psi_{x}\mu_{x,j} \\
    &=  -z\Psi_{l(j)}\sum_{x \in D(j)}(z\Psi_{l(j)})^{-1} \alpha_{x,j} + \sum_{x\in D(j)} z\Psi_{x} (z\Psi_x)^{-1}\alpha_{x,j}\\
    &= -\sum_{x \in D(j)} \alpha_{x,j} + \sum_{x \in D(j)} \alpha_{x,j} \\
    &= 0    
    \end{align*}
    implying $z\Psi\mu = 0_n^\top$. Therefore $\mu \in M_F(\Psi)$. \qed

%%%%%%%%%%%%%%%%%%%%%%%%%%%%%%%%%%%%%

\medskip

\noindent \textbf{Proof of Theorem \ref{thm:extreme_f_characterization_general}.} $(\implies)$ We proceed by the contrapositive. Suppose $B$ is not $\Psi$-simplified. By the equivalence of $\Psi$-simplified and movement matrices established in Lemma \ref{lem:psi_simple_relationship_movement_matrices}, there exists a movement matrix $\mu \in M(\Psi)$ such that $\mu_{i,j} = 0$ whenever $b_{i,j} = 0$.  Choose any positive 
    
    $$
    \delta < \min_{\{(i,j) \text{ $|$ } \mu_{i,j} \ne 0\}} \bigg\{\frac{b_{i,j}}{|\mu_{i,j}|}\bigg\}
    $$ 
    Then $B \pm \delta \mu \in R_F$, as
    \begin{enumerate}
        \item For any $i,j$ entry, $\pm \delta \mu_{i,j} \le b_{i.j}$ with equality if and only if $b_{i,j} = 0$. Hence $(B \pm \delta \mu)_{i,j} \ge 0$.       
            \item $
    \Psi (B \pm \delta \mu) \mathds{1}_n = \Psi B  \mathds{1}_n \pm \delta \Psi \mu \mathds{1}_n = \mathds{1}_n \pm 0_n = \mathds{1}_n 
    $
        \item $
    z \Psi (B \pm \delta \mu)  = z \Psi B \pm \delta z \Psi \mu = z \pm 0_n^\top = z
    $
    \end{enumerate}
    Hence, $B$ can be represented as 
    $$
    B = \frac{1}{2}(B + \delta \mu) + \frac{1}{2} (B - \delta \mu)
    $$
    Therefore $B \notin ex(R_F)$.
  
    $(\impliedby)$ Proceed by the contrapositive yet again. Suppose $B \notin ex(R_F)$. Then there exists distinct $X,Y \in R_F$ and $\theta \in (0,1)$ such that $B = \theta X + (1-\theta)Y$. By non-negativity, the convex combination above implies $x_{i,j} = 0 = y_{i,j}$ whenever $b_{i,j} = 0$. Let $\mu = X - Y$. Then $\mu_{i,j} = 0$ whenever $b_{i,j} = 0$. Also, $z\Psi \mu = z\Psi X - z\Psi Y = z - z = 0_n^\top$ and $\Psi \mu \mathds{1}_n = \Psi X \mathds{1}_n - \psi Y \mathds{1}_n  = \mathds{1}_n - \mathds{1}_n = 0_n$. So $\mu \in M_F(\Psi)$, implying $B$ is not $\Psi$-simplified. \qed

%%%%%%%%%%%%%%%%%%%%%%%%%%%%%%%%%%%%%

\medskip

\noindent \textbf{Proof of Corollary \ref{cor:nonzero_RF}.} Let $B \in ex(R_F)$ and $\mathcal{P}$ be the set of positive indices of $z$. First, observe that for each $j \in \mathcal{P}$, $B_j \ne 0_k$ (as $z \Psi B_j = z_j \ne 0$), so there must be at least one positive entry in $B_j$, so $B$ must contain at least $|\mathcal{P}|$ positive entries. To show the upper bound on the number of positive entries is $|\mathcal{P}| + n - 1$, note that every element $h \in \uplus_{j=0}^{n-1}H(B_j)$ belongs to an $n-1$ dimensional subspace of $\R^{n}$ because of the constraint $zh = 0$. Since the elements are linearly independent, there can be at most $n-1$ of them. The number of positive entries in each column $j \in \mathcal{P}$ of $B$ is at most one more than the number of difference vectors in $H(B_j)$ (because $H(B_j)$ is constructed to include one vector corresponding to each index $i$ such that $b_{i,j}>0$ except the smallest). Hence the number of positive elements of $B$ is at most $|\mathcal{P}|$ plus the number of elements of $\uplus_{j=0}^{n-1}H(B_j)$, yielding a maximum of $|\mathcal{P}| + n - 1$ positive entries. \qed

%%%%%%%%%%%%%%%%%%%%%%%%%%%%%%%%%%%%%%%%%%%%%

\subsection{Characterizing extreme points that are preserved under $\Psi$.} 
\label{app_geom_ext}

In this section, we provide a necessary and sufficient condition that allows us to identify which extreme points in $R_{F}$ are mapped to extreme points of $F$. Our theorem is more general than required, referring to the extreme points of a convex polytope $P$ in an arbitrary $\hat{k} \times \hat{n}$ space of real matrices under affine surjection. We place hats over the dimensions as the following results hold for any dimensions $\hat{k} \ge \hat{n}$, and not only the specific values of $k$ and $n$ used in this paper. As before, for matrices $X$ and $Y$ of the same size, we write the Frobenius inner product as $\langle X, Y \rangle = tr[X^\top Y]$. We will make use of the following two definitions.

\begin{defn}
    Given a convex polytope $P \subset \R^{\hat{k} \times \hat{n}}$ and a matrix $D \in \R^{\hat{k} \times \hat{n}}$ we define the maximizing set of the function $\langle \cdot,D \rangle$ as
    $\text{opt}_{P,D} = \argmax_{B \in P} \langle B,D \rangle$.
\end{defn}

\begin{defn} The kernel of a linear surjection $\Phi: \R^{\hat{k} \times \hat{n}} \to \R^{\hat{n} \times \hat{n}}$ is denoted as 
$K[\Phi] = \{B \in \R^{\hat{k} \times \hat{n}} \text{ : } \Phi B = 0_{\hat{n} \times \hat{n}} \in \R^{\hat{n} \times \hat{n}}\}$. 
We denote the orthogonal complement of the kernel of $\Phi$ as the set
$K^{\perp}[\Phi] = \{ B \in \R^{\hat{k} \times \hat{n}} \text{ : } \langle B,B' \rangle = 0 \text{ for all } B' \in K[\Phi]\}$. 
\end{defn}

\begin{thm}
\label{thm:ex_preserve_surjection}
    Let $\Phi: \R^{\hat{k} \times \hat{n}} \to \R^{\hat{n} \times \hat{n}}$ be a linear surjection between convex polytopes $P \subset \R^{\hat{k} \times \hat{n}}$ and $\tilde{P} \subset \R^{\hat{n} \times \hat{n}}$. Suppose $B \in P$. Then $\Phi B \in ex(\tilde{P})$ if and only if there  exists $D \in K^{\perp}[\Phi]$ such that $B\in \text{opt}_{P,D}$ and $\Phi$ maps all the matrices in $\text{opt}_{P,D}$ to the same point in $\tilde{P}$.
\end{thm}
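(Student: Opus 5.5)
The key tool is Lemma \ref{lem:linear_alg_opt}, together with a transfer of linear functionals through $\Phi$. A functional $\langle \cdot, D\rangle$ on $\R^{\hat{k}\times\hat{n}}$ factors through $\Phi$ exactly when $D \in K^{\perp}[\Phi]$. Since $\Phi$ is a linear surjection onto $\R^{\hat n\times\hat n}$, we have $K^{\perp}[\Phi]$ equal to the image of the adjoint $\Phi^*$. Concretely, for every $D\in K^{\perp}[\Phi]$ there is a $C\in\R^{\hat n\times\hat n}$ with $D=\Phi^* C$, so that $\langle B',D\rangle=\langle \Phi B', C\rangle$ for all $B'$. Conversely, $\Phi^*C \in K^\perp[\Phi]$ for every $C$. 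This is the finite-dimensional identity $\mathrm{im}(\Phi^*) = \ker(\Phi)^\perp$. I will also use that $\Phi(P)=\tilde P$, which is what ``surjection between the polytopes'' means.

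\emph{($\Leftarrow$).} Suppose $D\in K^\perp[\Phi]$, $B\in \text{opt}_{P,D}$, and $\Phi$ is constant on $\text{opt}_{P,D}$. Write $D=\Phi^*C$. I claim $\Phi B$ is the unique maximizer of $\langle\cdot,C\rangle$ on $\tilde P$.

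First, $\Phi B$ is a maximizer. For any $T\in\tilde P$, pick $B'\in P$ with $\Phi B'=T$. Then $\langle T,C\rangle=\langle B',D\rangle\le\langle B,D\rangle=\langle \Phi B,C\rangle$.

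Second, it is unique. If $T$ also attains the maximum, then any preimage $B'$ of $T$ lies in $\text{opt}_{P,D}$. Hence $T=\Phi B'=\Phi B$.

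Lemma \ref{lem:linear_alg_opt} then gives $\Phi B\in ex(\tilde P)$.

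\emph{($\Rightarrow$).} Suppose $T=\Phi B\in ex(\tilde P)$. By Lemma \ref{lem:linear_alg_opt}, there is a $C$ such that $T$ is the unique maximizer of $\langle\cdot,C\rangle$ on $\tilde P$. Set $D=\Phi^*C\in K^\perp[\Phi]$. For any $B'\in P$,
\[
\langle B',D\rangle=\langle\Phi B',C\rangle\le\langle T,C\rangle=\langle B,D\rangle,
\]
so $B\in\text{opt}_{P,D}$. Moreover, every $B'\in\text{opt}_{P,D}$ satisfies $\langle \Phi B',C\rangle=\langle T,C\rangle$. Uniqueness of the maximizer on $\tilde P$ then forces $\Phi B'=T$, so $\Phi$ maps all of $\text{opt}_{P,D}$ to the same point. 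This mirrors Approach 1 in the proof of Proposition \ref{prop:extreme_surjection_R_to_F}.

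The main obstacle is not the optimization argument but the bookkeeping. One piece is justifying $K^\perp[\Phi]=\mathrm{im}(\Phi^*)$ with respect to the Frobenius inner product; I would prove it by a dimension count or by citing the fundamental theorem of linear algebra. The other is being careful that ``surjection between polytopes'' is read as $\Phi(P)=\tilde P$. In particular, the ($\Leftarrow$) direction needs every point of $\tilde P$ to have a preimage in $P$.

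If the paper's $\Phi$ is only affine, as with $\Psi$, I would note that left multiplication by $\Psi$ is in fact linear. For a genuinely affine map, the constant offset only shifts every value of the functional by the same amount, so the argument is unchanged.
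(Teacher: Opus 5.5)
Your proof is correct and follows essentially the same route as the paper. Both directions transfer a linear functional through the adjoint of $\Phi$ and then invoke Lemma~\ref{lem:linear_alg_opt}, with $D=\Phi^\top W$ in the forward direction exactly as you have it. In the reverse direction the paper obtains the preimage explicitly as $W=(\Phi^\top)^\dagger D$ via the Moore--Penrose pseudoinverse, while you cite $K^{\perp}[\Phi]=\mathrm{im}(\Phi^*)$ directly; this is the same fact, and it holds even without surjectivity.
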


\begin{proof}
    ($\implies$) Since $\Phi B$ is an extreme point of $\tilde{P}$, by Lemma \ref{lem:linear_alg_opt} there exists $W \in \R^{\hat{n} \times \hat{n}}$ such that $\langle W , \Phi B \rangle > \langle W , T \rangle$ for all $T \in \hat{P} - \{\Phi B\}$. Consider $D = \Phi^\top  W \in \R^{\hat{k} \times \hat{n}}$. Then $D \in K^{\perp}[\Phi]$ because for any $\hat{B} \in K[\Phi]$, $\langle D,\hat{B} \rangle = tr[D^\top \hat{B}]=tr[W^\top (\Phi\hat{B})]=0$. Also, for any $B' \in P$,
\begin{align*}
\langle D, B' \rangle &= tr[D^\top  B'] = tr[(\Phi^\top  W)^\top  B'] = tr[ W^\top   (\Phi B') ] = \langle W , \Phi B' \rangle  
\end{align*}
Then for any $\tilde{B} \in P$, if $\Phi \tilde{B} = \Phi B$, then 
$$
\langle D, B \rangle =  \langle W, \Phi B \rangle = \langle W, \Phi \tilde{B} \rangle = \langle D, \tilde{B} \rangle
$$
Alternatively, if $\Phi \tilde{B} \ne \Phi B$, then
$$
\langle D, B \rangle =  \langle W, \Phi B \rangle > \langle W, \Phi \tilde{B} \rangle = \langle D, \tilde{B} \rangle
$$
where the strict inequality follows from the definition of $W$. Thus $B \in \text{opt}_{P,D}$ and $\Phi$ sends all of $\text{opt}_{P,D}$ to $\Phi B$.

\noindent ($\impliedby$) Suppose there  exists $D \in K^{\perp}[\Phi]$ such that $B\in \text{opt}_{P,D}$ and $\Phi$ maps all of $\text{opt}_{P,D}$ to the same point. 
Define $W \in \R^{\hat{n} \times \hat{n}}$ as $W = (\Phi^\top )^\dagger D$ where $\dagger$ indicates the Moore–Penrose pseudoinverse. For any $B' \in P$,
\begin{align*}
    \langle W, \Phi B' \rangle = tr[(\Phi B')^\top  W] = tr[ (\Phi B')^\top  (\Phi^\top )^\dagger D] = tr[ B'^\top  \Phi^\top  (\Phi^\top )^\dagger D]
\end{align*}
A standard result from linear algebra holds that $\Phi^\top  (\Phi^\top )^\dagger$ is a projection matrix onto $K^{\perp}[\Phi]$. Since $D$ is in $K^{\perp}[\Phi]$, $\Phi^\top  (\Phi^\top )^\dagger D = D$. Hence,
$$
\langle W, \Phi B' \rangle = tr[ B'^\top  \Phi^\top  (\Phi^\top )^\dagger D] = tr[ B'^\top  D] = \langle D, B' \rangle 
$$
Given some $T \in \tilde{P}$, the surjectivity of $\Phi$ guarantees the existence of a $B' \in P$ with $T = \Phi B'$. Either $B'$ is in $\text{opt}_{P,D}$ or it is not. If $B' \in \text{opt}_{P,D}$, since $\Phi$ maps all of $\text{opt}_{P,D}$ to the same point, we must have $T = \Phi B$. If $B' \notin \text{opt}_{P,D}$, then $T \ne \Phi B$ so
$$\langle W, T \rangle = \langle D, B' \rangle < \langle D, B \rangle = \langle W,  \Phi B  \rangle$$
Therefore, there exists a $W \in \R^{\hat{n} \times \hat{n}}$ such that $\Phi B$ is the unique maximizer, implying $\Phi B \in ex(\tilde{P})$ by Lemma \ref{lem:linear_alg_opt}. 
\end{proof}

Using Theorem \ref{thm:ex_preserve_surjection}, we provide a geometric characterization of extreme points of $R_{F}$ that are preserved under $\Psi$ by setting $\Phi = \Psi$, $\hat{k} = k$, $\hat{n} = n$, $P = R_{F}$, and $\tilde{P} = F$.

\begin{cor}
\label{cor:k_perp_characterization_F}
    Suppose $B \in R_{F}$. Then $\Psi B \in ex(F)$ if and only if there  exists $D \in K^{\perp}[\Psi]$ such that $B\in \text{opt}_{R_{F},D}$ and $\Psi$ maps all the matrices in $\text{opt}_{R_{F},D}$ to the same point.
\end{cor}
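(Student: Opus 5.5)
This is a direct instantiation of Theorem \ref{thm:ex_preserve_surjection}, so the plan is to check that its hypotheses hold with $\Phi = \Psi$ (left multiplication by the scale matrix), $\hat k = k$, $\hat n = n$, $P = R_F$ and $\tilde P = F$.

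First, $\Psi$ is linear as a map $\R^{k\times n}\to\R^{n\times n}$, since it is left multiplication by a fixed matrix. Second, $R_F$ and $F$ are convex polytopes: this was observed when each was defined, as both are bounded and cut out by finitely many linear constraints. Third, $\Psi$ maps $R_F$ onto $F$, which is Theorem \ref{thm:RF_to_F_surjection}. The dimension requirement $\hat k \ge \hat n$ holds because $k = 2^{n-1} \ge n$ for all $n \ge 1$.

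The main point needing care is the meaning of ``linear surjection between polytopes'' in Theorem \ref{thm:ex_preserve_surjection}.
\begin{itemize}
\item In the ($\impliedby$) direction, the proof only uses that every $T\in F$ has a preimage in $R_F$. That is exactly Theorem \ref{thm:RF_to_F_surjection}.
\item The same direction also uses that $\Phi^\top(\Phi^\top)^\dagger$ projects onto $K^\perp[\Phi]$. This holds for any linear map on $\R^{k\times n}$ and does not require $\Psi$ to be surjective onto all of $\R^{n\times n}$.
\item The ($\implies$) direction uses only Lemma \ref{lem:linear_alg_opt} applied to $F$, together with the identity $\langle \Psi^\top W, B'\rangle = \langle W, \Psi B'\rangle$.
\end{itemize}
I would note explicitly that ``surjection'' is meant from $R_F$ onto $F$, so no global surjectivity of $\Psi$ is needed.

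Substituting these identifications into Theorem \ref{thm:ex_preserve_surjection} then yields the corollary verbatim.
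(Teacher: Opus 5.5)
Your proposal is correct and matches the paper, which obtains the corollary by setting $\Phi = \Psi$, $\hat k = k$, $\hat n = n$, $P = R_F$, $\tilde P = F$ in Theorem \ref{thm:ex_preserve_surjection}, with Theorem \ref{thm:RF_to_F_surjection} supplying the surjection. Your extra hypothesis-checking is sound; one small correction is that both directions of that theorem also use the inclusion $\Psi(R_F) \subseteq F$, not only the existence of preimages, and this is the other half of Theorem \ref{thm:RF_to_F_surjection}, so it is already covered by what you cite.
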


Since the above result holds for any $B \in R_{F}$, it also holds in particular when $B \in ex(R_{F})$. 

\section{Heuristic Constructor Analysis of Section \ref{algorithmic}}
\label{app_proofs_algorithmic}

\subsection{Runtime Analysis of Algorithm \ref{alg:heuristic}}

The following proposition shows how Step 8 can be computed efficiently and will help us establish the runtime of Algorithm \ref{alg:heuristic}. 

\begin{prop} 
\label{prop:max_compute} 
    The value of $q$ in Step 8 of the Algorithm \ref{alg:heuristic} can be computed as $ q = \min\left\{q_0,...,q_{n-1}, \frac{c_j}{zs}  \right\}$ where,
    $$q_i = \begin{cases}
        \frac{e^\epsilon r_{i+1} -
    r_{i} }{e^\epsilon s_{i+1} - s_{i}}, &p_i=1\\
        \frac{r_{i}  - e^{-\epsilon} r_{i+1} }{s_i -e^{-\epsilon} s_{i+1}}, &p_i=-1\\
    \end{cases}$$
\end{prop}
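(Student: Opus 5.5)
The plan is to write the feasible set in Step 8 as an intersection of half-lines in $\gamma$ and read off its right endpoint. The set has two kinds of constraints. The fixed-point constraint $\gamma z s \le c_j$ is one inequality. Because $zs>0$ (footnote to the definition of $H(x)$), it is equivalent to $\gamma \le c_j/(zs)$.

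The privacy constraint on $r - \gamma s$ decomposes, by Definition \ref{defn:neighbor_indist}, into $2(n-1)$ linear inequalities. For each $i \in \{0,\dots,n-2\}$ they are
\begin{align*}
(r_i - \gamma s_i) - e^{\epsilon}(r_{i+1} - \gamma s_{i+1}) &\le 0, \\
e^{-\epsilon}(r_{i+1} - \gamma s_{i+1}) - (r_i - \gamma s_i) &\le 0.
\end{align*}
Each is affine in $\gamma$. It is also satisfied at $\gamma = 0$, since $r$ is $\epsilon$-neighbor indistinguishable as an invariant of the algorithm. The feasible set is therefore a closed interval containing $0$, and $q$ is the minimum of the individual upper thresholds over the constraints whose $\gamma$-coefficient has the restricting sign.

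Next I work out those signs using that $s$ is a scale with pattern $p$. Suppose $p_i = 1$, so $s_{i+1} = e^{\epsilon}s_i$. The first inequality becomes $(r_i - e^\epsilon r_{i+1}) + \gamma(e^\epsilon s_{i+1} - s_i) \le 0$. Its coefficient is $e^\epsilon s_{i+1} - s_i = (e^{2\epsilon}-1)s_i > 0$, so it yields the bound $\gamma \le (e^\epsilon r_{i+1} - r_i)/(e^\epsilon s_{i+1} - s_i) = q_i$. The second inequality has coefficient $s_i - e^{-\epsilon}s_{i+1} = 0$, so it places no restriction on $\gamma$ and is just the original valid constraint on $r$. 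The case $p_i = -1$ is symmetric: the first inequality's coefficient vanishes, and the second gives $\gamma \le (r_i - e^{-\epsilon}r_{i+1})/(s_i - e^{-\epsilon}s_{i+1}) = q_i$ with positive denominator. Intersecting all these half-lines gives $q = \min\{q_0,\dots,q_{n-2}, c_j/(zs)\}$. The index $q_{n-1}$ in the statement I will interpret as vacuous (there is no pair $(n-1,n)$), or simply note that the listed range should end at $n-2$.

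The main subtle point is confirming that the maximum exists and is the stated minimum, rather than the set being unbounded or empty. Each $q_i \ge 0$ because $r$ satisfies the corresponding privacy inequality. The term $c_j/(zs)$ is positive and finite because $c_j > 0$ inside the inner loop. Hence the minimum is a finite nonnegative number and is attained, since all constraints are closed.

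I will also remark why the pattern adjustment in Step 6 matters, as foreshadowed for Theorem \ref{thm:heur_halts}. When the privacy constraint on $r$ at $i$ is tight in one direction, $p_i$ is flipped. This makes the binding direction the one whose coefficient vanishes, so $q_i > 0$, which guarantees $q>0$. That positivity is not needed for the formula itself, only for progress of the algorithm.
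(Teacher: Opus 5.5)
Your proposal is correct and takes essentially the same route as the paper. Both split the $\epsilon$-neighbor indistinguishability of $r-\gamma s$ into two inequalities per adjacent pair. Both then use the pattern of the scale $s$ to show that one $\gamma$-coefficient vanishes, leaving a constraint $r$ already satisfies, while the other is positive and gives $\gamma \le q_i$, and finally intersect with $\gamma \le c_j/(zs)$.

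Your version is also slightly more careful than the paper's in three places:
\begin{itemize}
\item The paper writes the feasible sets as open intervals $(-\infty, q_i)$ and $(-\infty, c_j/(zs))$, so its maximum would not literally be attained; you correctly use closed half-lines.
\item The paper lists $q_{n-1}$ even though only indices $0,\dots,n-2$ arise; you note the range should end at $n-2$.
\item You explicitly justify $zs>0$, which the paper uses without comment.
\end{itemize}
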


\begin{proof}
    In Step 8 of the algorithm, for any possible value of $\gamma$, $r-\gamma s$ is $\epsilon$-neighbor indistinguishable if and only if for every $0 \le i \le n-2$ the following two constraints hold.
    \begin{align*}
        e^{-\epsilon}(r_{i+1} - \gamma s_{i+1}) &\le r_i - \gamma s_i\\
        & \text{ and } \\
        r_i - \gamma s_i  &\le e^\epsilon (r_{i+1} - \gamma s_{i+1})
    \end{align*} 
    Let $\mathbb{L}_i$ be the set of values for $\gamma$ for which the first constraint holds and let $\mathbb{U}_i$ be the set of values for $\gamma$ for which the second constraint holds.
    $$
    \mathbb{L}_i= \Big\{ \gamma :  e^{-\epsilon}(r_{i+1} - \gamma s_{i+1}) \le r_i - \gamma s_i \Big\}$$
    $$\mathbb{U}_i = \Big\{ \gamma :  r_i - \gamma s_i  \le e^\epsilon (r_{i+1} - \gamma s_{i+1}) \Big\}
    $$
    Additionally, let 
    $$
    \mathbb{V} = \{ \gamma : \gamma zs \le c_j\} = \left(-\infty, \frac{c_j}{zs}\right)
    $$
    Then the value chosen for $q$ in Step 8 can be written as the maximum of the intersection of all of these sets.
     $$
    q = \max \left\{ \mathbb{V} \cap \bigcap_{0 \le i \le n-2} \mathbb{L}_i \cap \mathbb{U}_i  \right\}$$
    The sets $\mathbb{L}_i$ and $\mathbb{U}_i$ can be rewritten as, 
$$
\mathbb{L}_i= \Big\{ \gamma :\gamma \big( s_i -e^{-\epsilon}
s_{i+1}  \big)  \le r_{i}  - e^{-\epsilon} r_{i+1}  \Big\}
$$
$$
\mathbb{U}_i = \Big\{ \gamma :\gamma \big( e^\epsilon s_{i+1} -
s_{i}  \big)  \le e^\epsilon r_{i+1} -
r_{i}  \Big\}
$$
The right hand side of each inequality above is non-negative by $\epsilon$-neighbor indistinguishability of $r$. We consider two cases separately: when $p_i = 1$ and when $p_i = -1$.

In the case when $p_i = +1$, the left hand side of the inequality of $\mathbb{L}_i$ is zero, so $\mathbb{L}_i = \R$. Meanwhile, the left hand side of $\mathbb{U}_i$ is $\gamma$ times a positive quantity, so the inequality can be written as 
$$
\gamma \le \frac{e^\epsilon r_{i+1} -
r_{i} }{e^\epsilon s_{i+1} -
s_{i}} = q_i
$$
where $q_i$ is defined as in the statement of the lemma. Therefore $\mathbb{U}_i = (-\infty, q_i)$ so $\mathbb{L}_i \cap \mathbb{U}_i = (-\infty, q_i)$.

Alternatively, in the case when $p_i = -1$, the left hand side of the inequality of $\mathbb{U}_i$ is zero, so $\mathbb{U}_i = \R$. Meanwhile, the left hand side of the inequality of $\mathbb{L}_i$ is $\gamma$ times a positive quantity, so the inequality can be written, 
$$
\gamma \le \frac{r_{i}  - e^{-\epsilon} r_{i+1} }{s_i -e^{-\epsilon}
s_{i+1}} = q_i
$$
where $q_i$ is defined as in the statement of the lemma. Therefore $\mathbb{L}_i = (-\infty, q_i)$ so $\mathbb{L}_i \cap \mathbb{U}_i = (-\infty, q_i)$. Therefore, in either case,
$$
    q = \max \left\{ \left(-\infty, \frac{c_j}{zs}\right) \cap \bigcap_{0 \le i \le n-2} (-\infty, q_i) \right\} = \min\left\{q_0,...,q_{n-1}, \frac{c_j}{zs}  \right\} 
$$
completing the proof.
\end{proof}

Observe that each $q_i$ can be computed in constant time, so the value of $q$ in Step 8 can be computed in $\mathcal{O}(n)$ time. Using this result, we can now establish the runtime of Algorithm \ref{alg:heuristic} in Theorem \ref{thm:heur_halts}. 

%%%%%%%%%%%%%%%%%%%%%%%%%%%%%%%%%%%%%

\medskip

\noindent \textbf{Proof of Theorem \ref{thm:heur_halts}.} First, we show that the inner loop executes a maximum of $2n-1$ times during the entire program execution (and not per outer-loop execution). Our proof proceeds by demonstrating a map from the program state to the non-negative integers that decrements with each inner-loop iteration.

    At any point in the execution of the algorithm, define $\Omega$ to be the number of columns $j$ with $c_j > 0$, plus the number of rows $0 \le i \le n-2$ such that $r_i \notin \{e^{-\epsilon} r_{i+1}, e^{\epsilon} r_{i+1}\}$. We show that $\Omega$ decreases with each pass through the inner loop. Since $\Omega$ can be no more than the number of columns plus the number of rows minus 1, this establishes that the algorithm's inner loop executes at most $2n-1$ times.

    Once $c_j=0$ for some $0 \le j \le n-2$, column $j$ will never be selected again by the algorithm, so the bound $c_j = 0$ will continue to hold through the remaining execution. Similarly, once there is a row $i$ with $r_{i+1} = e^{\epsilon} r_{i}$, pattern $p$ will be set in Step 6 so that $p_i = +1$. Let $r'$ be the next value of variable $r$ assigned in Step 9. Then we have,
    \begin{align*}
        r'_{i+1} &= r_{i+1} -qs_{i+1} = e^{\epsilon}r_{i} - qe^{\epsilon}s_{i} =e^{\epsilon}r'_{i}
    \end{align*}
    An analogous result follows when $r_{i+1} = e^{-\epsilon}r_{i}$, establishing that once a row $i$ meets the $\epsilon$-neighbor-indistinguishability bound, the bound continues to hold through the remaining execution.

    Next, we show that in each inner loop iteration, $\Omega$ decreases.  In one pass of Step 8, either $q = c_j$, in which case, $c_j$ is set to 0 in Step 9, decreasing $\Omega$. Alternately, if $q \ne c_j$, then $r - \gamma s$ is $\epsilon$-neighbor-indistinguishable for $\gamma = q$ but not $\epsilon$-neighbor-indistinguishable for any $\gamma > q$. There must therefore be some position $i$ such that the ratio $\frac{(r - \gamma s)_{i+1} }{(r - \gamma s)_{i}}$ is within $[e^{-\epsilon},e^{+\epsilon}]$ for $\gamma \le q$, and outside the interval for $\gamma > q$. Note that the numerator and denominator are both positive as long as $c_j > 0$, because the algorithm always updates each column so that it remains $\epsilon$-neighbor indistinguishable. By the previous argument, position $i$ cannot be one for which the $\epsilon$-neighbor indistinguishability bound already holds. Since the ratio is continuous in $\gamma$ as long as the ratio is finite, we must have $\frac{(r - \gamma s)_{i+1}}{(r - \gamma s)_{i}} \in \{ e^{-\epsilon},e^{+\epsilon}\}$ when $\gamma = q$. This establishes that the $\epsilon$-neighbor indistinguishability bound goes from not binding to binding during Step 9, decreasing $\Omega$.

    Since $\Omega$ decreases with each inner loop iteration and cannot be negative, the algorithm must halt. The number of inner loop iterations is bounded by the maximum possible value of $\Omega$, which is at most $2n-1$ when the algorithm begins. 

    Next, we show that the outer loop executes a maximum of $n$ times. To see this, note that once a column $j$ is selected in Step 3, the inner loop will continue executing until $c_j = 0$ (by the previous argument, we know the inner loop will halt). At that point, column $j$ will never be selected again, so $c_j = 0$ will continue to hold through the remaining execution. Since each of the $n$ columns can only be selected once, the outer loop executes a maximum of $n$ times.

    Applying the above analysis, Table \ref{tab:runtime} lists the total number of times each step may execute, along with a big-$\mathcal{O}$ bound for each time it executes. Note in particular the max repetitions are total for an entire program execution, and not per iteration of any loop. Totaling the values in the table, the runtime of the entire algorithm is  $\mathcal{O}\left(\max\{\tau n, n^2\}\right)$. \qed

    \begin{table}[!h]
    \begin{center}
    \begin{tabular}{|c|c|c|}
    \hline
         Step & Max Repetitions & Runtime per Repetition \\
         \hline
         1 &  $1$ & $\mathcal{O}(n^2)$ \\
         2 &  $n$ & $\mathcal{O}(1)$ \\
         3 &  $n$  & $\mathcal{O}(\tau)$ \\
         4 & $2n -1$ & $\mathcal{O}(1)$ \\
         5 & $2n-1$ & $\mathcal{O}(n)$ \\
         6 & $2n-1$ & $\mathcal{O}(n)$ \\
         7 & $2n-1$ & $\mathcal{O}(n)$ \\
         8 & $2n-1$ & $\mathcal{O}(n)$ \\
         9 & $2n-1$ & $\mathcal{O}(n)$ \\
         10 & $2n-1$ & $\mathcal{O}(1)$ \\
         11 & $n$ & $\mathcal{O}(1)$ \\
         12 & $1$ & $\mathcal{O}(1)$ \\
         \hline
    \end{tabular}
    \end{center}
    \vspace{1mm}
    \caption{Total execution time attributable to each step of Algorithm \ref{alg:heuristic}. }
    \label{tab:runtime}
    \end{table}
    
%%%%%%%%%%%%%%%%%%%%%%%%%%%%%%%%%%%%%

\subsection{Extreme Point Analysis of Algorithm \ref{alg:heuristic}}

Given inputs $z, \epsilon,$ and $\kappa$ to Algorithm \ref{alg:heuristic}, we denote the output as $O[z,\epsilon, \kappa] \in \R^{n \times n}$. For readability, we will omit the inputs and simply refer to this matrix as $O$. In Theorem \ref{thm:heur_output_extreme_F} below, we prove that $O$ is an extreme point of $F$.  

To do so, we first show in Lemma \ref{lem:first_mismatch} that the probability placed on the diagonal by Algorithm \ref{alg:heuristic} is maximal in a particular sense. Next, we consider the set of all possible vectors that could be valid choices for a column at a particular point in algorithm execution. We show in Lemma  \ref{lem:column_extremeness} that this set is a polytope and that the vector selected by the algorithm is one of its extreme points. Finally, we show that if an algorithm sequentially fills each column with a vector that is extreme in this sense, then $O$ is an extreme point of $F$, thereby finishing the proof of Theorem  \ref{thm:heur_output_extreme_F}.

In the execution of the algorithm, the function $\kappa$ will select columns in a particular sequence. For a column $j$, let $\text{prev}(j)$ denote the set of all columns chosen before $j$, and $\text{post}(j)$ denote the set of all columns chosen after $j$.

\begin{defn}
    Given column $0 \le j \le n-1$, let $F_j$ be the set of matrices $O' \in F$ such that $O'_{l} = O_{l}$ for all $l \in \text{prev}(j)$.
\end{defn}

By construction, when $j$ is the first column selected, $F_j = F$, and when $j$ is the last column selected, $F_j = \{O\}$. Moreover, for any columns $j$ and $j' \in \text{post}(j)$, $F_j \supseteq F_{j'}$.

Notice that each $F_{j}$ is formed by adding linear constraints to $F$, so it is also a convex polytope. The following lemma shows that each time the algorithm fills in a column, it puts the maximum possible probability on the diagonal, given all the previously filled-in columns.

\begin{lem}
    \label{lem:first_mismatch}
    For any $0 \le j \le n-1$ and for any $O' \in F_j$ such that $O'_j \ne O_j$, we have $o_{j, j} > o'_{j, j}$.
    \end{lem}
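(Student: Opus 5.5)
The plan is to reduce the lemma to a one-column optimization problem and then compare any competitor column with the algorithm's column entry by entry, moving outward from the diagonal.

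\textbf{Reduction.} Fix $j$, and let $r$ be the value of the residual vector $\mathds{1}_n - A\mathds{1}_n$ at the moment $\kappa$ first selects column $j$. At that moment $c_j = z_j$, and $r = \mathds{1}_n - \sum_{l\in\text{prev}(j)} O_l$.

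Take any $O' \in F_j$ and put $x = O'_j$. Then $x$ is $\epsilon$-neighbor indistinguishable and $zx = z_j$. Moreover $r - x = \sum_{l \in \text{post}(j)} O'_l$, so by Observation \ref{obs:conic_combination} the vector $r-x$ is also $\epsilon$-neighbor indistinguishable. The algorithm's column $O_j = \sum_m q_m s^{(m)}$ lies in the same set, since the inner loop preserves exactly these invariants. It therefore suffices to show the following claim: among all $x$ with
\[
x \text{ and } r-x \text{ both } \epsilon\text{-neighbor indistinguishable}, \qquad zx = z_j,
\]
the vector $O_j$ is the unique maximizer of the $j$-th coordinate.

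\textbf{Structure of $O_j$ at termination.} Next I will record what holds when the inner loop stops. Using the invariants from the proof of Theorem \ref{thm:heur_halts} (a bound on $r$, once binding, stays binding), each added scale follows the single-peaked pattern at $j$ except at edges where $r$ is already tight, where it follows $r$'s pattern. Consequently, for every edge $i$ on the left of $j$ (that is, $i<j$), one of two things holds. Either every scale added to column $j$ rose at edge $i$, so $O_{i+1,j} = e^{\epsilon} O_{i,j}$ and $O_j$ climbs toward the peak as steeply as allowed. Or the residual $r - O_j$ is tight at edge $i$ with the opposite orientation, so $(r-O_j)_{i} = e^{\epsilon}(r-O_j)_{i+1}$. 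The symmetric dichotomy holds for edges $i \ge j$ on the right of the peak. I expect this bookkeeping to be manageable but fiddly. The point to check is that an edge where some scales were flipped must end with $r - O_j$ tight, and this follows because a flip only occurs at an edge where $r$ is already tight, and that tightness persists.

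\textbf{Propagation (the main obstacle).} Suppose $x$ is feasible with $x_j \ge o_{j,j}$. Set $y = x - O_j$, so $y_j \ge 0$. I will prove by induction outward from $j$ that a suitable rescaled quantity stays non-negative. At an edge of the first type, $x_i \ge e^{-\epsilon} x_{i+1}$ while $O_{i,j} = e^{-\epsilon}O_{i+1,j}$, which gives $y_i \ge e^{-\epsilon} y_{i+1}$. At an edge of the second type, the tightness of $r - O_j$ combined with the neighbor indistinguishability of $r - x$ gives $y_i \ge e^{\epsilon} y_{i+1}$. In both cases, if $y_{i+1} \ge 0$ then $y_i \ge 0$. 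Hence $y \succeq 0$ on both sides of $j$.

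Then $zy = zx - zO_j = z_j - z_j = 0$ forces $y_i = 0$ wherever $z_i > 0$. Since $y_j \ge 0$ propagates through every edge with the multiplicative factors above, I expect to sharpen the induction to ``$y_{i}\ge \lambda_i y_j$ with $\lambda_i>0$''. This would force $y_j=0$ as soon as some $z_i>0$, which holds because $z$ is a probability vector. Pushing the same chain of inequalities back outward should then force $y \equiv 0$.

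The delicate part is making the second-type inequality yield positive multiples of $y_{i+1}$, so that strictness genuinely propagates, and handling coordinates with $z_i = 0$. If this sharpening fails, my fallback is to use the extreme-point characterization of the one-column polytope, in the spirit of Lemma \ref{lem:linear_indep_constraint_matrix}: there are $n-1$ tight, linearly independent edge constraints plus $zx = z_j$, which gives uniqueness of the maximizer directly.

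Finally, we conclude that $x_j \ge o_{j,j}$ implies $x = O_j$. Equivalently, every $O' \in F_j$ with $O'_j \ne O_j$ satisfies $o'_{j,j} < o_{j,j}$, which is the statement of the lemma.
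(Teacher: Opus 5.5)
Your argument is the paper's argument run forward instead of by contradiction. Your two edge types are exactly Subcases 1 and 2 of the paper's Case 2, and their mirror image in Case 3. The contradiction the paper derives at its first mismatch $\hat i$ (where $o'_{\hat i,j}<o_{\hat i,j}$ but $o'_{\hat i+1,j}\ge o_{\hat i+1,j}$) is exactly your inequality $y_{\hat i}\ge\lambda y_{\hat i+1}$. If $z$ is strictly positive, your proof is complete as soon as you have $y\succeq 0$ and $zy=0$, and no sharpening is needed.

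\textbf{The gap is the final step when $z$ has zero entries.} Your chain consists only of lower bounds propagating \emph{outward} from $j$. Once you know $y_j=0$, it gives nothing beyond $y\succeq 0$. Combined with $zy=0$, it forces $y_i=0$ for every $i$ between $\min\mathrm{supp}(z)$ and $\max\mathrm{supp}(z)$, but leaves rows outside that interval free. Your fallback does not help: $n-1$ tight independent edge constraints plus $zx=z_j$ make $O_j$ a vertex of the one-column polytope, not the unique maximizer of $x_j$.

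In fact the gap cannot be closed, because the statement is false in this case. Take $n=3$, $\epsilon=\ln 2$, $z=(\tfrac12,\tfrac12,0)$, and let $\kappa$ pick column $0$ first, so $F_0=F$.
\begin{itemize}
\item Starting from $r=\mathds{1}_3$, the algorithm adds $qs$ with $s=(\tfrac47,\tfrac27,\tfrac17)^\top$ and $q=\tfrac76$. Here the edge-$0$ bound and $c_0/(zs)$ bind together, so $O_0=(\tfrac23,\tfrac13,\tfrac16)^\top$.
\item Let $O'$ be the matrix with rows $(\tfrac23,\tfrac13,0)$, $(\tfrac13,\tfrac23,0)$, $(\tfrac23,\tfrac13,0)$. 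It has unit row sums, satisfies $zO'=z$, and has $\ln 2$-neighbor indistinguishable columns, so $O'\in F_0$.
\item Yet $O'_0\ne O_0$ while $o'_{0,0}=o_{0,0}=\tfrac23$. In your notation $y=(0,0,\tfrac12)^\top$, supported exactly where $z$ vanishes.
\end{itemize}

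The paper's proof has the same hole in its opening step. There, ``$o'_{i,j}\ge o_{i,j}$ for all $i$ with one strict inequality implies $zO'_j>z_j$'' silently needs $z_i>0$ at the strict row. So the lemma needs an extra hypothesis. Under $z_0>0$ and $z_{n-1}>0$ (in particular under full support), your chain gives $y_0=y_{n-1}=0$ and then squeezes every $y_i$ to zero, which proves the lemma. Without such a hypothesis, your sharpened induction yields only the weak inequality $o_{j,j}\ge o'_{j,j}$.
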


\begin{proof}
    Assume for contradiction that $o_{j, j} \le o'_{j, j}$. Since $O'_{j} \ne O_{j}$, there must be some row $0 \le i \le n-1$ such that $o'_{i,j} < o_{i, j}$ (otherwise, we would have $o'_{i,j} \ge o_{i, j}$ for all rows $i$, with strict inequality in at least one row; this implies $z O'_{j} > z O_{j} = z_{j}$, so $O' \notin F$, contradicting $O' \in F_j \subseteq F$). 

    \medskip
    
    \noindent We consider three possible cases for $i$.

    \medskip
    
    \noindent \underline{Case 1}: Consider $i = j$. Then $o_{j, j} > o'_{j, j}$, yielding an immediate contradiction.

    \medskip

    \noindent \underline{Case 2}: Consider $i < j$. Since the condition $o'_{i', j} < o_{i', j}$ holds for $i' = i$ but not for $i' = j$, there must be some $i \le \hat i < j$ such that the condition holds for $i' = \hat i$ but not for $i' = \hat i + 1$. That is, 
    
    $$
    o'_{\hat i,j} < o_{\hat i, j} \text{ and } o'_{\hat i+1 ,j} \ge o_{\hat i+1, j}
    $$
    There are two subcases to consider.

   \begin{adjustwidth}{2.5em}{0pt}
    \underline{Subcase 1}: Suppose $o_{\hat i, j} = e^{-\epsilon} o_{\hat i+1, j}$. Then $o'_{\hat i,j} < o_{\hat i, j} = e^{-\epsilon} o_{\hat i+1, j} \le e^{-\epsilon} o'_{\hat i+1 ,j}$, contradicting the $\epsilon$-neighbor indistinguishability of $O'_{j}$.
    \end{adjustwidth}

   \medskip

    \begin{adjustwidth}{2.5em}{0pt}
    \underline{Subcase 2}: Suppose $o_{\hat i, j} > e^{-\epsilon} o_{\hat i+1, j}$. Then the scale variable $s$ that is added to column $j$ during Step 9, cannot always have $s_{\hat i} = e^{-\epsilon} s_{\hat i+1} $. This means that at some point during the outer loop iteration for column $j$, the value of pattern variable $p$ at the end of Step 6 had $p_{\hat i} = -1$, which can only happen if the variable $r$ fulfills $r_{\hat i} = e^{\epsilon} r_{\hat i+1} $. The proof of Theorem \ref{thm:heur_halts} establishes that once the remainder fulfills this bound, it always fulfills the bound. 
    
    Let $\hat r$ be the value of variable $r$ at the start of the outer loop iteration for column $ j$. Then, the value at the end of the outer loop iteration would be given by $\hat r - O_{j}$, and the bound above can be written as
    $$\hat r_{\hat i} - o_{\hat i, j} = e^{\epsilon} (\hat r_{\hat i+1} - o_{\hat i+1, j}) $$

    \noindent Since $O$ is an output of our algorithm, we also have

    $$\hat r = \mathds{1}_n - \sum_{\bar j \in \text{prev}(j)} O_{\bar j}$$

    \noindent Furthermore, since the rows of $O'$ must sum to 1, we can decompose $O'$ as follows.
    \begin{align*}
        \mathds{1}_n &= \sum_{\bar j \in \text{prev}(j)} O'_{\bar j} + O'_{j} + \sum_{\tilde j \in \text{post}(j)} O'_{\tilde j} \\
        &= \sum_{\bar j \in \text{prev}(j)} O_{\bar j} + O'_{j} + \sum_{\tilde j \in \text{post}(j)} O'_{\tilde j} 
    \end{align*}

    \noindent where the second equality follows from $O' \in F_j$. Combining the prior two equalities yields
    $$ 
    \hat r -  O'_{j} = \sum_{\tilde j \in \text{post}(j)} O'_{\tilde j} 
    $$
    \noindent The right-hand side is a sum of $\epsilon-$neighbor indistinguishable columns, so must be $\epsilon-$neighbor indistinguishable. However, the left-hand side is not $\epsilon$-neighbor indistinguishable because,
    \begin{align*}
        (\hat r -  O'_{j})_{\hat i} &= \hat r_{\hat i} - o'_{\hat i , j} > \hat r_{\hat i} - o_{\hat i , j} \\
        &= e^{\epsilon} (\hat r_{\hat i+1} - o_{\hat i+1, j}) \\
        &\ge e^{\epsilon} (\hat r_{\hat i+1} - o'_{\hat i+1, j})  \\
        &= e^\epsilon  (\hat r -  O'_{j})_{\hat i+1}
    \end{align*}
    This yields a contradiction. 
    \end{adjustwidth}

    \noindent Since all subcases yield a contradiction, we deduce a contradiction for the entire case.

    \medskip

    \noindent \underline{Case 3}: Consider $i > j$. This case is also yields a contradiction. It is analogous to the prior case, and is hence omitted. 

    \medskip

    \noindent Since each case yields a contradiction, we conclude $o_{j, j} > o'_{j, j}$.
\end{proof} 

Next, we consider the set of all possible vectors that could be inserted into column $j$ when the algorithm reaches the column. This is defined formally below.

\begin{defn}
    Given column $0 \le j \le n-1$, let $\Gamma_j$ be the set of entries for column $j$ in $F_j$; $\Gamma_j = \{ O'_j : O' \in F_j \}$.
\end{defn}

Notice that $\Gamma_j$ is a linear projection of $F_j$ so it is also a convex polytope. We show the column $O_j$ chosen by the algorithm is an extreme point of $\Gamma_j$.

\begin{lem}
    \label{lem:column_extremeness}
    Given column $0 \le j \le n-1$, $O_j \in ex(\Gamma_j)$.
\end{lem}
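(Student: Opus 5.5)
The plan is to derive Lemma \ref{lem:column_extremeness} from Lemma \ref{lem:first_mismatch} together with the characterization of extreme points as unique maximizers of linear functionals (Lemma \ref{lem:linear_alg_opt}). First, $O_j \in \Gamma_j$ because $O \in F_j$: $O$ lies in $F$ by construction of the algorithm, and trivially agrees with itself on $\text{prev}(j)$. Next, take the linear functional on $\R^n$ given by $\phi(v) = v_j$, the diagonal coordinate of column $j$.

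Lemma \ref{lem:first_mismatch} says that any $O' \in F_j$ with $O'_j \ne O_j$ has $o'_{j,j} < o_{j,j}$. Translated into $\Gamma_j$: every $v \in \Gamma_j$ with $v \ne O_j$ comes from some $O' \in F_j$ with $O'_j = v \ne O_j$, so $\phi(v) = o'_{j,j} < o_{j,j} = \phi(O_j)$. Hence $O_j$ is the unique maximizer of $\phi$ over $\Gamma_j$.

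By Lemma \ref{lem:linear_alg_opt}, applied to the polytope $\Gamma_j$ (a linear projection of the polytope $F_j$, hence itself a polytope), $O_j \in ex(\Gamma_j)$. Alternatively, one can argue directly. If $O_j = \alpha v + (1-\alpha) w$ with $v \ne w$ in $\Gamma_j$ and $\alpha \in (0,1)$, then at least one of $v, w$ differs from $O_j$. But then both must differ from $O_j$, since if one equalled $O_j$ the convex combination would force the other to equal it too. So both have strictly smaller $j$-th coordinate, and their convex combination has $j$-th coordinate strictly less than $o_{j,j}$, a contradiction.

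There is essentially no obstacle here, since the real work was done in Lemma \ref{lem:first_mismatch}. The only point needing care is the bookkeeping between $\Gamma_j$ and $F_j$: each element of $\Gamma_j$ must lift to a matrix in $F_j$, which holds by definition, so that Lemma \ref{lem:first_mismatch} applies. I would use the direct convex-combination argument, so as to avoid verifying that $\Gamma_j$ satisfies the hypotheses of Lemma \ref{lem:linear_alg_opt} as stated for matrix spaces.
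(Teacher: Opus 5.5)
Your proposal is correct and matches the paper's proof: both lift each $v \in \Gamma_j$ to some $O' \in F_j$ and use Lemma~\ref{lem:first_mismatch} to show that $O_j$ is the unique maximizer of the coordinate functional $v \mapsto v_j$ over $\Gamma_j$, hence an extreme point. Your direct convex-combination argument just writes out the easy direction of Lemma~\ref{lem:linear_alg_opt}, and it is a valid, slightly more self-contained substitute.
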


\begin{proof}
    Consider any $v \in \Gamma_j$, where $v \ne O_j$. Then $v = O'_j$ for some $O' \in F_j$, and since $v \ne O_j$, $O'_j \ne O_j$, By the Lemma \ref{lem:first_mismatch}, we have $o_{j,j} > v_j$. Let $u \in \R^n$ be the column vector with $u_j = 1$, $u_{j'} = 0 $ for all $j' \ne j$. Then $u^\top O_j = o_{j,j} > v_j = u^\top v$. Then $O_j$ is the unique maximizer of the linear objective $u$ in $\Gamma_j$, so it is an extreme point.
\end{proof}

Finally, the last step is to show that if an algorithm always fills each column $j$ with a vector that is extreme in the sense above, the overall matrix will also be an extreme point of $F$.

\medskip

\noindent \textbf{Proof of Theorem  \ref{thm:heur_output_extreme_F}.}  First, we show $O \in F$. First, note that local variables $A,r,$ and $c$ are only updated in Step 9. This step ensures that the conditions $A \mathds{1}_n + r = \mathds{1}_n$ and $z A + c = z$ are maintained. Since the algorithm continues until $r = 0_n$ and $c = 0^\top_n$, the final value of $A$ fulfills the conditions $A\mathds{1}_n = \mathds{1}_n$ and $zA = z$. Additionally, the update in Step 9 proceeds by adding a positive multiple of an $\epsilon$-scale to a column of $A$. At every point of execution, every column of $A$ is a conic combination of $\epsilon$-scales, and is thus $\epsilon$-neighbor indistinguishable. Since $O$ is the final value of $A$, we have $O \in F$.
    
Next, we show $O \in ex(F)$. Assume for contradiction that there exists $0 < \alpha < 1$ and distinct matrices $X,Y \in F$ such that $\alpha X + (1-\alpha) Y = O$. Then there must be at least one column in which $X$ and $Y$ differ. Let $j$ be the first column chosen by the algorithm for which $X_j \ne Y_j$. Then $X, Y \in F_j$ so $X_j, Y_j \in \Gamma_j$. Moreover $\alpha X_j + (1-\alpha) Y_j = O_j$, contradicting $O_j \in ex(\Gamma_j)$. \qed

%%%%%%%%%%%%%%%%%%%%%%%%%%%%%

\section{Characterization of Unfixed Polytope $U$}
\label{app_characterizing_U}

In this Appendix section, we turn our attention to the general problem of describing extreme points of $U$ (including extreme points that include zero columns). Similar to $ex(F)$, it is possible to characterize $ex(U)$ using a framework that is based on $\epsilon$-scales. We will see that extreme points may be built from conic combinations of scales that are linearly independent of each other in a particular sense (Definition \ref{defn:psi_lin_simple}). To make this explicit, we will define a new polytope $R_U$ which encodes how much of each scale contributes to every column of $T \in U$. 

\subsection{Representing $U$ with $\epsilon$-Scales}

We now define the unfixed representation polytope $R_U$. Analogous to $R_F$, for a representation matrix $B \in R_U$, $b_{i,j}$ quantifies the amount of scale $i$ that enters column $j$ of $T$.
% below, which will form an algebraically tractable transformation of $U$. 

\begin{defn}
    We define the \textit{unfixed representation polytope} $R_U$ as
$$
R_U = \big\{B \in \R^{k \times n}_{\ge 0} \text{ : }  \Psi B \mathds{1}_n = \mathds{1}_n\big\}
$$
\end{defn}

The following theorem will show that every matrix in $U$ can be reached from a representation matrix in $R_U$ under this linear map.

\begin{thm}
\label{thm:RU_to_U_surjection}
    $\Psi $ is an affine surjection from $R_U$ to $U$.
\end{thm}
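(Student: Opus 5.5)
The plan mirrors the proof of Theorem \ref{thm:RF_to_F_surjection}, with the fixed-point constraint removed throughout. Left multiplication by $\Psi$ is linear, so its restriction to $R_U$ is affine. What remains is to show that the image $\Psi(R_U)$ equals $U$. We prove the two inclusions separately.

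\textbf{Inclusion $U \subseteq \Psi(R_U)$.} Let $T \in U$. By Lemma \ref{lem:U_col_scales}, each column $T_j$ is a conic combination of $\epsilon$-scales. Hence there is a vector $\beta(j) \in \R^k_{\ge 0}$ with $T_j = \Psi \beta(j)$. Stack these as the columns of a matrix $B \in \R^{k \times n}_{\ge 0}$, so that $T = \Psi B$. Since $T$ is a transition matrix, $\Psi B \mathds{1}_n = T\mathds{1}_n = \mathds{1}_n$. This is the only constraint defining $R_U$ besides nonnegativity, so $B \in R_U$.

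\textbf{Inclusion $\Psi(R_U) \subseteq U$.} Let $B \in R_U$ and set $T = \Psi B$.
\begin{enumerate}
\item Nonnegativity: $\Psi$ and $B$ both have nonnegative entries, so $T \in \R^{n\times n}_{\ge 0}$.
\item Row sums: $T\mathds{1}_n = \Psi B \mathds{1}_n = \mathds{1}_n$ holds by definition of $R_U$.
\item Privacy: each column $T_j = \Psi B_j$ is a conic combination of $\epsilon$-scales. Every scale is $\epsilon$-neighbor indistinguishable, so by Observation \ref{obs:conic_combination} each column is $\epsilon$-neighbor indistinguishable.
\end{enumerate}
By Definition \ref{defn:U}, $T \in U$. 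Combining the two inclusions gives $\Psi(R_U) = U$, which is surjectivity.

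We expect no real obstacle. The substantive step, that columns decompose into scales, is already supplied by Lemma \ref{lem:U_col_scales}. The only point needing care is nonnegativity of $T$, which is part of the definition of a transition matrix and is not implied by the row-sum constraint alone. It follows from entrywise nonnegativity of $\Psi$ and $B$, as noted in the first item above.
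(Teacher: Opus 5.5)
Your proposal is correct and matches the paper's proof essentially step for step: linearity gives affineness, Lemma \ref{lem:U_col_scales} supplies $U \subseteq \Psi(R_U)$, and the row-sum constraint together with Observation \ref{obs:conic_combination} gives $\Psi(R_U) \subseteq U$. Your separate nonnegativity check is valid but redundant, since $\epsilon$-neighbor indistinguishability already forces nonnegative entries when $\epsilon > 0$.
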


\begin{proof}
    Since $\Psi$ is a linear map, it is affine, and hence it only remains to show $\Psi(R_U) = U$. 
    
     Suppose $T \in \Psi(R_U)$. Then there is some $B \in R_U$ such that $\Psi B = T$. By definition of $R_U$, $T\mathds{1}_n = \Psi B \mathds{1}_n = \mathds{1}_n$. Column $j$ of $T$ is given by $\Psi B_j$. This is a conic combination of $\epsilon$-neighbor indistinguishable vectors, and is hence $\epsilon$-neighbor indistinguishable. Since every column of $T$ is $\epsilon$-neighbor indistinguishable, $T$ is $\epsilon$-differentially private, establishing $T \in U$.
     
    Conversely, suppose $T \in U$. By Lemma \ref{lem:U_col_scales}, every column $j$ of $T$ can be written as a conic combination of $\epsilon$-scales $T_j = \Psi \beta^j$ for some $\beta^j \in \R^k_{\ge 0}$. Construct the $n \times k$ matrix $B$ where the $j^{th}$ column of $B$ is $\beta^j$. Then $T = \Psi B$.  Furthermore, $\Psi B \mathds{1}_n = T \mathds{1}_n = \mathds{1}_n$, so $B \in R_U$. Therefore, $T \in \Psi(R_U)$. Hence, $\Psi(R_U) \supseteq U$, producing $\Psi(R_U) = U$. 
\end{proof}

In Proposition \ref{prop:extreme_surjection_R_to_U}, we show the linear map $\Psi$ also provides a relation between the extreme points of $U$ and the extreme points of $R_U$. 

\begin{prop}
\label{prop:extreme_surjection_R_to_U}
    For all $T \in ex(U)$ there exists $B \in ex(R_U)$ such that $T = \Psi B$.
\end{prop}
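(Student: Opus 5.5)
The plan is to mirror Approach 2 in the proof of Proposition \ref{prop:extreme_surjection_R_to_F}, with $R_U$ and $U$ in place of $R_F$ and $F$. The facts we need are that $R_U$ is a polytope, that $\Psi$ maps $R_U$ onto $U$ (Theorem \ref{thm:RU_to_U_surjection}), and that $\Psi$ is linear.

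First, check that $R_U$ is a convex polytope. It is cut out by finitely many linear constraints, namely $B \succeq 0$ and $\Psi B \mathds{1}_n = \mathds{1}_n$. It is also bounded. Every entry of $\Psi$ is positive, since an $\epsilon$-scale is a probability vector obeying the privacy ratio bounds. So for each row $i$ and scale index $l$,
\[
\psi_{i,l}\sum_j b_{l,j} \le (\Psi B\mathds{1}_n)_i = 1 .
\]
Hence every entry of $B$ is at most $1/\min_{i,l}\psi_{i,l}$. By the Krein--Milman (Minkowski) theorem, every element of $R_U$ is then a convex combination of finitely many points of $ex(R_U)$.

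Now fix $T \in ex(U)$. Theorem \ref{thm:RU_to_U_surjection} gives some $W \in R_U$ with $\Psi W = T$. Write $W = \sum_{i=1}^l \theta_i B^{(i)}$ with $B^{(i)} \in ex(R_U)$, $\theta_i > 0$ and $\sum_i \theta_i = 1$. By linearity,
\[
T = \sum_{i=1}^l \theta_i\, \Psi B^{(i)},
\]
and each $\Psi B^{(i)}$ lies in $U$, again by Theorem \ref{thm:RU_to_U_surjection}.

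Because $T$ is extreme in $U$, this convex combination with positive weights must be trivial, so $\Psi B^{(i)} = T$ for every $i$. Strictly, extremality as defined only rules out two-point combinations. For $l>1$ we regroup: set $\bar T = \sum_{i\ge 2}\frac{\theta_i}{1-\theta_1}\Psi B^{(i)} \in U$, so that $T = \theta_1 \Psi B^{(1)} + (1-\theta_1)\bar T$. This forces $\Psi B^{(1)} = \bar T = T$. Taking $B = B^{(1)}$ finishes the proof.

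As an alternative, Approach 1 also carries over. Take a linear functional $c$ that $T$ uniquely maximizes over $U$ (Lemma \ref{lem:linear_alg_opt}), and maximize $\langle \Psi^\top c, \cdot\rangle$ over $R_U$ at some extreme point $B$. The same contradiction argument shows $\Psi B = T$.

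The only step needing care is boundedness of $R_U$, which is what guarantees that extreme points exist and generate $R_U$. The positivity of the entries of $\Psi$ settles it. Everything else is routine.
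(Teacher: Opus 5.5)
Your proof is correct and takes the paper's route: the paper omits this proof as a direct adaptation of Proposition \ref{prop:extreme_surjection_R_to_F}, whose Approach 2 is exactly your main argument and whose Approach 1 is your alternative. Your explicit boundedness check for $R_U$ (via positivity of the entries of $\Psi$) and your regrouping step, which reduces a multi-point convex combination to the two-point definition of extremality, fill in details the paper leaves implicit.
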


The proof of Proposition \ref{prop:extreme_surjection_R_to_U} is very similar to that of Proposition \ref{prop:extreme_surjection_R_to_F}, as is thus omitted. Hence, every extreme point in $U$ can be written as an extreme point in $R_U$ under the mapping $\Psi$. 

However, as we discovered when $\Psi$ represented matrices in $F$ using $R_F$, this representation  is not unique for matrices in $U$. As demonstrated in Example \ref{ex:unpreserved_extreme_RU_to_U} below, it is possible for multiple extreme points of $R_U$ to map to the same extreme point in $U$, and it is possible for an extreme point of $R_U$ to map to a non-extreme point of $U$ under $\Psi$.

\begin{ex}
\label{ex:unpreserved_extreme_RU_to_U} \textbf{$\Psi$ is not a 1-1 mapping between $ex(R_U)$ and $ex(U)$.} Similar to Example \ref{ex:unpreserved_extreme_RF_to_F}, consider the case when $\epsilon = \ln(2)$ and $n=3$. There are 4 possible $\epsilon$-scales, presented as
$$ \Psi = 
    \begin{bmatrix}
    4/7 & 2/5 & 1/4 & 1/7\\
    2/7 & 1/5 & 1/2 & 2/7\\
    1/7 & 2/5 & 1/4 & 4/7\\
    \end{bmatrix}
$$
Using the Avis-Fukuda vertex enumeration algorithm, we generate all extreme points of $U$ and $R_U$. We find $|ex(U)| = 27$ and $|ex(R_U)| = 36$. In particular, both matrices $B^{(4)}$ and $B^{(5)}$ are extreme points of $R_U$, yet they both map to the same extreme point in $U$ under $\Psi$.

$$ B^{(4)} = 
    \begin{bmatrix}
    7/6 & 0 & 0 \\
    0 & 0 & 0 \\
    2/3 & 0 & 0 \\
    7/6 & 0 & 0 \\
    \end{bmatrix} \in ex(R_U)
$$   

$$ B^{(5)} = 
    \begin{bmatrix}
    0 & 0 & 0 \\
    5/3 & 0 & 0 \\
    4/3 & 0 & 0 \\
    0 & 0 & 0 \\
    \end{bmatrix} \in ex(R_U)
$$

$$ \Psi B^{(4)} = \Psi B^{(5)}  = 
    \begin{bmatrix}
    1 & 0 & 0 \\
    1 & 0 & 0 \\
    1 & 0 & 0 \\
    \end{bmatrix} \in ex(U)
$$  
Additionally, the mapping $\Psi$ may send extreme points in $R_U$ to non-extreme points in $U$. For example, 
$$ B^{(6)} = 
    \begin{bmatrix}
    7/6 & 0 & 0 \\
    0 & 0 & 0 \\
    0 & 0 & 2/3 \\
    7/6 & 0 & 0 \\
    \end{bmatrix} \in ex(R_U)
$$  
yet $\Psi B^{(6)} \notin ex(U)$, as
\begin{align*}
\Psi B^{(6)} &= 
\begin{bmatrix}
    5/6 & 0 & 1/6 \\
    4/6 & 0 & 2/6 \\
    5/6 & 0 & 1/6 \\
\end{bmatrix} \\
& = 
\frac{1}{2} \begin{bmatrix}
    11/12 & 0 & 1/12 \\
    10/12 & 0 & 2/12 \\
    11/12 & 0 & 1/12 \\
\end{bmatrix} +
\frac{1}{2} \begin{bmatrix}
    9/12 & 0 & 3/12 \\
    6/12 & 0 & 6/12 \\
    9/12 & 0 & 3/12 \\
\end{bmatrix}
\end{align*}
\end{ex}

However, we can apply Theorem \ref{thm:ex_preserve_surjection} to characterize the necessary and sufficient conditions for which an extreme point of $R_U$ is also an extreme point of $U$ under $\Psi$ (as we did in Corollary \ref{cor:k_perp_characterization_F}).

\begin{cor}
\label{cor:k_perp_characterization_U}
    Suppose $B \in R_U$. Then $\Psi B \in ex(U)$ if and only if there  exists $D \in K^{\perp}[\Psi]$ such that $B\in \text{opt}_{R_U,D}$ and $\Psi$ maps all the matrices in $\text{opt}_{R_U,D}$ to the same point.
\end{cor}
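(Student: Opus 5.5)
The plan is to obtain Corollary \ref{cor:k_perp_characterization_U} as a direct specialization of Theorem \ref{thm:ex_preserve_surjection}, just as Corollary \ref{cor:k_perp_characterization_F} was obtained. I will set $\Phi = \Psi$ (left multiplication by the scale matrix, a linear map $\R^{k\times n}\to\R^{n\times n}$), $\hat k = k$, $\hat n = n$, $P = R_U$, and $\tilde P = U$. Once the hypotheses of Theorem \ref{thm:ex_preserve_surjection} are checked, its conclusion is word for word the statement of the corollary, for every $B \in R_U$.

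The hypotheses to check are as follows.

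First, $R_U$ and $U$ must be convex polytopes. $U$ is already shown to be one in Section \ref{characterization}. $R_U$ is cut out by the finitely many linear constraints $B \succeq 0$ and $\Psi B\mathds{1}_n = \mathds{1}_n$. It is bounded because every entry of $\Psi$ is strictly positive: each row sum gives $\sum_{i,j}\psi_{l,i}b_{i,j} = 1$, so $b_{i,j} \le 1/\min_{l,i}\psi_{l,i}$.

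Second, $\Psi$ must map $R_U$ onto $U$. This is exactly Theorem \ref{thm:RU_to_U_surjection}.

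Third, the use of the orthogonal complement and the pseudoinverse argument inside Theorem \ref{thm:ex_preserve_surjection} needs only that $\Phi$ be linear on the ambient matrix space, and left multiplication by $\Psi$ is linear.

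The only point requiring care is a mismatch of terminology. Theorem \ref{thm:ex_preserve_surjection} calls $\Phi$ a ``linear surjection'' between $P$ and $\tilde P$, but $\Psi$ need not be surjective onto all of $\R^{n\times n}$, since $\Psi$ has rank at most $n$. I will therefore reread the proof of that theorem and confirm that surjectivity is used only in the restricted sense $\Phi(P) = \tilde P$. In the ($\impliedby$) direction it is used to lift $T\in\tilde P$ to some $B'\in P$. The projection identity $\Phi^\top(\Phi^\top)^\dagger D = D$ for $D\in K^\perp[\Phi]$ holds for any linear map, because $K^\perp[\Phi]$ is the row space (the range of $\Phi^\top$) and $\Phi^\top(\Phi^\top)^\dagger$ is the orthogonal projection onto that range. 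With this confirmed, the corollary follows immediately. I expect this terminology check to be the main, though minor, obstacle; the rest is substitution.
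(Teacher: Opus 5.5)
Your proposal is correct and matches the paper's argument: the corollary is Theorem \ref{thm:ex_preserve_surjection} instantiated with $\Phi=\Psi$, $P=R_U$, $\tilde P=U$, with Theorem \ref{thm:RU_to_U_surjection} supplying $\Psi(R_U)=U$. Your boundedness check for $R_U$ and the projection identity are useful details that the paper leaves implicit.

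One small correction: the worry behind your terminology check is unfounded. $\Psi$ has full row rank $n$, because the $n$ single-peaked scales are linearly independent (Observation \ref{obs_ghosh_weights}). The induced map $B\mapsto\Psi B$ has rank $n\cdot\mathrm{rank}(\Psi)=n^2$, so it is in fact surjective onto all of $\R^{n\times n}$, and the bound $\mathrm{rank}(\Psi)\le n$ gives no reason to doubt this.
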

Since the above result holds for any $B \in R_U$, it holds in particular when $B \in ex(R_U)$. 

\subsection{Characterizing the Extreme Points of $R_U$}
\label{app_subsec_extreme_RU}

Next, we establish necessary and sufficient conditions for points in $R_U$ to be extreme, as as we did for $R_F$ in Theorem \ref{thm:extreme_f_characterization_general}. As we will see, an analogous theorem does hold, but the removal of a fixed-point constraint alters the affinely simplified condition. Whereas the condition of affinely simplified involved linear independence of  weighted differences of scales, we will see that the unfixed case involves linear independence of scales directly.

Given a column vector $x \in \R^k$, let $G(x)$ be the set of scales with positive coefficients in the corresponding positions of $x$, $G(x) = \{ \Psi_u : x_u>0 \}$. 

\begin{defn}
    \label{defn:psi_lin_simple} We call a $k \times n$ matrix $B$ \textit{$\Psi-$linearly simplified} if the additive union $\uplus_{j=0}^{n-1} G(B_j)$ is linearly independent. 
\end{defn}

Let $M_U(\Psi)$ be the set of $k \times n$ matrices $\mu$ such that $\Psi \mu \mathds{1}_n = 0_n$. By construction of $M_U(\Psi)$, if $X,Y \in R_U$, then $X-Y \in M_U(\Psi)$. This suggests that $M_U$ represents ``movement matrices'' that can potentially be added to matrices in $R_U$ to move through the affine space. The following lemma and theorem formalize the equivalence between the extreme points of $R_U$, the notion of $\Psi$-linearly simplified matrices, and these movement matrices.

\begin{lem}
\label{lem:psi_simple_unconstrained}
    Suppose $B \in R_U$. Then $B$ is $\Psi-$linearly simplified if and only if there does not exist a non-zero matrix $\mu \in M_U(\Psi)$ such that $\mu_{i,j}$ is zero whenever $b_{i,j}$ is zero.
\end{lem}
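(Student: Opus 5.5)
The plan mirrors the proof of Lemma \ref{lem:psi_simple_relationship_movement_matrices}, but it is simpler because $M_U(\Psi)$ drops the fixed-point constraint $z\Psi\mu = 0_n^\top$. As a result, no difference vectors are needed: each positive entry of $B$ corresponds directly to a scale in the multiset $\uplus_{j=0}^{n-1} G(B_j)$. We prove both directions by contrapositive, using the correspondence between a matrix $\mu$ supported on the support of $B$ and a collection of coefficients on the multiset of scales indexed by pairs $(i,j)$ with $b_{i,j}>0$.

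$(\implies)$ Suppose there is a non-zero $\mu \in M_U(\Psi)$ with $\mu_{i,j}=0$ whenever $b_{i,j}=0$. Write $S_j = \{i : b_{i,j}>0\}$. Then
$$0_n = \Psi\mu\mathds{1}_n = \sum_{j=0}^{n-1}\Psi\mu_j = \sum_{j=0}^{n-1}\sum_{i\in S_j}\mu_{i,j}\Psi_i .$$
The coefficients $\mu_{i,j}$ are indexed by the elements of the multiset $\uplus_j G(B_j)$, since each pair $(i,j)$ with $i\in S_j$ contributes one copy of $\Psi_i$. These coefficients are not all zero, because $\mu\neq 0$ and $\mu$ is supported inside the support of $B$. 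The display is therefore a nontrivial linear dependency, so $B$ is not $\Psi$-linearly simplified. We should state explicitly that a multiset containing a repeated vector counts as linearly dependent; this is consistent with the convention used for $H$ in Definition \ref{defn:psi_aff_simple}.

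$(\impliedby)$ Suppose $\uplus_j G(B_j)$ is linearly dependent, so that $\sum_j\sum_{i\in S_j}\alpha_{i,j}\Psi_i = 0_n$ with the $\alpha_{i,j}$ not all zero. Define $\mu_{i,j}=\alpha_{i,j}$ for $i\in S_j$ and $\mu_{i,j}=0$ otherwise. Then $\mu\neq 0$, $\mu$ vanishes wherever $B$ does, and $\Psi\mu\mathds{1}_n=\sum_j\Psi\mu_j=0_n$, so $\mu\in M_U(\Psi)$.

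The only delicate point is the bookkeeping between the multiset and the matrix entries. The same scale $\Psi_i$ may appear in several columns, and each appearance must receive its own coefficient, with the coefficient for the copy coming from column $j$ placed in entry $(i,j)$. Once this indexing is fixed, both directions are immediate. Unlike the fixed-point case, there is no need to single out a smallest index $l(j)$ or to rescale by $(z\Psi_i)^{-1}$.
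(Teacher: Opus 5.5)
Your proof is correct and follows the paper's argument essentially verbatim: both directions go by contrapositive, and the entries $\mu_{i,j}$ on the support of $B$ are identified with coefficients on the copies of $\Psi_i$ in $\uplus_{j} G(B_j)$. Your explicit remark that a scale repeated across columns already counts as a linear dependency is a useful clarification that the paper leaves implicit.
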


\begin{proof}
    $(\implies)$ We proceed by the contrapositive. Suppose there exists a non-zero movement matrix $\mu \in M_U(\psi)$ such that $\mu_{i,j}$ is zero whenever $b_{i,j}$ is zero. 

    For column $j$, let $P(j)$ be the set of row indices of $B_j$ that are non-zero. Then $G(B_j) = \{\Psi_p: p \in P(j)\}$. We can then write, 
    \begin{align*}
        \Psi \mu_j &= \sum_{p \in P(j)} \Psi_p \mu_{p,j}
    \end{align*} 
    \noindent So then
    \begin{align*}
        0_n & = \Psi \mu \mathds{1}_n = \sum_{j=0}^{n-1}  \Psi \mu_j = \sum_{j=0}^{n-1}  \sum_{p \in P(j)} \Psi_p \mu_{p,j}
    \end{align*}    
    Since $\mu \ne 0_{k \times n}$ , and the nonzero entries in $\mu_j$ can only occur in rows $P(j)$, some of the coefficients in the sum above must be nonzero. Thus, the above is a linear dependency among the vectors of the multiset $\uplus_{j=0}^{n-1} G(B_j)$, implying that $B$ is not $\Psi-$linearly simplified, as required.

    $(\impliedby)$ Again by the contrapositive, suppose $B$ is not $\Psi$-linearly simplified. Then $\uplus_{j=0}^{m-1} G(B_j)$ is linearly dependent. The linear dependency can be written as,
    $$
    \sum_{i,j} \mu_{i,j} \Psi_i  = 0_n
    $$
    for a collection of $\mu_{i,j}$'s, not all zero, where $\mu_{i,j} \ne 0$ only when $b_{i,j} > 0$, and equals 0 otherwise. Let $\mu$ be the matrix with value $\mu_{i,j}$ in position $(i,j)$. Then the sum above can be written in matrix notation as $\Psi \mu \mathds{1}_n = 0_n$. Therefore $\mu \in M_U(\Psi)$.
\end{proof}

\begin{thm}
    \label{thm:u_general_extreme} Suppose $B \in R_U$. Then $B \in ex(R_U)$ if and only if $B$ is $\Psi-$linearly simplified.
\end{thm}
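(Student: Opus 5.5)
The proof mirrors that of Theorem \ref{thm:extreme_f_characterization_general}, with Lemma \ref{lem:psi_simple_unconstrained} playing the role that Lemma \ref{lem:psi_simple_relationship_movement_matrices} played there. By Lemma \ref{lem:psi_simple_unconstrained}, it suffices to show the following: $B \in ex(R_U)$ if and only if there is no non-zero $\mu \in M_U(\Psi)$ whose support is contained in the support of $B$. Both directions are argued by contrapositive.

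\textbf{Extreme implies linearly simplified.} Suppose $B$ is not $\Psi$-linearly simplified. Lemma \ref{lem:psi_simple_unconstrained} then gives a non-zero $\mu \in M_U(\Psi)$ with $\mu_{i,j}=0$ whenever $b_{i,j}=0$. Choose
\[
0<\delta<\min_{\{(i,j)\,:\,\mu_{i,j}\neq 0\}} \frac{b_{i,j}}{|\mu_{i,j}|}.
\]
This minimum is positive, since $\mu_{i,j}\neq 0$ forces $b_{i,j}>0$. We check that $B\pm\delta\mu \in R_U$:
\begin{itemize}
\item Non-negativity is entrywise. Where $\mu_{i,j}=0$ the entry is unchanged, and elsewhere $\delta|\mu_{i,j}|<b_{i,j}$.
\item The row constraint holds because $\Psi(B\pm\delta\mu)\mathds{1}_n=\mathds{1}_n\pm\delta\Psi\mu\mathds{1}_n=\mathds{1}_n$.
\end{itemize}
Since $\mu\neq 0$, the two matrices are distinct, and $B=\tfrac12(B+\delta\mu)+\tfrac12(B-\delta\mu)$. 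Hence $B\notin ex(R_U)$.

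\textbf{Linearly simplified implies extreme.} Suppose $B=\theta X+(1-\theta)Y$ for distinct $X,Y\in R_U$ and some $\theta\in(0,1)$. Because $X,Y\ge 0$ entrywise, $b_{i,j}=0$ forces $x_{i,j}=y_{i,j}=0$. Set $\mu=X-Y$. Then $\mu\neq 0$, its support lies in that of $B$, and $\Psi\mu\mathds{1}_n=\mathds{1}_n-\mathds{1}_n=0_n$, so $\mu\in M_U(\Psi)$. Lemma \ref{lem:psi_simple_unconstrained} then shows $B$ is not $\Psi$-linearly simplified.

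There is no real obstacle here. The only points needing care are that $\delta$ is well defined and positive, which holds because the support of $\mu$ lies inside the support of $B$, and that the perturbation keeps $B$ non-negative. All of the linear-algebraic content is already contained in Lemma \ref{lem:psi_simple_unconstrained}.
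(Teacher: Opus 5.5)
Your proof is correct and follows the paper's argument essentially verbatim. Both directions go by contrapositive, with Lemma~\ref{lem:psi_simple_unconstrained} translating between linear dependence of $\uplus_j G(B_j)$ and a nonzero support-respecting $\mu \in M_U(\Psi)$, via the perturbation $B \pm \delta\mu$ in one direction and $\mu = X - Y$ in the other; you are also slightly more careful than the paper in noting explicitly that $B+\delta\mu \neq B-\delta\mu$ and that $X - Y \neq 0$.
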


\begin{proof}
    $(\implies)$ We proceed by the contrapositive. Suppose $B$ is not $\Psi$-linearly simplified. By the equivalence of $\Psi$-linearly simplified and movement matrices established in Lemma \ref{lem:psi_simple_unconstrained}, there exists a movement matrix $\mu \in M_U(\Psi)$ such that $\mu_{i,j} = 0$ whenever $b_{i,j} = 0$.  Choose any positive 
    
    $$
    \delta < \min_{\{(i,j) \text{ $|$ } \mu_{i,j} \ne 0\}} \bigg\{\frac{b_{i,j}}{|\mu_{i,j}|}\bigg\}
    $$ 
    Then $B \pm \delta \mu \in R_U$, as
    \begin{enumerate}
        \item For any $i,j$ entry, $\pm \delta \mu_{i,j} \le b_{i.j}$ with equality if and only if $b_{i,j} = 0$. Hence $(B \pm \delta \mu)_{i,j} \ge 0$.       
        \item $
    \Psi (B \pm \delta \mu) \mathds{1}_n = \Psi B  \mathds{1}_n \pm \delta \Psi \mu \mathds{1}_n = \mathds{1}_n \pm 0_n = \mathds{1}_n 
    $
    \end{enumerate}
    But then 
    $$
    B = \frac{1}{2}(B + \delta \mu) + \frac{1}{2} (B - \delta \mu)
    $$
    Therefore $B \notin ex(R_U)$.
    
    $(\impliedby)$ Proceed by the contrapositive yet again. Suppose $B \notin ex(R_U)$. Then there exists distinct $X,Y \in R_U$ and $\theta \in (0,1)$ such that $B = \theta X + (1-\theta)Y$. By non-negativity, the convex combination above implies $x_{i,j} = 0 = y_{i,j}$ whenever $b_{i,j} = 0$. Let $\mu = X - Y$. Then $\mu_{i,j} = 0$ whenever $b_{i,j} = 0$. Also $\Psi \mu \mathds{1}_n = \Psi X \mathds{1}_n - \Psi Y \mathds{1}_n  = \mathds{1}_n - \mathds{1}_n = 0_n$. So $\mu \in M_U(\Psi)$, implying $B$ is not $\Psi$-linearly simplified.
\end{proof}

Speaking intuitively, Theorem \ref{thm:u_general_extreme} says that an extreme point of $U$ can be formed by summing multiples of linearly independent $\epsilon$-scales, each one contributing to a unique column. It is useful to make two points of comparison. Since at most $n$ $\epsilon$-scales can be linearly independent, if we add the condition that all $n$ columns are non-zero, by the pigeonhole principle each column must have a multiple of a single $\epsilon$-scale, so we immediately recover Theorem \ref{thm:nonzero_U}.

In a related study, Ghosh et al. focus on finding an optimal point in $U$ under a natural class of linear objective functions \cite{ghosh2009universally} that we call row-wise concentrating (Definition \ref{defn:row_wise_concentrating}).  In Appendix \ref{app_ghosh_scale}, we explain that the result of Ghosh et al. can be seen as a further restriction on $B$ in Theorem \ref{thm:u_general_extreme}.

One combinatorial comparison can be made between the extreme points of $R_F$ and the extreme points of $R_U$. In the case of an extreme point of $U$, the condition of $\Psi$-linearly simplified implies there must be at most $n$ positive matrix entries. Additionally, any row of $B \in ex(R_U)$ cannot have more than 1 positive entry -- otherwise, the same same scale would appear twice in the multiset $\uplus_{j=0}^{m-1} G(B_j)$, contradicting $\Psi$-linear-simplified. This analysis is collected in the following corollary.

\begin{cor}
\label{cor:nonzero_RU}
    Extreme matrices $B \in ex(R_U)$ contain at least $1$ and at most $n$ positive entries. Also, such matrices $B$ have at most 1 positive entry in every row. 
\end{cor}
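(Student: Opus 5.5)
The plan is to read both claims of Corollary \ref{cor:nonzero_RU} directly off Theorem \ref{thm:u_general_extreme}. That theorem says $B \in ex(R_U)$ exactly when $B$ is $\Psi$-linearly simplified, i.e. the multiset $\uplus_{j=0}^{n-1} G(B_j)$ is linearly independent. The multiset has one element for each positive entry $b_{u,j}$, namely the scale $\Psi_u$. So the number of positive entries of $B$ is exactly the size of $\uplus_j G(B_j)$, counted with multiplicity, and each bound becomes a statement about this multiset.

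For the lower bound, I will use that $B \in R_U$ satisfies $\Psi B \mathds{1}_n = \mathds{1}_n \ne 0_n$, so $B$ cannot be the zero matrix. Hence it has at least one positive entry. For the upper bound, the elements of the multiset are vectors in $\R^n$, and a linearly independent family in $\R^n$ has at most $n$ members. Counting with multiplicity, $B$ therefore has at most $n$ positive entries.

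For the row statement, suppose row $u$ of $B$ had two positive entries, say $b_{u,j} > 0$ and $b_{u,j'} > 0$ with $j \ne j'$. Then $\Psi_u$ would appear twice in $\uplus_j G(B_j)$. The relation $1\cdot \Psi_u - 1\cdot \Psi_u = 0_n$ between the two copies is then a nontrivial linear dependency, which contradicts $\Psi$-linear simplification.

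The only step needing care is treating linear independence of a \emph{multiset} correctly: repeated elements automatically count as dependent. This is the same convention used in Lemma \ref{lem:psi_simple_unconstrained}, where a dependency is a choice of coefficients $\mu_{i,j}$ attached to positions rather than to distinct vectors. In that language, the dependency for the row claim is the explicit movement matrix $\mu$ with $\mu_{u,j}=1$, $\mu_{u,j'}=-1$ and zeros elsewhere. It is supported on positive entries of $B$ and satisfies $\Psi\mu\mathds{1}_n = 0_n$, so by Lemma \ref{lem:psi_simple_unconstrained} and Theorem \ref{thm:u_general_extreme} it rules out $B$ being extreme. This gives the row claim directly without any multiset convention, and I would present it this way.
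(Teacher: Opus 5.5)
Your proposal is correct and follows the paper's own argument: both read the bounds off Theorem~\ref{thm:u_general_extreme}, using that the multiset $\uplus_{j} G(B_j)$ has one element per positive entry of $B$. Linear independence in $\R^n$ therefore caps the count at $n$ and forbids the same scale from appearing twice in a row. You also make explicit two things the paper leaves implicit: the lower bound, via $\Psi B \mathds{1}_n = \mathds{1}_n \neq 0_n$, and the row claim, recast as an explicit movement matrix; neither changes anything essential.
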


%As noted in Corollary \ref{cor:nonzero_RF}, the number of non-zero entries in an extreme point of $B$ is between $|\mathcal{P}|$ and $|\mathcal{P}|+n-1$ (inclusive). Consequently, when every entry of $z$ is positive, the extreme points of $R_F$ have at least as many positive entries as extreme points of $R_U$. %% Intuitively, this arises as additional scales may be required to satisfy $F$'s fixed-point constraint.
\section{Two-Stage Unfixed Optimum Constructor Analysis of Section \ref{algorithmic}}
\label{app_ghosh_scale}

In this section, we establish the key properties of the two-staged unfixed optimum constructor. In order to provide a fair comparison for fixed-point constructors, our design approach includes making reasonable assumptions where possible to enable efficient execution.

Our algorithm leverages a key result from Ghosh et al. \cite{ghosh2009universally}, who study count mechanisms that optimize a natural class of count error measures which we call \textit{row-wise concentrating} (Definition \ref{defn:row_wise_concentrating}).

Ghosh et al. show that any count mechanism that optimizes a row-wise concentrating objective is equivalent to the truncated geometric mechanism with post-processing. We will restate their result in terms of our framework in this section.

Recall from the main text that $\Sigma$ denotes the matrix containing all single-peaked scales in order. That is, $\Sigma_l$ is the single-peaked scale at position $l$. Also recall from Appendix \ref{app_subsec_extreme_RU} that given a column vector $x \in \R^k$, $G(x)$ is defined to be the set of scales with positive coefficients in the corresponding positions of $x$; that is, $G(x) = \{ \Psi_u : x_u>0 \}$. 

\begin{defn}
    Let $\mathcal{T}$ be the set of count mechanisms $T$ such that $T = \Psi B$ for $B \in R_U$ such that $\uplus_{j=0}^{n-1} G(B_j)$  is the set of all single-peaked scales, $\{\Sigma_l\}_{l=0}^{n-1}$.
\end{defn}

Let $\omega_l$ be the entry of $B$ that indicates the amount of scale $\Sigma_l$. 

\begin{obs}
    \label{obs_ghosh_weights}
    The row constraint of $R_U$ gives 
    $$\Psi B \mathds{1}_n = \sum_{l=0}^{n-1} \Psi \omega_l \Sigma_l = \mathds{1}_n$$ 
    It is straightforward to check that the single-peaked scales are linearly independent, and therefore there is a unique set of values for the $\omega_l$, which can be computed as follows.
$$\omega_l = \begin{cases}
    \frac{1-e^{-\epsilon}}{1+e^{\epsilon}}\sum_{i=0}^{n-1} e^{-\epsilon |i-l|}  , &0<l<n-1\\
    \frac{1-e^{-\epsilon}}{(1+e^{\epsilon})(1-e^\epsilon)} \sum_{i=0}^{n-1} e^{-\epsilon |i-l|} , &l \in \{0,n-1\}
\end{cases}
$$
\end{obs}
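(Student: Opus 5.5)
The plan is to reduce the row constraint to a linear system with a known, explicitly invertible matrix, and to read off both uniqueness and the closed form from its inverse. Write $a = e^{-\epsilon}\in(0,1)$. By Definition \ref{defn:single-peaked}, the single-peaked scale at position $l$ rises by a factor $e^{\epsilon}$ at each step up to row $l$ and falls by a factor $e^{-\epsilon}$ at each step afterward. Hence its entries are $(\Sigma_l)_i = a^{|i-l|}/S_l$, where $S_l = \sum_{i=0}^{n-1} e^{-\epsilon|i-l|}$ is the normalizing constant. Let $K\in\R^{n\times n}$ be the matrix with $K_{i,l} = a^{|i-l|}$. Then $\Sigma = K\,\mathrm{diag}(S_0,\dots,S_{n-1})^{-1}$. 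The constraint $\sum_l \omega_l \Sigma_l = \mathds{1}_n$ therefore becomes $Kx = \mathds{1}_n$ with $x_l = \omega_l/S_l$.

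\textbf{Step 1: linear independence.} $K$ is the Kac--Murdock--Szeg\H{o} matrix, whose determinant is $(1-a^2)^{n-1}\neq 0$. Since the diagonal scaling is invertible, the columns of $\Sigma$ are linearly independent and $x$ (hence $\omega$) is unique. To keep the argument self-contained, I would not cite the determinant formula. Instead I would verify directly that
\[
K^{-1}=\frac{1}{1-a^2}\,\mathrm{tridiag}\bigl(-a;\;1,1+a^2,\dots,1+a^2,1;\;-a\bigr)
\]
by multiplying against a column $a^{|i-l|}$ and checking the three index regimes $i<l$, $i=l$, $i>l$. The boundary rows $i=0$ and $i=n-1$ are the only care point; this is routine.

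\textbf{Step 2: solve.} Compute $x=K^{-1}\mathds{1}_n$ by summing the rows of the tridiagonal inverse:
\begin{itemize}
\item interior rows: $x_l=\frac{1+a^2-2a}{1-a^2}=\frac{1-a}{1+a}$;
\item endpoint rows $l\in\{0,n-1\}$: $x_l=\frac{1-a}{1-a^2}=\frac{1}{1+a}$.
\end{itemize}
Multiplying by $S_l$ gives $\omega_l = S_l(1-e^{-\epsilon})/(1+e^{-\epsilon})$ for $0<l<n-1$ and $\omega_l = S_l/(1+e^{-\epsilon})$ at the endpoints. All of these are positive, so $B$ is a valid element of $R_U$.

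\textbf{Main obstacle.} The only real risk is reconciling these constants with the printed expressions. My computation gives $1+e^{-\epsilon}$ in the interior denominator, where the statement has $1+e^{\epsilon}$. At the endpoints, the printed factor $(1-e^{\epsilon})$ is negative, which cannot be right since $\omega_l>0$.

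I would therefore check the endpoint derivation by hand in the case $n=2$. There the two scales are $(1,a)/(1+a)$ and $(a,1)/(1+a)$. The symmetric solution $\omega_0=\omega_1=1$ satisfies $\omega_0\Sigma_0+\omega_1\Sigma_1=\mathds{1}_2$, since each entry is $(1+a)/(1+a)=1$. My endpoint formula gives $\omega_0 = S_0/(1+a)=(1+a)/(1+a)=1$, which matches.

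If this check holds, I would present the derived constants and regard the printed ones as a typographical slip. The substantive claims — linear independence and uniqueness of $\omega$ — are unaffected either way.
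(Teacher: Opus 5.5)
Your proof is correct. The paper gives no derivation for this observation; it only says the check is straightforward. So your argument is what actually establishes it. You write $\Sigma = K\,\mathrm{diag}(S_0,\dots,S_{n-1})^{-1}$, where $K_{i,l}=e^{-\epsilon|i-l|}$ is the Kac--Murdock--Szeg\H{o} matrix, and verify its explicit tridiagonal inverse. That one computation settles linear independence and produces the closed form. One caveat: the tridiagonal inverse, and hence your endpoint value, presumes $n\ge 2$. For $n=1$ one simply has $\omega_0=1$.

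Your diagnosis of the printed constants is also right, and the statement as written is false. With $S_l=\sum_{i=0}^{n-1}e^{-\epsilon|i-l|}$, the correct values are $\omega_l=\frac{1-e^{-\epsilon}}{1+e^{-\epsilon}}S_l$ for $0<l<n-1$ and $\omega_l=\frac{S_l}{1+e^{-\epsilon}}$ for $l\in\{0,n-1\}$. These are exactly what the printed formulas become after replacing $e^{\epsilon}$ by $e^{-\epsilon}$ in the denominators. The stray $\Psi$ in $\sum_l \Psi\omega_l\Sigma_l$ should also be dropped.

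Besides your $n=2$ check, the paper's own Example~\ref{ex:unpreserved_extreme_RU_to_U} corroborates this. For $n=3$ and $\epsilon=\ln 2$, the matrix $B^{(4)}$ puts the three single-peaked scales in one column with weights $7/6$, $2/3$, $7/6$. These are exactly your values, since $S_0=S_2=7/4$ and $S_1=2$. The printed formulas would instead give $\omega_0=\omega_2=-7/24$ and $\omega_1=1/3$.
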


Note that the $\omega_l$ do not depend on the specific $T \in \mathcal{T}$. Given this notation, a central result from Ghosh et al. can be stated as follows.

\begin{thm} (Adapted from Ghosh et al. \cite{ghosh2009universally})
    \label{thm:ghosh_which_scales} 
    If $T$ is an optimal count mechanism for row-wise concentrating weight matrix $W$, then $T \in \mathcal{T}$.
\end{thm}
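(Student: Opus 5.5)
The plan is to translate the main theorem of Ghosh et al.\ \cite{ghosh2009universally} into the language of scales. Their result says that any count mechanism minimizing a row-wise concentrating loss factors as $T = GY$. Here $G$ is the $n \times n$ truncated geometric mechanism with parameter $\epsilon$, and $Y$ is a post-processing (stochastic) remap of $\{0,\dots,n-1\}$ that can be taken deterministic, i.e.\ $y_{l,j}\in\{0,1\}$ with one $1$ per row. I will take this factorization as given, citing their theorem. The work is to show that $GY$ has a representation $B \in R_U$ of the form required by the definition of $\mathcal{T}$.

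\textbf{Step 1: columns of $G$ are multiples of single-peaked scales.} For an interior column $0<l<n-1$ of the truncated geometric, the entries are proportional to $e^{-\epsilon|i-l|}$. Consecutive entries therefore differ by exactly a factor $e^{\pm\epsilon}$, rising up to row $l$ and falling afterwards, so the column is a positive multiple of $\Sigma_l$. The boundary columns $l\in\{0,n-1\}$ carry the accumulated tail mass, but their entries are still proportional to $e^{-\epsilon|i-l|}$ (monotone), so they are multiples of $\Sigma_0$ and $\Sigma_{n-1}$. Hence $G=\Sigma\,\mathrm{diag}(\omega')$ for some positive $\omega'$. Since $G\mathds{1}_n=\mathds{1}_n$ and the single-peaked scales are linearly independent, Observation \ref{obs_ghosh_weights} forces $\omega'_l=\omega_l$.

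\textbf{Step 2: build $B$.} We have $T_j=\sum_l \omega_l y_{l,j}\Sigma_l$. Define $B\in\R^{k\times n}_{\ge 0}$ by putting $b_{\iota(l),j}=\omega_l y_{l,j}$ in the row $\iota(l)$ of $\Psi$ that contains $\Sigma_l$, and zeros elsewhere. Then $\Psi B=\Sigma\,\mathrm{diag}(\omega)Y=T$ and $\Psi B\mathds{1}_n=T\mathds{1}_n=\mathds{1}_n$, so $B\in R_U$. Because $Y$ is deterministic, each row $l$ of $Y$ has exactly one positive entry, and $\omega_l>0$. So each $\Sigma_l$ appears in exactly one $G(B_j)$, and no other scale appears. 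Thus $\uplus_j G(B_j)=\{\Sigma_l\}_{l=0}^{n-1}$ and $T\in\mathcal{T}$.

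\textbf{Main obstacle.} The delicate point is the quantifier ``if $T$ is optimal''. When the objective has ties, an optimal $T$ could be $GY$ with a genuinely randomized $Y$. In that case some $\Sigma_l$ would be split across two columns, and the multiset condition would fail for this representation. Other representations of the same $T$ (which exist, see Example \ref{ex:unpreserved_extreme_RU_to_U}) do not obviously repair this.

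I would first try to show the condition is not actually broken. Suppose row $l$ of $Y$ has two positive entries. Row-wise concentration, combined with the optimality argument of Ghosh et al.\ (each row of $Y$ minimizes $\sum_j y_{l,j}\,\omega_l W_j^\top\Sigma_l$), then means the competing columns tie. I would check whether $T$ can still be rewritten so that each scale enters one column, using the linear-dependency moves of Lemma \ref{lem:psi_simple_unconstrained}.

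If that fails, I would state the result for optimal \emph{extreme} mechanisms, or for the optimum selected with deterministic remap. Lemma \ref{lem:linear_alg_opt} guarantees an optimal extreme point always exists, and this weaker form is all that Algorithm \ref{alg:ghosh_scale} needs.
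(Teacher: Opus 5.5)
Your Steps 1--2 are the translation the paper leaves implicit; it gives no argument beyond citing Ghosh et al. The columns of the truncated geometric mechanism are $\omega_l\Sigma_l$, and for a deterministic remap $Y$ the matrix $\Sigma\,\mathrm{diag}(\omega)Y$ has a representation placing each $\Sigma_l$ in exactly one column.

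The gap is in your opening premise, and the obstacle you flagged is fatal rather than delicate. Ghosh et al.\ prove that the geometric mechanism is universally optimal: for each such $W$, \emph{some} minimizer over $U$ has the form $GY$. They do not show that every minimizer does; the paper's own gloss of their result makes the same overstatement.

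The universal claim is false, so no re-representation via Lemma~\ref{lem:psi_simple_unconstrained} can rescue it. Suppose $T\in\mathcal{T}$ has representation $B$. Then $B$ has one positive entry in each row belonging to a $\Sigma_l$ and none elsewhere. Linear independence of the single-peaked scales, together with $\Psi B\mathds{1}_n=\mathds{1}_n$, forces that entry to equal $\omega_l$, so $|\mathcal{T}|\le n^n$. But the set of minimizers can be a positive-dimensional face of $U$:
\begin{itemize}
\item Trivially, $W=0_{n\times n}$ is row-wise concentrating and makes all of $U$ optimal.
\item For a natural measure, take $n=2$ and $w_{i,j}=z_i|i-j|$ with $z=(e^\epsilon,1)/(1+e^\epsilon)$. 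The constant mechanism with both rows $(1,0)$ and randomized response both attain the minimum $1/(1+e^\epsilon)$, so every point of the segment between them is optimal. Its midpoint is $GY$ with $Y$ having rows $(1,0)$ and $(1/2,1/2)$. This splits $\Sigma_1$ across two columns and is none of the four matrices in $\mathcal{T}$.
\end{itemize}

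Your fallback to optimal \emph{extreme} mechanisms also fails under ties. Take $n=4$, $W=0$, $a=e^\epsilon$. Let $T$ have first two columns $(1,a,1,a)^\top/(1+a)$ and $(a,1,a,1)^\top/(1+a)$, and zero columns otherwise. Then $T\in ex(U)$:
\begin{itemize}
\item Any $X,Y\in U$ with $T=\alpha X+(1-\alpha)Y$, $\alpha\in(0,1)$, vanish on the zero columns by nonnegativity.
\item The proportionality argument from the proof of Theorem~\ref{thm:nonzero_U} makes their remaining columns multiples of those of $T$.
\item Linear independence of those two columns forces both multiples to equal $1$.
\end{itemize}
Yet these columns are multiples of non-single-peaked scales. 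Such scales are extreme rays of the cone of $\epsilon$-neighbor indistinguishable vectors, so they have no other conic representation, and $T\notin\mathcal{T}$.

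What is true, and what your argument actually proves, is the existential version: for every row-wise concentrating $W$, some optimal $T$ lies in $\mathcal{T}$. To get it, take an optimal $GY$ from Ghosh et al.\ and replace $Y$ by a deterministic remap minimizing the linear function $Y\mapsto\langle W,GY\rangle$ over row-stochastic matrices. In particular, the stated theorem holds whenever the minimizer is unique. This existential form is all the proof of Theorem~\ref{thm:unfixed_optimal} uses. That proof only needs one optimal $T\in\mathcal{T}$ to compare against $\langle W,O\rangle=\sum_l\omega_l\min_c (W_c)^\top\Sigma_l$.
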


Given $T \in \mathcal{T}$, suppose $B, B' \in ex(R_U)$ with $\Psi B = \Psi B' = T$ and $\uplus_{j=0}^{n-1} G(B_j)= \uplus_{j=0}^{n-1} G(B'_j)$ is the set of all single-peaked scales. For each column $j$, $\Psi (B_j - B'_j) = T_j - T_j = 0_n$. The left hand side is a linear combination of the single-peaked scales, but these are linearly independent. Hence $B_j = B'_j$. We can therefore define $\text{col}_T(l)$ to be the column of $B$ with a positive coefficient associated with $\Sigma_l$.

Using this notation, the count error associated with weight matrix $W$ can be written,
\begin{align*}
    \langle W,  T \rangle &= \langle W,  \Psi B \rangle = \sum_{l=0}^{n-1} \omega_l (W
_{col_T(l)})^\top  \Sigma_l
\end{align*}
We next observe that the scales that are part of the optimum are ordered according to where their peaks are. This is made precise in the following lemma.

\begin{lem}
\label{lem:unfixed_left_scales_convexity}
    Given row-wise concentrating weight matrix $W$, and $0 \le l < l' \le n-1$, and $c$ is is the largest value of $\bar c \in \{0,...,n-1\}$ that minimizes $(W_{\bar c})^\top \Sigma_l$, and $c'$ is the largest value of $\bar c$ that minimizes $(W_{\bar c})^\top \Sigma_l'$, then $c' \ge c$.
\end{lem}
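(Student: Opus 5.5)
The plan is a monotone comparative statics argument. It rests on two facts: the single-peaked scales are ordered by the monotone likelihood ratio property, and for row-wise concentrating $W$ the difference between two columns of $W$ changes sign only once as the row index increases. I would argue by contradiction. Suppose $c' < c$. Since $c$ minimizes $(W_{\bar c})^\top \Sigma_l$, we have $(W_c - W_{c'})^\top \Sigma_l \le 0$. I would then show this inequality transfers from $\Sigma_l$ to $\Sigma_{l'}$. That gives $(W_c)^\top \Sigma_{l'} \le (W_{c'})^\top \Sigma_{l'}$, so $c$ is also a minimizer for $l'$. This is impossible, because $c > c'$ and $c'$ was chosen as the largest minimizer for $l'$.

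\textbf{Step 1 (MLR of single-peaked scales).} By Definition \ref{defn:single-peaked}, $\Sigma_l$ is proportional to $(e^{-\epsilon|i-l|})_{i=0}^{n-1}$, and every entry is positive. So for $l<l'$ the ratio is
$$\rho(i)=\Sigma_{l'}(i)/\Sigma_l(i) = C\, e^{-\epsilon(|i-l'|-|i-l|)}$$
for a positive constant $C$. The exponent $|i-l|-|i-l'|$ is nondecreasing in $i$: it equals $l-l'$ for $i\le l$, is increasing on $[l,l']$, and equals $l'-l$ for $i\ge l'$. Hence $\rho$ is nondecreasing in $i$.

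\textbf{Step 2 (single crossing).} For $c'<c$, let $g(i)=w_{i,c}-w_{i,c'}$. Row-wise concentrating means $w_{i,j}$ is nondecreasing in $|i-j|$. So $g(i)\ge 0$ whenever $|i-c|\ge |i-c'|$, which means $i\le (c+c')/2$. Likewise $g(i)\le 0$ whenever $i\ge (c+c')/2$. Choose $i_0$ as the threshold index: $g(i)\ge 0$ for $i<i_0$ and $g(i)\le 0$ for $i\ge i_0$.

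\textbf{Step 3 (transfer).} This is the Karlin-type inequality:
$$\sum_i g(i)\Sigma_{l'}(i)=\sum_i g(i)\rho(i)\Sigma_l(i)\le \rho(i_0)\sum_i g(i)\Sigma_l(i)\le 0.$$
The first inequality holds termwise. For $i<i_0$ we have $g\ge0$ and $\rho(i)\le\rho(i_0)$. For $i\ge i_0$ we have $g\le 0$ and $\rho(i)\ge\rho(i_0)$. The second inequality uses $\rho(i_0)>0$ together with the hypothesis $g^\top\Sigma_l\le 0$. Combined with the first paragraph, this gives the contradiction, so $c'\ge c$.

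The main obstacle is Step 2. Row-wise concentrating alone does not make $g$ monotone, which is why the argument uses single crossing rather than first-order stochastic dominance. The delicate point is the tie case $|i-c|=|i-c'|$, where $g(i)=0$, so it can be assigned to either side of $i_0$. I expect only the weak monotonicity of $w$ to be needed, and row-wise convexity should not be required.
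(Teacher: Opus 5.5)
Your proof is correct, and it takes a genuinely different route from the paper's.

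\textbf{The paper's route.} The paper first proves the case $l'=l+1$ and then iterates in $l$. In that case the ratio $\Sigma_{l+1}/\Sigma_l$ takes only two values: $\alpha<1$ on rows $i\le l$ and $\beta>1$ on rows $i\ge l+1$. A two-case sign analysis ($c>l$ versus $c\le l$) then shows that the single adjacent inequality $(W_{c-1}-W_c)^\top\Sigma_l\ge 0$ survives rescaling the two blocks by $\alpha$ and $\beta$. This is the two-valued special case of your MLR/single-crossing transfer.

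\textbf{Your route.} You handle an arbitrary pair $l<l'$ in one step, using the full monotone ratio $\rho$. You also handle an arbitrary competitor $c'<c$ through the threshold $i_0$.

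\textbf{What each needs from $W$.} No row lies strictly between columns $c-1$ and $c$, so the paper's sign analysis only uses monotonicity of each row on one side of the diagonal. However, the paper then concludes $c'\ge c$ from the single comparison of column $c-1$ against column $c$. That one comparison does not by itself rule out some column further left being strictly better for $\Sigma_{l+1}$. The inference implicitly leans on the convexity of Lemma~\ref{lem:weight_matrix_scale_convex}, i.e.\ row-wise convexity, which Algorithm~\ref{alg:ghosh_scale} assumes but the lemma's statement does not.

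Your argument compares $c$ against every $c'<c$ directly, so it proves the lemma exactly as stated, with no convexity. The price is in Step~2, which uses the cross-diagonal comparison $|i-c|\ge|i-c'|\Rightarrow w_{i,c}\ge w_{i,c'}$ for rows $c'<i<c$. This includes forcing $g(i)=0$ at the tie $|i-c|=|i-c'|$. The literal definition (``$w_{i,j}$ non-decreasing in $|i-j|$'') does supply this. A weaker reading that only asks each row to be monotone separately on each side of the diagonal would not.
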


\begin{proof}
    We show that the claim holds when $l'=l+1$, as repeating the argument shows that the claim holds for any $l'>l$.

    Suppose $c$ is the largest value of $\bar c$ that minimizes $ (W_{\bar c})^\top \Sigma_l$. If $c=0$, then the claim follows trivially. If $c>0$, because it is a value of $\bar c$ that minimizes $(W_{\bar c})^\top \Sigma_l$, setting $\bar c$ to $c-1$ cannot decrease this quantity, so
    $$
    (W_{c-1})^\top \Sigma_l \ge  (W_{c})^\top \Sigma_l
    $$
Equivalently, $(W_{c-1}- W_c)^\top \Sigma_l \ge 0$. Let $\Delta w_i = w_{i,c-1} - w_{i,c}$. We can rewrite this inequality as
\begin{equation}
 \label{eq:inequality}
\sum_{i=1}^l  (\Delta w_i) \sigma_{i,l} + \sum_{i=l+1}^{n-1}  (\Delta w_i) \sigma_{i,l} \ge 0 \tag{$\star$}
\end{equation}
To prove the claim, we consider two cases: $c \le l$ or $c>l$. 

First consider $c > l$. Then the $\Delta w_i$ in the first sum are all non-positive (because $|c-i| = c-i  > c-1 - i = |(c-1) - i|$ and row-wise concentrating implies $w_{i,c} \ge w_{i,c-1}$), so the first sum is non-positive. Thus, the second sum must be non-negative for the inequality to hold.

Note that for $i \le l$, $\sigma_{i,l+1} = e^{i\epsilon} \sigma_{0,l+1} = \alpha e^{i\epsilon} \sigma_{0,l} = \alpha \sigma_{i, l}$, where $\alpha = \sigma_{0,l+1}/\sigma_{0,l} < 1$. Similarly, for $i \ge l+1$,  $\sigma_{i,l+1} =\beta \sigma_{i, l}$, where $\beta = \sigma_{n-1,l+1}/\sigma_{n-1,l} > 1$.
Then 
\begin{align*}
     (W_{c-1}- W_c)^\top \Sigma_{l+1} &= 
     \sum_{i=1}^{n-1} (\Delta w_{i}) \sigma_{i,l+1} \\
    &= \alpha   \sum_{i=1}^{l} \Delta w_{i} \sigma_{i,l} + \beta \sum_{i=l+1}^{n-1} \Delta w_{i} \sigma_{i,l} 
\end{align*}
Since $\alpha < 1$, and the first sum is non-positive, we have $\alpha   \sum_{i=1}^{l} \Delta w_{i} \sigma_{i,l} \ge \sum_{i=1}^{l} \Delta w_{i} \sigma_{i,l}$. Since $\beta > 1$, and the second sum is non-negative, we have $ \beta \sum_{i=l+1}^{n-1} \Delta w_{i} \sigma_{i,l} \ge \sum_{i=l+1}^{n-1} \Delta w_{i} \sigma_{i,l}$. Thus,
\begin{align*}
     (W_{c-1}- W_c)^\top \Sigma_{l+1} &\ge 
      \sum_{i=1}^{l} \Delta w_{i} \sigma_{i,l} + \sum_{i=l+1}^{n-1} \Delta w_{i} \sigma_{i,l} \\
      &= (W_{c+1}- W_c)^\top \Sigma_{l} \ge 0
\end{align*}
Thus, moving $\Sigma_{l+1}$ from column $c$ to column $c-1$ cannot decrease error; hence if $c'$ is the rightmost value of $\bar c$ that minimizes $ (W_{\bar c})^\top \Sigma_l'$, we must have $c'\ge c$. 

Next, consider the case when $c \le l$. Once again, consider the inequality in (\ref{eq:inequality}). In the second sum the $\Delta w_i$ are all non-negative (because $|c-i| = i-c  < i - (c-1) = |(c-1) - i|$ and row-wise concentrating implies $w_{i,c} \le w_{i,c-1}$).

If the first sum is also non-negative, then both $\alpha   \sum_{i=1}^{l} \Delta w_{i} \sigma_{i,l}$ and $ \beta \sum_{i=l+1}^{n-1} \Delta w_{i} \sigma_{i,l}$ are non-negative so their sum is non-negative.

On the other hand, if the first sum is negative, then because $\alpha < 1$, $\alpha   \sum_{i=1}^{l} \Delta w_{i} \sigma_{i,l} >  \sum_{i=1}^{l} \Delta w_{i} \sigma_{i,l}$. Because $\beta>1$, $\beta \sum_{i=l+1}^{n-1} \Delta w_{i} \sigma_{i,l} \ge  \sum_{i=l+1}^{n-1} \Delta w_{i} \sigma_{i,l}$. Thus $\alpha   \sum_{i=1}^{l} \Delta w_{i} \sigma_{i,l} + \beta \sum_{i=l+1}^{n-1} \Delta w_{i} \sigma_{i,l}  >   \sum_{i=1}^{l} \Delta w_{i} \sigma_{i,l} + \sum_{i=l+1}^{n-1} \Delta w_{i} \sigma_{i,l}  \ge 0$.  

In either case, we have 
$$(W_{c-1}- W_c)^\top \Sigma_{l+1} = \alpha \sum_{i=1}^{l} \Delta w_{i} \sigma_{i,l} + \beta \sum_{i=l+1}^{n-1} \Delta w_{i} \sigma_{i,l}  \ge  0$$
So once again, moving $\Sigma_{l+1}$ from column $c$ to column $c-1$ cannot decrease error, implying $c' \ge c$.
\end{proof}

As noted in the main text, we further restrict attention to weight matrices that are also \textit{row-wise convex}. The following shows that for such a weight matrix, the error corresponding to a single-peaked scale is convex in the index of the column in which it is placed.

\begin{lem}
\label{lem:weight_matrix_scale_convex}
    Let $W$ be a row-wise convex weight matrix. For any scale $s$, the inner product $(W_j)^\top s$ is convex in $j$.
\end{lem}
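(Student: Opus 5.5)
The plan is to show that the function $f(j) = (W_j)^\top s = \sum_{i=0}^{n-1} w_{i,j} s_i$ has non-decreasing first differences in $j$. Convexity of a function on the integers $\{0,\dots,n-1\}$ means exactly that $f(j+1) - f(j)$ is non-decreasing in $j$ for $j \in \{0,\dots,n-2\}$, equivalently $f(j+1) - 2f(j) + f(j-1) \ge 0$ for $1 \le j \le n-2$. So the first step is to write this first difference explicitly, using linearity of the inner product in the column:
$$
f(j+1) - f(j) = \sum_{i=0}^{n-1} \left(w_{i,j+1} - w_{i,j}\right) s_i .
$$

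The second step uses the two hypotheses together. By row-wise convexity, for each fixed row $i$ the increment $w_{i,j+1} - w_{i,j}$ is non-decreasing in $j$. Since $s$ is an $\epsilon$-scale, it is a probability vector, so every weight $s_i$ is non-negative. A non-negative combination of non-decreasing sequences is non-decreasing, hence $f(j+1)-f(j)$ is non-decreasing in $j$. Concretely, for $1 \le j \le n-2$,
$$
f(j+1) - 2f(j) + f(j-1) = \sum_{i=0}^{n-1} s_i\left[(w_{i,j+1}-w_{i,j}) - (w_{i,j}-w_{i,j-1})\right] \ge 0,
$$
because each bracket is non-negative and each $s_i \ge 0$.

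There is no real obstacle. The only point to flag is that the argument uses nothing about $s$ beyond $s \in \R^n_{\ge 0}$, so the lemma in fact holds for any non-negative vector, in particular for every $\epsilon$-scale and every single-peaked scale $\Sigma_l$. Consequently it also holds for $\omega_l (W_j)^\top \Sigma_l$, since $\omega_l > 0$.

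I would close with the consequence used by Algorithm \ref{alg:ghosh_scale}: a convex function on $\{0,\dots,n-1\}$ attains its global minimum at any local minimum. Thus, once the scan finds $next\_error > current\_error$, column $j$ minimizes $(W_{\bar c})^\top \Sigma_l$ over all $\bar c \ge j$. Combined with the monotonicity of optimal columns from Lemma \ref{lem:unfixed_left_scales_convexity}, this justifies the greedy left-to-right scan.
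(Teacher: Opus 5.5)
Your proof is correct and matches the paper's argument. Both write the first difference $(W_{j+1})^\top s - (W_j)^\top s = \sum_{i} (w_{i,j+1}-w_{i,j}) s_i$ and observe that it is a sum of terms non-decreasing in $j$; you also state explicitly that $s_i \ge 0$, which the paper's one-line proof uses only implicitly.
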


\begin{proof} 
    Note that $(W_{j+1})^\top s - (W_{j})^\top s = \sum_{i=0}^{n-1} (w_{i,j+1} - w_{i,j}) s_i $
    is the sum of terms that are non-decreasing in $j$, so the total is non-decreasing in $j$. Hence $(W_j)^\top s$ is convex in $j$.
\end{proof}

Our two-staged unfixed optimum constructor (Algorithm \ref{alg:ghosh_scale}) leverages the results of Lemma \ref{lem:unfixed_left_scales_convexity} and Theorem \ref{thm:ghosh_which_scales} above. It begins by considering the single-peaked scale at position $0$. To find the appropriate column for this scale, it begins at column $0$ and scans to the right. Lemma \ref{lem:weight_matrix_scale_convex} tells us that the associated cost is convex in the column, so once we find a local optimum, we know that it is a global optimum. In particular, this is true for the rightmost local optimum. The algorithm then moves on to the next single-peaked scale. By Lemma \ref{lem:unfixed_left_scales_convexity}, we know that the rightmost optimum position for this scale cannot be to the left of the previous one, so the algorithm does not need to return to column 0, but can continue scanning from the current column. The process repeats until all single-peaked scales are assigned to columns.

We are now ready to establish the runtime of Algorithm \ref{alg:ghosh_scale}.

\medskip

\noindent \textbf{Proof of Theorem \ref{thm:unfixed_runtime}.} For each iteration of the while loop, either $l$ increases by 1 or $j$ increases by 1. Both variables start at 0, and cannot be exceed $n-1$ at the start of a loop iteration -- in particular, once $l$ is assigned the value $n$, the algorithm will exit the while loop; if $j$ ever has the value of $n-1$ at the start of the loop, then $next\_error$ will be assigned the value of $\infty$, guaranteeing that the algorithm does not enter the else clause in which $j$ would be incremented. This guarantees that the while loop can execute no more than $2n$ times. 
    
Within the loop, there are two inner products that require $\mathcal{O}(n)$ to complete, and all other instructions are assumed atomic. Thus, the loop requires $\mathcal{O}(n^2)$ steps. Initializing the matrices and $\{\omega_l\}_{l=0}^{n-1}$ in Steps 1 and 2 also requires $\mathcal{O}(n^2)$. Hence, the total runtime is $\mathcal{O}(n^2)$. \qed

Lastly, we prove that Algorithm \ref{alg:ghosh_scale} outputs a count mechanism that minimizes count error.

\medskip

\noindent \textbf{Proof of Theorem \ref{thm:unfixed_optimal}.} First we show that $O \in \mathcal{T}$. To do so, we consider the theoretical $B$ associated with each step of our algorithm. In Step 1, initialize $k \times n$ matrix $B$ of all zeroes. In Step 11, set $B_{\hat i, j} \leftarrow B_{\hat i, j}  + \omega_l$ where $\hat i$ is the column of $\Psi$ equal to $\Sigma_l$. Then the algorithm maintains $A = \Psi B$ at every point in execution. Furthermore, $B$ is formed by placing each single-peaked scale into a unique column, so the multiset $\uplus_{l=0}^{n-1} G(B_j) = \{\Sigma_l\}_{l=0}^{n-1}$ . Thus $O \in \mathcal{T}$. 

Next we show that $O$ is optimal for $W$. Since $O \in \mathcal{T}$, we can write,
\begin{align*}
    \langle W,  O \rangle &=  \sum_{l=0}^{n-1} \omega_l (W
_{\text{col}_O(l)})^\top  \Sigma_l
\end{align*}
By Theorem \ref{thm:ghosh_which_scales}, any optimal $T$ is in $\mathcal{T}$, and so $\langle W,  T \rangle$ can be similarly written as  $\sum_{l=0}^{n-1} \omega_l (W_{\text{col}_T(l)})^\top  \Sigma_l$. Thus, it suffices to show that $(W_{\text{col}_O(l)})^\top  \Sigma_l \le (W_{\text{col}_T(l)})^\top  \Sigma_l$ for all $l$; that is, $\text{col}_O(l)$ is a value of $\bar c$ that minimizes $(W_{\bar c})^\top  \Sigma_l$.

Examining the while loop, variable $l$ records which single-peaked scale is being placed, and variable $j$ represents a potential column to place the current single-peaked scale. $l$ is initialized to zero and does not change until after scale $\Sigma_0$ is placed into a column in Step 11. While $l$ is zero, $j$ begins at $0$ and increments 1 at a time, scanning columns from left to right. At Step 4 of any loop iteration, $current\_error$ is assigned the value $(W_{j})^\top  \Sigma_0$. At Step 6, $next\_error$ is assigned the value $(W_{j+1})^\top  \Sigma_0$ (or $\infty$ if $j$ has reached the last column). The condition $next\_error > current\_error$ indicates that incrementing $j$ any further would increase $(W_{j})^\top  \Sigma_0$ (or that we have reached the last possible column). This implies that at this point in execution, $j$ is the right-most value of $\bar c$ that minimizes $(W_{\bar c})^\top \Sigma_0$. 

The argument for single-peaked scales at position $l>0$ is the same, except that the scan begins with $j$ equal to the column in which scale $\Sigma_{l-1}$ was placed. The scan will still locate the largest value of $\bar c$ that minimizes $(W_{\bar c})^\top \Sigma_l$ because Lemma \ref{lem:unfixed_left_scales_convexity} guarantees that this value cannot be less than that corresponding to the previous scale. \qed

\section{Development of Rule of Thumb for Privacy Budget Allocation in Section \ref{experiments}}
\label{app_additional_figures}

\begin{table*}[ht]
\centering
\begin{tabular}{|c|c|c|c|c|c|c|}
\hline
$\epsilon_t$ & Uniform & Left-skewed & Right-skewed & Bimodal symmetric & Zero-inflated & Top-inflated \\ \hline
0.10 & 0.8 & 0.4 & 0.5 & 0.7 & 0.3 & 0.4 \\ \hline
0.15 & 0.8 & 0.3 & 0.4 & 0.8 & 0.3 & 0.3 \\ \hline
0.22 & 0.4 & 0.3 & 0.3 & 0.5 & 0.3 & 0.3 \\ \hline
0.32 & 0.3 & 0.3 & 0.2 & 0.4 & 0.2 & 0.3 \\ \hline
0.48 & 0.3 & 0.2 & 0.2 & 0.3 & 0.2 & 0.3 \\ \hline
0.71 & 0.2 & 0.2 & 0.2 & 0.3 & 0.2 & 0.2 \\ \hline
1.05 & 0.2 & 0.1 & 0.2 & 0.1 & 0.1 & 0.2 \\ \hline
1.55 & 0.1 & 0.1 & 0.1 & 0.1 & 0.1 & 0.1 \\ \hline
2.29 & 0.1 & 0.1 & 0.1 & 0.1 & 0.1 & 0.1 \\ \hline
3.38 & 0.1 & 0.1 & 0.1 & 0.1 & 0.1 & 0.1 \\ \hline
5.00  & 0.1 & 0.1 & 0.1 & 0.1 & 0.1 & 0.1 \\ \hline
\end{tabular}
\vspace{1mm}
\caption{Optimal values of $f$ for each of our synthetic datasets.}
\label{tab:optimal_f_experiment}
\end{table*}

\begin{figure*}[h]
    \centering
    \includegraphics[width=\linewidth]{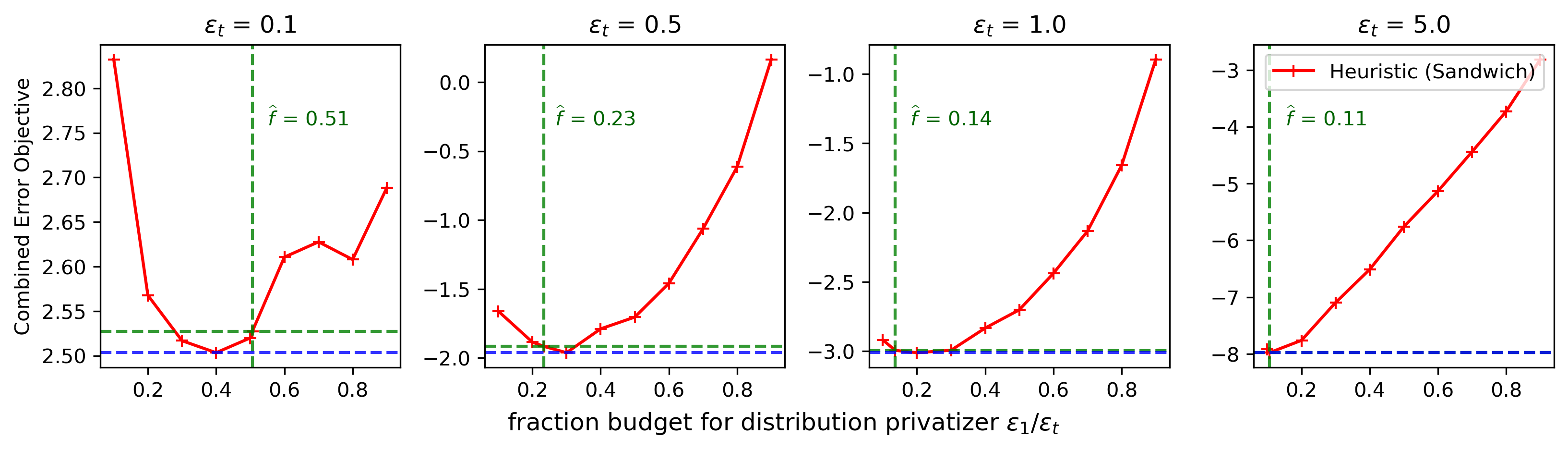}
    \vspace*{0.75mm}
    \caption{Combined error objective as a function of $f$ for the crime dataset. The vertical green line displays the value of $\hat f$ selected by our rule of thumb. The horizontal green line depicts that value of the objective resulting from the rule of thumb. Finally, the horizontal blue line depicts the minimal objective value measured across all parameter settings.} 
    \label{fig:epsilon_split}
\end{figure*}

In the two-stage fixed-point framework, the total privacy budget $\epsilon_t$ must be divided between $\epsilon_1$ for the distribution privatizer, and $\epsilon_2$ for the privatization of the table of counts. We let $f = \epsilon_1/\epsilon_t$. As noted in Section \ref{experiments}, to set $f$, we consider a combined error objective that combines accuracy of counts and accuracy of distribution.
$$
\ln(\text{count\_error}(f)) + \ln(\text{distribution\_error}(f))
$$
Let $f_\star$ be a value of $f$ that minimizes the above objective. As noted in the main text, deviating from $f_\star$ cannot decrease either type of error without also causing an equal or great increase in the other type of error (in a multiplicative sense). To see why, consider an $f'$ that decreases count error by a factor of $a$, and increases distribution error by a factor of $b$.  By optimality, 
\begin{align*}
    &\ln(\text{count\_error}(f_\star)) + \ln(\text{distribution\_error}(f_\star)) \\
    & \le \ln(\text{count\_error}(f')) + \ln(\text{distribution\_error}(f'))
\end{align*}
The right-hand side is 
\begin{align*}
    & \ln(\text{$a^{-1} \cdot$ count\_error}(f_\star)) + \ln(\text{$b \cdot$ distribution\_error}(f_\star)) \\
    &= \ln(\text{count\_error}(f_\star)) + \ln(\text{distribution\_error}(f_\star)) - a + b
\end{align*}
Canceling terms, we deduce $b \ge a$, as claimed.

One might hope to use $f_\star$ to split the total privacy budget in practice, but it turns out that $f_\star$ is data-dependent. To approximate $f_\star$, one could consider privately testing different values of $f$ on the real data -- however, this would require considerable privacy budget. While it may be possible to improve existing algorithms to select $f_\star$ with less privacy cost (e.g., adapting methods from differentially private budget selection \cite{hod2025differentially}), we defer this possibility to future research. For the current work, we propose an alternate approach: a \textit{rule of thumb} that practitioners can use to set $f$ that does not depend on the data.

At a high level, our approach to attain a rule of thumb proceeds as follows. We begin by constructing six new synthetic datasets, chosen to have a wide range of distributional features. We then perform simulations for each dataset to determine the optimal $f$ for a range of total privacy budgets. We fit a parametric model to this data, resulting in a function that provides a recommended $f$ as a function of $\epsilon_t$. Finally, we assess the performance of our rule of thumb by applying it to the three datasets used in the main text. 

Our rule of thumb is fitted using six new synthetic datasets. One reason we created new data for this exercise is to capture a wide range of distributional features that may be observed in real-world applications. A second reason is that we would like our three canonical datasets to serve as a ``holdout'' to assess the performance of our rule of thumb. This mitigates the overfitting that would arise from training a model and evaluating it on the same data. The datasets we selected are as follows.
\begin{itemize}
    \item \textbf{Uniform:} 3,000 draws from a discrete uniform distribution over the set $\{0,1,...,29\}$.
    \item \textbf{Left-skewed:} 10,000 draws from the discrete distribution over the set $\{0,1,...,69\}$ with mass function $p$ such that $p(i)= \frac{139}{140}p(i+1)$ for $0 \le i \le 68$.
    \item \textbf{Right-skewed:} 10,000 draws from the discrete distribution over the set $\{0,1,...,69\}$ with mass function $p$ such that $p(i+1)= \frac{139}{140}p(i)$ for $0 \le i \le 68$.
    \item \textbf{Bimodal symmetric:} 10,000 draws from a binomial distribution over the set $\{0,1,...,39\}$ with parameter $0.7$ and 10,000 draws from a binomial distribution over the same set with parameter $0.3$.
    \item \textbf{Zero-inflated:} 9,800 draws from a discrete uniform distribution over the set $\{0,1,...,79\}$ and 200 additional datapoints set to 0.
    \item \textbf{Top-inflated:} 9,900 draws from a discrete uniform distribution over the set $\{0,1,...,79\}$ and 100 additional datapoints set to 79.
\end{itemize}

For each synthetic dataset above, we perform a set of simulations to assess how $f$ affects the combined error objective. We operationalize count error using expected absolute deviation, and operationalize distribution error using Wasserstein distance. For the distribution privatizer we use the cyclic Laplace mechanism, and for the constructor algorithm we use Algorithm \ref{alg:heuristic} with the sandwich selector. The total privacy budget $\epsilon_t$ is varied from $0.1$ to $5.0$, taking on 11 values that are evenly spaced on a logarithmic scale. We test every choice for $f$ in the set $\left\{\frac{1}{10}, \frac{2}{10}, ..., \frac{9}{10}\right\}$. For every combination of the above parameters, we perform 100 simulations of our two-stage framework, and record the value of $f$ with the lowest combined error.

The results of our simulation study are presented in Table \ref{tab:optimal_f_experiment}. We observe an inverse relationship between $\epsilon_t$ and the optimal $f$ -- generally, as $\epsilon_t$ decreases, the optimal $f$ increases (or remains constant). As a robustness check, we perform additional simulations using MSE for count error, as well as total deviation and KS distance for distribution error. A similar inverse relationship between $\epsilon_t$ and the optimal $f$ is observed in all cases.

Based on the use case for our rule of thumb, we select a functional form with a small number of parameters, which outputs a recommended $f$  between 0 and 1. The form we select is
$$
\hat f(\epsilon_t) = B + (A-B) \exp(-K \epsilon_t)
$$
Here, $A \in [0,1]$ is the limit as $\epsilon_t \to 0^+$, $B \in [0,1]$ is the limit as $\epsilon_t \to \infty$, and $K$ is a parameter that controls the curvature. We fit these parameters to the data in Table \ref{tab:optimal_f_experiment}, yielding values $A = 0.639$, $B = 0.106$, and $K = 2.87$. This results in our rule of thumb, 
$$
\hat f(\epsilon_t)  = 0.106 + 0.533\exp{(-2.87 \epsilon_t)}
$$
We evaluate the performance of this rule of thumb for our three canonical datasets. For each dataset, we compute the combined error objective using the rule of thumb $\hat f$ and also the standard set of values $f \in \left\{\frac{1}{10}, \frac{2}{10}, ..., \frac{9}{10}\right\}$. This can be seen for the crime dataset in Figure \ref{fig:epsilon_split}. The corresponding plots for the binomial and schools datasets are similar and we omit them.  In Figure \ref{fig:epsilon_split}, the vertical green line depicts the value of $f$ chosen by the rule of thumb.  The value of the corresponding combined error objective is depicted by the horizontal green line. The horizontal blue line displays the best value of the objective across all values of $f$ we simulated. Thus, the distance between the two horizontal lines can serve as an approximate measure of the optimality gap induced by our rule of thumb. Since the combined error objective is on a natural log scale, as long as the optimality gap is small, it can be roughly interpreted as a percent change in accuracy.

As observed for the crime data in Figure \ref{fig:epsilon_split}, the optimality gap tends to decrease as we increase $\epsilon_t$.  Scanning the panels from left to right, as $\epsilon_t$ increases from $0.1$ to $0.5$, $1.0$, and $5.0$, the corresponding optimality gap changes from $0.024$ to $0.046$, $0.017$, and $0$, respectively. 
Note that a measured optimality gap of 0 is possible anytime the objective at $\hat f$ matches the minimum objective over all values of $f$ we test. Similar decreasing patterns are also present for the binomial and schools datasets. The worst optimality gap we observe across all experiments is for the binomial data at $\epsilon_t = 0.1$, with a value of $0.42$.

We use this rule of thumb throughout the experiments in Section \ref{experiments}. As we observe in that section, despite the optimality gaps that we observe in this appendix, our fixed-point techniques using the rule of thumb often provide a favorable tradeoff among our three main performance criteria: Large increases in accuracy of distribution can be attained with modest performance losses in both accuracy of counts and runtime.

%%%%%%%%%%%%%%%%%%%%%%%%%%%%%%%%%%

\end{document}